\def\full{1}
\def\showauthornotes{0}
\def\showtableofcontents{1}
\def\showkeys{0}
\def\showdraftbox{0}
\def\showcolorlinks{1}
\def\usemicrotype{1}
\def\showfixme{0}
\newtheorem{theorem}{Theorem}[section]
\newtheorem*{theorem*}{Theorem}
\newtheorem{claim}[theorem]{Claim}
\newtheorem*{claim*}{Claim}
\newtheorem*{proposition*}{Proposition}
\newtheorem{lemma}[theorem]{Lemma}
\newtheorem*{lemma*}{Lemma}
\newtheorem*{conjecture*}{Conjecture}
\newtheorem*{fact*}{Fact}
\newtheorem*{hypothesis*}{Hypothesis}
\theoremstyle{definition}
\newtheorem{definition}[theorem]{Definition}
\newtheorem{algorithm}[theorem]{Algorithm}
\newtheorem{remark}[theorem]{Remark}
\renewcommand{\mathbb}{\varmathbb}
\newcommand{\savehyperref}[2]{\texorpdfstring{\hyperref[#1]{#2}}{#2}}
\newcommand{\Sref}[1]{\hyperref[#1]{\S\ref*{#1}}}
\let\nfrac=\nicefrac
\newcommand{\half}{\nicefrac12}
\newcommand{\Authornote}[2]{{\sffamily\small\color{red}{[#1: #2]}}}
\newcommand{\Authorcomment}[2]{{\sffamily\small\color{gray}{[#1: #2]}}}
\newcommand{\Authorstartcomment}[1]{\sffamily\small\color{gray}[#1: }
\newcommand{\Authorfnote}[2]{\footnote{\color{red}{#1: #2}}}
\newcommand{\Authorfixme}[1]{\Authornote{#1}{\textbf{??}}}
\newcommand{\Authormarginmark}[1]{\marginpar{\textcolor{red}{\fbox{\Large #1:!}}}}
\newcommand{\Authornote}[2]{}
\newcommand{\Authorcomment}[2]{}
\newcommand{\Authorstartcomment}[1]{}
\newcommand{\Authorfnote}[2]{}
\newcommand{\Authorfixme}[1]{}
\newcommand{\Authormarginmark}[1]{}
\newenvironment{mybox}
{\center \noindent\begin{boxedminipage}{1.0\linewidth}}
{\end{boxedminipage}
\noindent
}
\newcommand{\paren}[1]{(#1)}
\newcommand{\Paren}[1]{\left(#1\right)}
\newcommand{\Bigparen}[1]{\Big(#1\Big)}
\newcommand{\brac}[1]{[#1]}
\newcommand{\Brac}[1]{\left[#1\right]}
\newcommand{\Bigbrac}[1]{\Big[#1\Big]}
\newcommand{\abs}[1]{\lvert#1\rvert}
\newcommand{\bigabs}[1]{\big\lvert#1\big\rvert}
\newcommand{\card}[1]{\lvert#1\rvert}
\newcommand{\set}[1]{\{#1\}}
\newcommand{\Set}[1]{\left\{#1\right\}}
\newcommand{\norm}[1]{\lVert#1\rVert}
\newcommand{\Norm}[1]{\left\lVert#1\right\rVert}
\newcommand{\Bignorm}[1]{\Big\lVert#1\Big\rVert}
\newcommand{\normt}[1]{\norm{#1}_2}
\newcommand{\snorm}[1]{\norm{#1}^2}
\newcommand{\Snorm}[1]{\Norm{#1}^2}
\newcommand{\normo}[1]{\norm{#1}_1}
\newcommand{\Normo}[1]{\Norm{#1}_1}
\newcommand{\Bignormo}[1]{\Bignorm{#1}_1}
\newcommand{\iprod}[1]{\langle#1\rangle}
\newcommand{\Esymb}{\mathbb{E}}
\newcommand{\Psymb}{\mathbb{P}}
\newcommand{\Vsymb}{\mathbb{V}}
\DeclareMathOperator*{\E}{\Esymb}
\DeclareMathOperator*{\Var}{\Vsymb}
\DeclareMathOperator*{\ProbOp}{\Psymb}
\renewcommand{\Pr}{\ProbOp}
\newcommand{\Prob}[2][]{\Pr_{{#1}}\Set{#2}}
\newcommand{\tensor}{\otimes}
\newcommand{\textparen}[1]{\text{(#1)}}
\newcommand{\using}[1]{\textparen{using #1}}
\newcommand{\because}[1]{\textparen{because #1}}
\renewcommand{\because}[1]{\textparen{because #1}}
\renewcommand{\ij}{{ij}}
\newcommand{\bits}{\{0,1\}}
\newcommand{\super}[2]{#1^{\paren{#2}}}
\newcommand{\sm}{\setminus}
\newcommand{\defeq}{\stackrel{\mathrm{def}}=}
\newcommand{\seteq}{\mathrel{\mathop:}=}
\newcommand{\from}{\colon}
\newcommand{\Mid}{\;\middle\vert\;}
\renewcommand{\vec}[1]{{\bm{#1}}}
\newcommand{\mper}{\,.}
\newcommand{\mcom}{\,,}
\newcommand\bdot\bullet
\DeclareMathOperator{\Tr}{Tr}
\DeclareMathOperator{\val}{val}
\DeclareMathOperator{\poly}{poly}
\DeclareMathOperator{\tsum}{{\textstyle \sum}}
\newcommand{\iid}{i.i.d.\xspace}
\newcommand{\Lovasz}{Lov\'asz\xspace}
\newcommand{\R}{\mathbb R}
\newcommand{\problemmacro}[1]{\texorpdfstring{\textsc{#1}}{#1}\xspace}
\newcommand{\uniquegames}{\problemmacro{Unique Games}}
\newcommand{\maxcut}{\problemmacro{Max Cut}}
\newcommand{\labelcover}{\problemmacro{Label Cover}}
\newcommand{\sparsestcut}{\problemmacro{Sparsest Cut}}
\newcommand{\smallsetexpansion}{\problemmacro{Small-Set Expansion}}
\newcommand{\cV}{\mathcal V}
\renewcommand{\leq}{\leqslant}
\renewcommand{\le}{\leqslant}
\renewcommand{\geq}{\geqslant}
\renewcommand{\ge}{\geqslant}
\newcommand{\draftbox}{\begin{center}
  \fbox{%
    \begin{minipage}{2in}%
      \begin{center}%
          \Large\textsc{Working Draft}\\%
        Please do not distribute%
      \end{center}%
    \end{minipage}%
  }%
\end{center}
\vspace{0.2cm}}
\newcommand{\draftbox}{}
\let\epsilon=\varepsilon
\numberwithin{equation}{section}
\newcommand{\MYstore}[2]{%
  \global\expandafter \def \csname MYMEMORY #1 \endcsname{#2}%
}
\newcommand{\MYload}[1]{%
  \csname MYMEMORY #1 \endcsname%
}
\newcommand{\MYnewlabel}[1]{%
  \newcommand\MYcurrentlabel{#1}%
  \MYoldlabel{#1}%
}
\newcommand{\MYdummylabel}[1]{}
\newcommand{\torestate}[1]{%
  \let\MYoldlabel\label%
  \let\label\MYnewlabel%
  #1%
  \MYstore{\MYcurrentlabel}{#1}%
  \let\label\MYoldlabel%
}
\newcommand{\restatetheorem}[1]{%
  \let\MYoldlabel\label
  \let\label\MYdummylabel
  \begin{theorem*}[Restatement of \prettyref{#1}]
    \MYload{#1}
  \end{theorem*}
  \let\label\MYoldlabel
}
\newcommand{\restatelemma}[1]{%
  \let\MYoldlabel\label
  \let\label\MYdummylabel
  \begin{lemma*}[Restatement of \prettyref{#1}]
    \MYload{#1}
  \end{lemma*}
  \let\label\MYoldlabel
}
\newcommand{\restateprop}[1]{%
  \let\MYoldlabel\label
  \let\label\MYdummylabel
  \begin{proposition*}[Restatement of \prettyref{#1}]
    \MYload{#1}
  \end{proposition*}
  \let\label\MYoldlabel
}
\newcommand{\restatefact}[1]{%
  \let\MYoldlabel\label
  \let\label\MYdummylabel
  \begin{fact*}[Restatement of \prettyref{#1}]
    \MYload{#1}
  \end{fact*}
  \let\label\MYoldlabel
}
\newcommand{\restate}[1]{%
  \let\MYoldlabel\label
  \let\label\MYdummylabel
  \MYload{#1}
  \let\label\MYoldlabel
}
\newcommand{\addreferencesection}{
  \phantomsection
  \addcontentsline{toc}{section}{References}
}
\newcommand{\sse}{\subseteq}
\newcommand{\e}{\epsilon}
\newcommand{\eps}{\epsilon}
\newcommand{\eset}{\emptyset}
\let\origparagraph\paragraph
\renewcommand{\paragraph}[1]{\origparagraph{#1.}}
\newcommand{\cclassmacro}[1]{\texorpdfstring{\textbf{#1}}{#1}\xspace}
\newcommand{\p}{\cclassmacro{P}}
\newcommand{\np}{\cclassmacro{NP}}
\newcommand{\beq}{\begin{displaymath}}
\newcommand{\eeq}{\end{displaymath}}
\newcommand{\beq}{\begin{math}}
\newcommand{\eeq}{\end{math}}
\newcommand{\nnspace}{}
\newcommand{\nnspace}{\vspace{-1ex}}
\let\pref=\prettyref
\newcommand{\rank}{\mathrm{rank}}
\newcommand{\Ins}{\Im}
\newcommand{\ia}{{ia}}
\newcommand{\jb}{{jb}}
\renewcommand{\Vsymb}{\mathrm{Var}}
\DeclareMathOperator*{\Cov}{Cov}
\newcommand{\maxtwocsp}{\problemmacro{Max 2-Csp}}
\title{%
  Rounding Semidefinite Programming Hierarchies via Global Correlation
}
\author{Boaz Barak\thanks{Microsoft Research New England, Cambridge MA.}
  \and Prasad Raghavendra\thanks{Georgia Institute of Technology, Atlanta
    GA.}
\and David Steurer\thanks{Microsoft Research New England, Cambridge MA.}}
\newcommand{\cp}{\mathsf{CP}}
\begin{document}

\maketitle
\draftbox
\thispagestyle{empty}

\begin{abstract}
We show a new way to round vector solutions of semidefinite programming
(SDP) hierarchies into integral solutions, based on a connection between
these hierarchies and the spectrum of the input graph. We demonstrate the
utility of our method by providing a new SDP-hierarchy based algorithm
for constraint satisfaction problems with 2-variable constraints (2-CSP's).

More concretely, we show for every $2$-CSP instance $\Ins$ a rounding algorithm
for $r$ rounds of the Lasserre SDP hierarchy for $\Ins$ that
obtains an integral solution that is at most $\e$ worse than the relaxation's value (normalized to lie in $[0,1]$), as
long as
\[
  r > k\cdot\rank_{\geq \theta}(\Ins)/\poly(\e) \;,
\]
where $k$ is the alphabet size of $\Ins$, $\theta=\poly(\e/k)$,
and $\rank_{\geq \theta}(\Ins)$ denotes the number of eigenvalues
larger than $\theta$ in the normalized adjacency matrix of the constraint graph of $\Ins$.

In the case that $\Ins$ is a \uniquegames instance, the threshold $\theta$ is
only a polynomial in $\e$, and is independent of the alphabet size. Also in
this case, we can give a non-trivial bound on the number of rounds for
\emph{every} instance. In particular our result yields an SDP-hierarchy based
algorithm that matches the performance of the recent subexponential
algorithm of Arora, Barak and Steurer (FOCS 2010) in the worst case, but runs faster on
a natural family of instances,
thus further restricting the set of possible hard instances for Khot's Unique Games Conjecture.

Our algorithm actually requires less than the $n^{O(r)}$ constraints
specified by the $r^{th}$ level of the Lasserre hierarchy, and in
some cases $r$ rounds of our program can be evaluated in time
$2^{O(r)}\poly(n)$.
\end{abstract}

\clearpage

\ifnum\showtableofcontents=1
{
\tableofcontents
\thispagestyle{empty}
 }
\fi

\clearpage
\setcounter{page}{1}

\newcommand{\rvar}{random variable\xspace}
\newcommand{\rvars}{random variables\xspace}

\newcommand{\rval}{real-valued\xspace}

\section{Introduction}

This paper is concerned with hierarchies of semi-definite programs (SDP's). Semidefinite programs are an extremely useful tool in
algorithms and in particular approximation algorithms~(e.g., \cite{GoemansW95,KargerMS98}). SDP's involve finding an integral
(say $0/1$) solution for some optimization problem, by using convex programming to find a fractional/high-dimensional solution
and then \emph{rounding} it into an integral solution. Sherali and Adams~\cite{SheraliA90}, \Lovasz and
Schrijver~\cite{LovaszS91}, and, later Lasserre~\cite{Lasserre01},  proposed systematic ways, known as \emph{hierarchies}, to
make this convex relaxation tighter, thus ensuring that the fractional solution is closer to an integral one. These hierarchies
are parameterized by a number $r$, called the \emph{level} or \emph{number of rounds} of the hierarchy. Given a program on $n$
variables, optimizing over the $r^{th}$ level of the hierarchy can be done in time $n^{O(r)}$. The gap between integral and
fractional solutions decreases with $r$, and reaches zero at the $n^{th}$ level, where the program is guaranteed to find an
optimal integral solution. The paper \cite{Laurent03} surveys and compares the different hierarchies proposed in the literature,
see also the recent survey \cite{ChlamtacT10}.

These semidefinite programming hierarchies have been of some interest in recent years, since they provide natural candidate algorithms for many computational problems. In particular, whenever the basic semidefinite or linear program provides a suboptimal approximation factor, it makes sense to ask how many rounds of the hierarchy are required to significantly improve upon this factor. Unfortunately, taking advantage of these hierarchies has often been difficult, and while some algorithms (e.g.,~\cite{AroraRV09}) can be encapsulated in, say, level $3$ or $4$ of some hierarchies, there have been relatively few results (e.g. ~\cite{Chlamtac07,BhaskaraCCFV10}) that use higher levels to obtain new algorithmic results. In fact, there has been more success in showing that high levels of hierarchies \emph{do not} help for many computational problems~\cite{AroraBLT06,SchoenebeckTT07a,GeorgiouMPT07,RaghavendraS09c,KhotS09}. In particular for 3SAT and several other NP-hard problems, it is known that it takes $\Omega(n)$ rounds of the strongest SDP hierarchy (i.e., Lasserre) to improve upon the approximation rate achieved by the basic SDP (or sometimes even simpler algorithms)~\cite{Schoenebeck08,Tulsiani09}.

Semidefinite hierarchies are of particular interest in the case of problems related to Khot's \emph{Unique Games Conjecture} (UGC)~\cite{Khot02a}. Several works have shown that for a wide variety of problems, the UGC implies that (unless $\p=\np$) the basic semidefinite program cannot be improved upon by any polynomial-time algorithm~\cite{KhotKMO04,MosselOO05,Raghavendra08}. Thus in particular the UGC predicts that for all these problems, it will take a super-constant (and in fact polynomial, under widely believed assumptions) number of hierarchy rounds to improve upon the basic SDP. Investigating this prediction, particularly for the \uniquegames problem itself and other related problems such as \maxcut, \sparsestcut and \smallsetexpansion, has been the focus of several works, and it is known that at least $(\log\log n)^{\Omega(1)}$ rounds are required for a non-trivial approximation~\cite{RaghavendraS09c,KhotS09} by a natural (though not strongest possible) SDP hierarchy. However, no non-trivial upper bound was known prior to the current work, and so it was conceivable that these lower bounds can be improved to $\Omega(n)$.

Recently, Arora, Barak and Steurer~\cite{AroraBS10} gave a $2^{n^{\poly(\e)}}$-time algorithm for solving the \uniquegames and \smallsetexpansion problems (where $\e$ is the completeness parameter, see below). However, their algorithm did not use semidefinite programming hierarchies, and so does not immediately imply an upper bound on the number of rounds needed.

\subsection{Our results}

Our main contribution is a new method to analyze and round SDP hierarchies. We elaborate more on our method in Section~\ref{sec:technique}, but its high level description is that uses \emph{global correlations} inside the high-dimensional SDP solution, combined with the hierarchy constraints, to obtain a better rounding of this solution into an integral one. We believe this method can be of general utility, and in particular we use it here to give new algorithms for approximating constraint satisfaction problem on two-variable constraints (2-CSP's), that run faster than the previously known algorithms for a natural family of instances. To state our results we need the notion of a \emph{threshold rank}.

\paragraph{Threshold rank of graphs and 2-CSPs}  The \emph{$\tau$-threshold rank} of a
regular graph $G$, denoted  $\rank_{\geq \tau}(G)$, is the number of eigenvalues of the
normalized adjacency matrix of $G$ that are larger than $\tau$.\footnote{In this paper we only consider regular undirected graphs, although we allow non-negative weights and/or parallel edges. Every such graph can be identified with its normalized adjacency matrix, whose $(i,j)^{th}$ entry is proportional to the weight of the edge $(i,j)$, with all row and column sums equalling one. Similarly, we restrict our attentions to 2-CSP's whose constraint graphs are regular.  However, our definitions and results can be appropriately generalized for non-regular graphs and 2-CSPs as well.}   An instance $\Ins$ of a \maxtwocsp problem consists of a  regular  graph $G_{\Ins}$, known as the \emph{constraint graph} of $\Ins$ over a vertex set $[n]$, where every edge $(i,j)$ in the graph is labeled with a relation $\Pi_{i,j} \subseteq [k] \times [k]$ ($k$ is known as the \emph{alphabet size} of $\Ins$). The \emph{value} of an assignment $x \in [k]^n$ to the variables of $\Ins$, denoted $\val_{\Ins}(x)$, is equal to the probability that $(x_i,x_j) \in \Pi_{i,j}$ where $(i,j)$ is a random edge in $G_{\Ins}$.  The \emph{objective value} of $\Ins$ is the maximum $\val_{\Ins}(x)$ over all assignments $x$. We say that $\Ins$ is \emph{$c$-satisfiable} if $\Ins$'s objective value is at least $c$. We define $\rank_{\geq \tau}(\Ins) = \rank_{\ge \tau}(G_{\Ins})$. Our main result is the following:

\begin{theorem} \label{thm:main}
There is a constant $c$ such that for every $\e > 0$, and every \maxtwocsp instance $\Ins$ with objective value
  $v$, alphabet size $k$ the following holds:
the objective value $\mathrm{sdpopt}(\Ins)$ of the $r$-round Lasserre hierarchy for $ r \geq k  \cdot \rank_{\geq \tau}(\Ins) / \e^c$ is within $\epsilon$ of the objective value $v$ of $\Ins$, i.e.,
$  \mathrm{sdpopt}(\Ins) \leq v +  \eps.$
Moreover, there exists a polynomial time rounding scheme that finds an assignment $x$ satisfying $\val_{\Ins}(x) > v - \e$ given optimal SDP solution as input.
\end{theorem}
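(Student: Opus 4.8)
The plan is to analyze a single rounding scheme, \emph{conditioned independent rounding}, and to show that its expected loss is $O(\e)$. That is all that is needed: the $r$-round Lasserre value always upper-bounds the true objective value $v$, so an assignment $x$ with $\mathrm{sdpopt}(\Ins)-\val_\Ins(x)\le\e$ simultaneously certifies $\mathrm{sdpopt}(\Ins)\le v+\e$ and $\val_\Ins(x)\ge v-\e$. Recall that an $r$-round Lasserre solution can be read as a \emph{pseudo-distribution} $\mu$ over assignments in $[k]^n$: each marginal $\mu_T$ with $|T|\le r$ is an honest distribution, these are mutually consistent, and for $|S|\le r-2$ and any partial assignment $\alpha$ on $S$ the conditional object $\mu^{(S,\alpha)}:=\mu\mid(X_S=\alpha)$ is an $(r-|S|)$-round Lasserre solution. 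Since each constraint $\Pi_{i,j}$ involves only two variables, $\mathrm{sdpopt}(\Ins)=\E_{(i,j)\sim G_\Ins}[\Pr_\mu[(X_i,X_j)\in\Pi_{i,j}]]$ is an honest average over genuine pair-distributions, and the same holds after conditioning. The scheme draws a seed set $S$ with $|S|=t$ from a distribution specified below, samples $\alpha\sim\mu_S$, and then \emph{independently} for each coordinate $i$ samples $x_i$ from $\mu^{(S,\alpha)}_i$. By the tower property and locality of the constraints the expected loss is at most $\E_{(i,j)\sim G_\Ins}\E_{S,\alpha}\normo{\mu^{(S,\alpha)}_{ij}-\mu^{(S,\alpha)}_i\otimes\mu^{(S,\alpha)}_j}$. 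Writing $Z^{(S,\alpha)}$ for the $nk\times nk$ covariance matrix with entries $(Z^{(S,\alpha)})_{(i,a),(j,b)}=\Cov_{\mu^{(S,\alpha)}}(\Ind[X_i=a],\Ind[X_j=b])$ --- a positive semidefinite matrix --- this $\ell_1$ distance is exactly
\[
 \normo{\mu^{(S,\alpha)}_{ij}-\mu^{(S,\alpha)}_i\otimes\mu^{(S,\alpha)}_j}\;=\;\sum_{a,b\in[k]}\bigabs{\,(Z^{(S,\alpha)})_{(i,a),(j,b)}\,}\,,
\]
the entrywise $\ell_1$-mass of the $(i,j)$ block of $Z^{(S,\alpha)}$. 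So the theorem reduces to: for a suitable distribution over seeds with $t\le r$, the block $\ell_1$-mass of $Z^{(S,\alpha)}$ between the two endpoints of a random constraint edge is, in expectation over $S,\alpha$, at most $O(\e)$.

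To prove this I would split along the spectrum of the normalized adjacency matrix $A$ of $G_\Ins$. Write $A=A_{\ge\tau}+A_{<\tau}$, the part supported on eigenvalues $\ge\tau$ (rank $d:=\rank_{\ge\tau}(\Ins)$) plus the rest; after replacing $A$ by the lazy walk $\tfrac12(I+A)$ we may assume $\norm{A_{<\tau}}\le\tau$. Bounding the $\ell_1$-mass of a $k\times k$ block by $k$ times its Frobenius norm and applying Cauchy--Schwarz, it suffices to bound $\E_{S,\alpha}\E_{(i,j)\sim G_\Ins}\norm{Z^{(S,\alpha)}_{ij}}_F^2$, where $Z^{(S,\alpha)}_{ij}$ is the $(i,j)$ block; this equals $\tfrac1n\iprod{A\otimes\mathbf 1_k\mathbf 1_k^{\top},\,Z^{(S,\alpha)}\circ Z^{(S,\alpha)}}$, where $Z\circ Z$ is again positive semidefinite by the Schur product theorem with $\Tr(Z\circ Z)\le n$. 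On the eigenvectors of $A$ with eigenvalue $\le\tau$ this inner product contributes $O(k\tau n)$, negligible once $\tau=\poly(\e/k)$. What remains is the contribution of the top $d$-dimensional eigenspace $V$ of $A$, and here the plan is a \emph{global correlation decay} argument: by the law of total variance, conditioning on one further coordinate only decreases the potential $\Phi(S):=\E_\alpha\sum_{i,a}\Var_{\mu^{(S,\alpha)}}(\Ind[X_i=a])$, which starts below $n$; and if the seed coordinates are drawn from a distribution that ``sees'' $V$ (built from the top eigenvectors of $A$ together with uniformly random coordinates), then each step decreases $\Phi$ by an amount that dominates the residual $V$-part of the above inner product divided by $n$. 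Telescoping over $t$ conditionings drives that residual down to $O(\poly(k)\cdot d/t)$, so taking $t=\poly(k/\e)\cdot d$ completes the estimate; by the hypothesis $r\ge k\cdot\rank_{\ge\tau}(\Ins)/\e^c$ we get $t\le r$, in particular $r\ge t+2$, so every conditioning above is legal in the $r$-round hierarchy.

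The main obstacle is precisely this last step --- the simultaneous decay of correlation along all of $V$. The classical chain-rule/total-variance argument controls only the \emph{all-pairs} average $\E_{i,j\sim[n]}[\text{covariance}]$, i.e.\ the mass along the single all-ones eigenvector of $A$; but a graph of low threshold rank can pile all its edges onto the other $d-1$ large eigen-directions, about which that average says nothing. Finding a potential and a seed distribution whose \emph{per-step} decrease dominates the $V$-restricted quadratic form of $Z\circ Z$ aggregated against $A_{\ge\tau}$ --- reconciling the Hadamard square, the $k\times k$ blocks, and the fact that the eigenvectors spanning $V$ take both signs --- is the technical heart, and is where the exact constant $c$ and the threshold $\theta=\poly(\e/k)$ get pinned down. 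Everything else (the pseudo-distribution calculus, the reduction to covariances, the operator-norm tail bound, and the final choice of $\tau$ and $t$) is routine. Finally, since the rounding only \emph{samples} from $\mu_S$ and from conditional marginals --- it never enumerates $[k]^t$ --- and the seed distribution is efficiently sampleable from the eigenvectors of $G_\Ins$, a single run takes $\poly(n)$ time given the SDP solution and succeeds with constant probability, so repetition yields the claimed polynomial-time rounding.
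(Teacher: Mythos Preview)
Your framework matches the paper's: the same conditioned-independent rounding (the paper calls it \emph{Propagation Sampling}), the same reduction of the loss to the block-$\ell_1$ mass of the conditional covariance matrix across a random edge, and the same total-variance potential. The gap is exactly where you locate it, but the paper's fix is \emph{not} the one you sketch. You propose eigenvector-weighted seeds and a potential tailored to the top eigenspace $V$; the paper instead keeps the seeds \emph{uniformly random} and the potential equal to plain total variance, and closes the local/global gap with a standalone spectral lemma: if unit-ball vectors $\vec v_1,\dots,\vec v_n$ satisfy $\E_{ij\sim G}\langle\vec v_i,\vec v_j\rangle\ge\rho$, then $\E_{i,j\in V}\abs{\langle\vec v_i,\vec v_j\rangle}\ge\Omega(\rho)/\rank_{\ge\Omega(\rho)}(G)$. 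This is a pure fact about $G$ (proved by writing the Gram matrix in the eigenbasis of $G$ and noting that the diagonal is a probability distribution whose $\ell_2$ norm is bounded by the global-correlation hypothesis) and has nothing to do with the SDP.

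The second ingredient you are missing is how to manufacture such vectors from $Z$ so that the \emph{lower} bound on $\langle\vec v_i,\vec v_j\rangle$ controls the loss and the \emph{upper} bound controls the variance drop. Writing $u_{ia}$ for Gram vectors of $Z$, the paper sets $\vec v_i := k^{-1/2}\sum_a u_{ia}^{\otimes 2}/\|u_{ia}\|$ and shows by Cauchy--Schwarz that $\langle\vec v_i,\vec v_j\rangle$ is sandwiched: below by $k^{-2}\bigl(\sum_{a,b}|Z_{(i,a),(j,b)}|\bigr)^2$, and above by the drop in $\Var X_i$ when conditioning on $X_j$. So if the edge-averaged $\ell_1$ loss is $\ge\e$, the $\vec v_i$ have local correlation $\ge(\e/k)^2$; the local-to-global lemma converts this to global correlation $\ge\Omega((\e/k)^2)/d$; and by the upper sandwich this lower-bounds the expected variance drop from conditioning on one more uniformly random vertex. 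Telescoping over $O(k d/\e^{O(1)})$ steps finishes. The sign and Hadamard-square difficulties you flag never arise, because the argument never works inside $V$ directly --- the threshold rank enters only through the denominator of the local-to-global lemma. Your route through $\langle A\otimes J_k,\,Z\circ Z\rangle$ and eigenvector-based seeds may be salvageable, but it is not what the paper does, and as written it remains a sketch at precisely the point you acknowledge.
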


\paragraph{Results for \uniquegames constraints} We say that a \maxtwocsp instance is a \uniquegames instance if all the relation $\Pi_{i,j}$ have the form that $(a,b) \in \Pi_{i,j}$ iff $a = \pi_{i,j}(b)$ where $\pi_{i,j}$ is a permutation of $[k]$.  As mentioned above, the performance of SDP hierarchy on \uniquegames instances and related problems is of particular interest. We obtain somewhat stronger quantitative results for \uniquegames instances. Also, as remarked below, our results are ``morally stronger'' in this case, since it's conceivable that the hardest instances for these types of problems have small threshold rank.  First, we show that for \uniquegames instances the threshold $\tau$ in  \pref{thm:main} does not need to depend on the alphabet size. Namely, we prove

\begin{theorem} \label{thm:main-ug} There is an  algorithm, based on rounding $r$ rounds of the Lasserre hierarchy and a constant $c$, such that for every $\e>0$  and input \uniquegames instance $\Ins$ with objective value $v$, alphabet size $k$, satisfying $\rank_{\geq \tau}(\Ins) \leq \e^c r / k$, where $\tau = \e^c$, the algorithm outputs an assignment $x$ satisfying $\val_{\Ins}(x) > v - \e$.
\end{theorem}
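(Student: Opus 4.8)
The plan is to re-run the global-correlation rounding behind \pref{thm:main}, but to replace the $k\times k$ label-correlation that it tracks by a \emph{single scalar per edge} tailored to permutation constraints; controlling that scalar is exactly the step that costs no factor of $k$, which is what lets the threshold drop from $\poly(\e/k)$ to $\poly(\e)$. Concretely, fix an optimal $r$-round Lasserre solution for the \uniquegames instance $\Ins$; it yields consistent local distributions $\set{\mu_S}_{\abs S\le r}$ over $[k]^S$ (equivalently, moment vectors $\set{\vec v_{i,a}}$) with $\E_{(i,j)\sim G_\Ins}\Pr_{\mu_{\set{i,j}}}[\,x_j=\pi_{ij}(x_i)\,]\ge\mathrm{sdpopt}(\Ins)\ge v$, the last inequality because the hierarchy is a relaxation. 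The rounding algorithm draws an ordered tuple $U=(u_1,\dots,u_\ell)$ of seed vertices — from a distribution $D$ determined by the spectral decomposition of $G_\Ins$ — together with values $\alpha$ for them, conditions the pseudo-distribution on $x_{u_1},\dots,x_{u_\ell}$ (legal since $\ell\le r$), and independently samples every remaining $x_i$ from its conditional marginal, outputting the best of polynomially many such samples. It suffices to exhibit $\ell\le r$ for which the expected value of the output is at least $v-O(\e)$.

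The only place where the general 2-CSP argument pays a factor of $\poly(k)$ is the step that charges the loss of independent rounding on an edge to a scalar: there one needs all $k^2$ moments $\iprod{\vec v_{i,a},\vec v_{j,b}}$ close to $\abs{\vec v_{i,a}}^2\abs{\vec v_{j,b}}^2$, and a union bound over the $k^2$ label pairs is lossy. For a \uniquegames edge $(i,j)$ with $\pi=\pi_{ij}$, the constraint involves only the $k$ matched pairs $(a,\pi(a))$, so, writing $\mu^\alpha$ for the solution conditioned on the event $\set{x_U=\alpha}$ and subscripting by the relevant marginal, the loss of independent rounding on $(i,j)$ conditioned on this event is bounded by a universal function of the single quantity
\[
  \rho^\alpha_{ij}\;=\;\sum_{a\in[k]}\Paren{\,\mu^\alpha_{ij}(a,\pi(a))-\mu^\alpha_i(a)\,\mu^\alpha_j(\pi(a))\,}\,.
\]
In addition, a Pinsker-type estimate gives $\E_\alpha\rho^\alpha_{ij}\le O\bigl((\E_\alpha\, \mathrm I(x_i;x_j\mid x_U=\alpha))^{1/2}\bigr)$ with an absolute constant, so the average edge correlation $\E_{(i,j)\sim G_\Ins}\E_\alpha\rho^\alpha_{ij}$ can be charged to the average edge mutual information.

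With these two $k$-free estimates in hand, threshold rank controls the number $\ell$ of conditioning steps exactly as in \pref{thm:main}. Form the symmetric matrix $M$ with $M_{ij}=\E_\alpha\rho^\alpha_{ij}$; the Lasserre consistency constraints make the underlying Gram matrix positive semidefinite, so the spectral transfer lemma used for \pref{thm:main} applies: writing $G_\Ins$ in its eigenbasis, the eigenvalues below $\tau$ contribute at most $O(\tau)$ to $\E_{(i,j)\sim G_\Ins}\rho_{ij}$, while the remaining part lives in a subspace of dimension $\rank_{\ge\tau}(\Ins)$. Hence, as long as the current average edge correlation exceeds $\e$ (i.e.\ independent rounding would still lose more than about $\e$ on average) and $\tau\le\e/2$, there is a seed distribution $D$ supported on that subspace with $\E_{i\sim[n],\,u\sim D}\Corr(x_i,x_u)\ge\Omega(\e/\poly(\rank_{\ge\tau}(\Ins)))$; conditioning on one more $D$-random seed then decreases the potential $\Phi=\sum_i H(x_i)\le n\log k$ by at least $\Omega\bigl(n\cdot(\E_{i,u}\Corr(x_i,x_u))^2\bigr)=\Omega\bigl(n\,\e^2/\poly(\rank_{\ge\tau}(\Ins))\bigr)$, using the mutual-information estimate above together with $\mathrm I(x_i;x_u)\ge\Omega(\Corr(x_i,x_u)^2)$ and a Cauchy--Schwarz averaging over $i$. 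Therefore after $\ell=\poly(1/\e)\cdot\poly(\rank_{\ge\tau}(\Ins))\cdot\log k$ steps — which, since $\log k\le k$, is at most $r$ under the hypothesis $\rank_{\ge\tau}(\Ins)\le\e^c r/k$ once $c$ is a large enough constant — the average edge correlation has dropped below $\e$, and independent rounding of the conditioned solution yields an assignment of value at least $v-O(\e)$. Rescaling $\e$ and taking $\tau=\e^c$ (so in particular $\tau\le\e/2$) proves the theorem, with a threshold that does not depend on $k$; the rounding itself only conditions on $\ell\le r$ coordinates and reads marginals off the SDP solution, so it is as efficient as claimed.

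I expect the main obstacle to be making the two estimates of the second paragraph \emph{genuinely} $k$-free: verifying that for a permutation constraint the quality of independent rounding, and its transfer to mutual information, hold with absolute constants, with the permutation's sole effect being to collapse the label-correlation tensor $(\iprod{\vec v_{i,a},\vec v_{j,b}})_{a,b}$ to the scalar $\rho^\alpha_{ij}$ — and that no hidden factor of $k$ creeps in when one relates ``pseudo-agreement under $\pi$'' to the moments $\iprod{\vec v_{i,a},\vec v_{j,\pi(a)}}$, or in the Cauchy--Schwarz/Pinsker step. Granting that, the spectral transfer and the entropy-potential argument are exactly those used to prove \pref{thm:main}.
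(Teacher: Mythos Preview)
Your high-level plan matches the paper's: exploit the permutation structure of \uniquegames constraints so that the ``edge correlation'' becomes a single scalar per edge, making the local-to-global transfer $k$-free. The paper also uses the same Propagation Sampling rounding, and the $\pi$-restricted statistical distance you introduce is exactly its $\norm{\,\set{X_iX_j}-\set{X_i}\set{X_j}\,}_\pi=\sum_a\abs{\Cov(X_{ia},X_{j\pi(a)})}$. Where you diverge is in the potential (entropy vs.\ variance) and, more importantly, in the spectral transfer step --- and that step, as written, does not go through.

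The gap is in your third paragraph. You form a matrix $M$ with $M_{ij}=\E_\alpha\rho^\alpha_{ij}$ and assert that ``the Lasserre consistency constraints make the underlying Gram matrix positive semidefinite, so the spectral transfer lemma applies.'' But $\rho^\alpha_{ij}=\sum_a\Cov(X_{ia},X_{j\pi_{ij}(a)})$ depends on the \emph{edge-specific} permutation $\pi_{ij}$: it is only defined on edges, and even there it is not the $(i,j)$ entry of any Gram matrix of per-vertex vectors. (Consider a $3$-cycle whose permutations compose to a non-identity.) The local-to-global lemma \pref{lem:local-vs-global-low-rank} needs genuine vectors $\vec v_i$ in the unit ball with $\iprod{\vec v_i,\vec v_j}$ lower bounding the edge correlation \emph{uniformly over all permutations} --- otherwise there is nothing to couple to the eigen-decomposition of $G_\Ins$. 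Your proposal does not supply such vectors, and the PSDness of the full covariance matrix $\bigl(\Cov(X_{ia},X_{jb})\bigr)_{(i,a),(j,b)}$ does not by itself give you a PSD $n\times n$ matrix after the $\pi_{ij}$-dependent contraction.

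The paper closes exactly this gap with \pref{lem:ug-tensoring-trick}: it sets
\[
  \vec v_i \;=\; \sum_{a}\norm{u_{ia}}\,\bar u_{ia}^{\otimes 2}\otimes \bar v_{ia}^{\otimes 2},
\]
where $\iprod{u_{ia},u_{jb}}=\Cov(X_{ia},X_{jb})$ and $\set{v_{ia}}_a$ are the (pairwise orthogonal) Lasserre first-moment vectors. The tensoring with $\bar v_{ia}^{\otimes 2}$ is what makes the construction $k$-free on the upper end (it diagonalises $\snorm{\vec v_i}$ to $\sum_a\snorm{u_{ia}}\le 1$), and two Cauchy--Schwarz steps give $\iprod{\vec v_i,\vec v_j}\ge\bigl(\sum_a\abs{\Cov(X_{ia},X_{j\pi(a)})}\bigr)^4$ for \emph{every} permutation $\pi$. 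This is the missing ingredient; once you have it, the variance-potential argument of \pref{thm:propagation-sampling} runs verbatim with threshold $\tau=\Omega(\e^4)$ independent of $k$. Your entropy/Pinsker framework could be made to work too, but it does not sidestep the need for this per-vertex vector construction --- the obstacle you flagged (hidden factors of $k$ in the second paragraph) is not the real one.
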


The Unique Games Conjecture is about a specific approximation regime for
\uniquegames.
Given a \uniquegames instance with optimal value $1-\e$, the goal is to
find an assignment with value at least $1/2$.

We also show that in this case a sublinear (and in fact a small root)
number of rounds suffice to get such an
approximation in the worst case, regardless of the instance's threshold
rank.
Moreover, we also show that such an approximation can be obtained in a
number of rounds that depends on the $\tau$-threshold rank for $\tau$ that
is close to $1$ (as opposed to the small value of $\tau$ needed for
Theorems~\ref{thm:main} and~\ref{thm:main-ug}).

\begin{theorem} \label{thm:subexp} There is an algorithm, based on rounding
  $r$ rounds of the Lasserre hierarchy and a constant $c$, such that for
  every $\e>0$ and input \uniquegames instance $\Ins$ with objective value
  $1-\e$ and alphabet size $k$, satisfying $r \geq c k\cdot \min \{
  n^{c\e^{1/3}} , \rank_{\geq 1-c\e}(\Ins) \}$, the algorithm outputs an
  assignment $x$ satisfying $\val_{\Ins}(x) > 1/2$.
\end{theorem}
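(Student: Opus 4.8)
The strategy is to establish the two halves of the minimum separately and use whichever bound on $r$ is smaller. The substantive part is the term involving $\rank_{\geq 1-c\e}$, for which I would first prove a ``completeness-regime'' strengthening of \pref{thm:main-ug} in which the spectral threshold is $1-c\e$ rather than $\e^c$: once we only need to beat value $1/2$ while the optimum is $1-\e$, one can afford a far coarser rounding, hence a far coarser threshold. The mechanism I have in mind is graph powering. Replace $\Ins$ by the \uniquegames instance $\Ins'$ whose constraint graph is the geometrically truncated random walk $G' = \gamma\sum_{\ell\geq 1}(1-\gamma)^{\ell-1}G_{\Ins}^{\ell}$ with $\gamma=\Theta(\e)$, the constraint on a walk $v_0,\dots,v_\ell$ being the composition of the permutations along it. An eigenvalue $\lambda\leq 1-c\e$ of $G_{\Ins}$ is mapped below $(\e')^{c}$ for a fixed constant $\e'$, so (choosing $c$ appropriately) $\rank_{\geq(\e')^{c}}(\Ins')\leq\rank_{\geq 1-c\e}(\Ins)$; and a $(1-\e)$-satisfying assignment for $\Ins$ fails a length-$O(1/\gamma)$ walk with probability only $O(1)$, so $\Ins'$ has objective value $1-O(1)$, kept above $1/2$ by choosing the constants. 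The catch is that a good assignment for $\Ins'$ need not be good for $\Ins$, since a single violated $\Ins$-edge appears in only a $\gamma$-fraction of the walk constraints, so I do not expect a clean black-box reduction to \pref{thm:main-ug}. Instead I would run $r$ rounds of Lasserre \emph{on $\Ins$ itself}, condition on $r\gtrsim k\cdot\rank_{\geq 1-c\e}(\Ins)/\poly(\e)$ random variables exactly as in the proof of \pref{thm:main-ug}, but in the correlation-decay step take the test matrix to be the spectral truncation of $G'$ (equivalently, the projector onto the top eigenspace of $G_{\Ins}$); this forces the average over $\Ins$-edges of the mutual information between the two conditioned endpoint marginals below a small constant, after which independent rounding of those marginals gives an assignment of value at least $(1-\e)-o(1)>1/2$.

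The heart of the argument, and the step I expect to be the real obstacle, is showing that for a $(1-\e)$-satisfiable instance the correlation that conditioning must remove is essentially confined to the span of the eigenvectors with eigenvalue $\geq 1-c\e$: along the remaining directions, whose spectral gap is $\geq c\e\gg\e$, the pseudodistribution already behaves like the propagation of a near-satisfying assignment, so the $\e$-fraction of violated constraints cannot accumulate $\Omega(1)$ correlation there. Equivalently, one must argue that the bulk of the spectrum below $1-c\e$ does not obstruct the propagation/rounding even though the noise rate $\e$ is not small compared with the ``gap'' $c\e$ — which is precisely what powering by $\Theta(1/\e)$ steps buys — and making this quantitative inside the Lasserre rounding is the technical core; routine but nontrivial bookkeeping will be needed to track how the constants $\e'$, the powering rate $\gamma$, and the allowed loss in value interact.

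For instances of large threshold rank I would invoke the spectral graph decomposition of Arora, Barak and Steurer \cite{AroraBS10}: at the appropriate scale it partitions $G_{\Ins}$ into vertex sets $V_1,\dots,V_m$, removing only an $o(1)$ (as $\e\to 0$) fraction of the edges, so that $\rank_{\geq 1-c\e}(G_{\Ins}[V_t])\leq n^{O(\e^{1/3})}$ for every $t$ — the balance between ``how few edges are cut'' and ``how small the pieces' threshold rank is'' is exactly what produces the $\e^{1/3}$ exponent. Restricting $\Ins$ to the within-part edges yields, on average over parts, an instance that is still $1-o(1)$-satisfiable, and since the Lasserre solution restricts to a valid $r$-round solution of each induced subinstance, applying the threshold-rank half to each part with $r\gtrsim k\cdot n^{O(\e^{1/3})}$ rounds and concatenating the per-part assignments produces a global assignment of value $>1/2$ (cutting a small constant fraction of constraints is affordable since we only need to beat $1/2$). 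Thus for $r\geq ck\cdot\min\{n^{c\e^{1/3}},\rank_{\geq 1-c\e}(\Ins)\}$: if the threshold rank is the smaller quantity we apply the first half to $\Ins$ directly, and otherwise we apply the decomposition and then the first half to each piece, so in all cases the algorithm is ``solve the $r$-round Lasserre SDP and run the conditioning-plus-independent-rounding scheme''.
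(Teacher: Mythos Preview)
Your treatment of the $n^{c\e^{1/3}}$ term via the Arora--Barak--Steurer decomposition is exactly what the paper does (\pref{thm:subexponential}): partition, restrict the Lasserre solution to each piece, round piecewise, concatenate.

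For the $\rank_{\geq 1-c\e}(\Ins)$ term, though, the powering route has precisely the gap you flagged, and it is not just bookkeeping. In \pref{lem:local-vs-global-low-rank} the graph across whose edges you measure \emph{local} correlation and the graph whose threshold rank appears in the conclusion are the same object; you cannot measure local correlation across $G$-edges and conclude global correlation $\geq 1/\rank_{\geq\tau}(G')$. If instead you measure local correlation across $G'$-edges, termination only gives independent rounding good for $\Ins'$, not for $\Ins$, as you already noted. And the covariance vectors of \pref{lem:ug-tensoring-trick} that drive the iteration in \pref{thm:main-ug} have inner products only polynomially bounded below by the statistical distance, so their local $G$-correlation is never close to $1$ --- you are stuck at threshold $\poly(\e)$, not $1-c\e$.

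The paper bypasses the correlation-decay iteration entirely and works with the projected SDP vectors themselves rather than abstract covariance vectors. Because the SDP value is $1-\eta$, one has $\E_{ij\in E}\sum_a\|v_{ia}-v_{j\pi_{ij}(a)}\|^2\leq\eta$, and this \emph{closeness} across edges survives orthogonal projection onto any subspace. So: use the deterministic low-rank approximation (\pref{thm:low-rank-approx-general}) once to find a seed set $S$, $|S|\leq k^2m$, such that the residuals $u_{ia}=P_S v_{ia}$ have global correlation $\leq 1/(k^2m)$; form $U_i=\sum_a\|u_{ia}\|\,\bar u_{ia}^{\otimes 2}\otimes\bar v_{ia}$, which inherit $\E_{ij\in E}\|U_i-U_j\|^2\leq 3\eta$ (\pref{lem:highlocalcorrelation}). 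Now if $\E_i\|U_i\|^2\geq 4\eta/\lambda_m$, the \emph{normalized} $U_i$ have local correlation $\geq 1-\lambda_m/2$, so \pref{lem:sdptothresholdrank} applied at that threshold forces global correlation $\geq 1/m$ --- contradiction. Hence $\E_i\|U_i\|^2\leq O(\eta/\lambda_m)$, which by \pref{lem:vectorstoconditionalvariance} bounds $\E_i\Var[X_i\mid X_S]$, and propagation rounding gives value $1-O(\eta/\lambda_m)$ (\pref{lem:existsgoodseedset}, \pref{thm:ug-higher-eigen}). The idea you were missing is that closeness of the raw SDP vectors, unlike mere positive covariance, normalizes to local correlation near $1$; that is what buys you the threshold $1-c\eta$ directly, with no powering.
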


\paragraph{Examples of graphs with small threshold rank}
Many interesting graph families have small $\tau$-threshold rank for some
small constant $\tau$.
Random degree $d$ graphs have $\tau$-threshold equal to $1$ for any $\tau >
c/\sqrt{d}$.
More generally, graphs where small subsets of vertices have bounded
edge-expansion, referred to as {\it small-set expanders}, also have small
threshold rank.
For instance, if every set of size $o(n)$ expands by at least
$poly(\epsilon)$ in a graph $G$, then $\rank_{1-\epsilon}(G)$ is at most
$n^{poly(\epsilon)}$ \cite{AroraBS10}.
Generalizing this result, \cite{Steurer10c} showed that if in a graph $G$
every set of size $o(n)$ vertices has near-perfect expansion, then it
implies upper bounds on $\rank_{\tau}(G)$ for $\tau$ close to $0$.

Also, as noted in~\cite{AroraBS10}, \emph{hypercontractive} graphs
(i.e., graphs whose $2$ to $4$ operator norm is bounded) have at most polylogarithmic $\tau$-threshold rank for every constant
$\tau>0$. For several 2-CSP's  such as  \maxcut, \uniquegames, \smallsetexpansion,  \sparsestcut, the constraint graphs for the
canonical ``problematic instances'' (i.e., integrality gap examples~\cite{FeigeS02b,KhotV05,KhotS09,RaghavendraS09c}) are all
hypercontractive, since they are based on either the noisy Gaussian graph or noisy Boolean cube. In fact, it is conceivable that
the \smallsetexpansion problem is trivial on graphs with large threshold rank, in the sense that we do not know of any example of
an instance having, say, $\log^{\omega(1)} n$ $0.99$-threshold rank, and objective value smaller than $1/2$. (For the
$\uniquegames$ and $\maxcut$ problems it is trivial to construct instances with large threshold rank by taking many disjoint
copies of the same instance, though it could still be the case that the hardest instances are the ones with small threshold
rank.) On the other hand, for other 2-CSPs such as \labelcover , some natural hard instances have linear threshold-rank. For
example this is the case if one considers the natural ``clause vs. variable'' or ``clause vs. clause'' 2-CSP obtained from random
instances of 3SAT (which is not surprising given that a non-trivial approximation for random 3SAT requires $\Omega(n)$ levels of
the Lasserre hierarchy~\cite{Schoenebeck08}).

\paragraph{Algorithm efficiency} Our algorithm actually does not require the full power of the Lasserre hierarchy. First, we can use the relaxed variant with
\emph{approximate} constraints studied in~\cite{KhotS09,RaghavendraS09c,KhotPS10}. Second, in the proof of \pref{thm:subexp}, we don't need to utilize the
constraints on all $\binom{n}{r}$ $r$-sized subsets of $n$ variables, but rather sufficiently many random sets suffice. As a
result, we can implement our $r$-round algorithm in time $2^{O(r)}\poly(n)$.


\subsection{Related works}

\paragraph{Subspace enumeration algorithms}
For \uniquegames and related problems, previous works~\cite{KollaT07,Kolla10,AroraBS10} used \emph{subspace enumeration} to give algorithms with similar running
time to \pref{thm:subexp} in the case that the threshold rank of the \emph{label
  extended graph} of the instance is small. This is known to be a stronger
condition on the instances than bounding the threshold rank of the constraint graph. The only known bound on the $1-\e$ threshold rank
of the label extended graph in terms of the $1-\e$ threshold rank of the constraint graph loses a factor of about $n^{\e}$
\cite{AroraBS10}. These subspace enumeration algorithms also only applied to \emph{nearly satisfiable} instances (whose objective value is close to $1$),
and so did not give guarantees comparable to Theorems~\ref{thm:main} and~\ref{thm:main-ug}. As mentioned below in
\pref{sec:technique}, SDP-based algorithms have some robustness advantages over spectral techniques. SDP hierarchies are
also easily shown to yield polynomial-time approximation scheme for 2CSPs whose constraint graphs can have very \emph{high} threshold rank such as
bounded tree width graphs and regular planar graphs (or more generally any \emph{hyperfinite} family of graphs, see e.g. \cite{HassidimKNO09} and the references therein).

\paragraph{Approximation schemes for (pseudo) dense CSP's}
For general 2CSP's, several works gave polynomial-time approximation schemes for \emph{dense} and \emph{pseudo-dense}
instances~\cite{FriezeK99,AlonCHKRS10,Coja-OghlanCF10}. Our work generalizes these results, since pseudo-density is a stronger
condition than having a constraint graph of low threshold rank.   Furthermore, for an $\epsilon$-approximation the degree of the instance needed by these works is exponential in $\frac{1}{\epsilon}$, while the results of this work apply even on random graphs of degree $poly(1/\epsilon)$.

\paragraph{Analyzing SDP hierarchy}
Using very different methods, Chlamtac~\cite{Chlamtac07} and Bhaskara et al~\cite{BhaskaraCCFV10} gave LP/SDP-hierarchy based
algorithms for hypergraph coloring and the densest subgraph problem respectively. As mentioned above, several works gave \emph{lower bounds} for LP/SDP hierarchies. In particular \cite{RaghavendraS09c,KhotS09} showed that approximation such as those achieved in Theorem~\ref{thm:subexp}  for \uniquegames problem require $\log\log ^{\Omega(1)} n$ rounds of a relaxed variant of the Lasserre hierarchy. This relaxed variant captures our hierarchy as well. Schoenebeck~\cite{Schoenebeck08} proved that achieving a non-trivial approximation for 3SAT on random instances requires $\Omega(n)$ rounds in the Lasserre hierarchy, while Tulsiani~\cite{Tulsiani09} showed that Lasserre lower bounds are preserved under common types of NP-hardness reductions.

In a concurrent and independent work, Guruswami and
Sinop~\cite{GuruswamiS11b} gave results very similar to ours. They also use
the Lasserre hierarchy to get an approximation scheme with similar
performance to our \pref{thm:main} for 2-CSPs, and in fact even consider
generalizations involving additional (approximate) global  linear
constraints. They also get essentially the same results for \uniquegames as
our \pref{thm:subexp}. 
Furthermore, their rounding algorithm is the same as ours.
However, there are some differences both in results and the proof. First, although \cite{GuruswamiS11b} use a notion similar to our local-to-global correlation, they view it differently, and interestingly relate it to the problem of column selection for low rank approximations of matrices. Also, apart from the special case of unique constraints, they work with the threshold rank of the \emph{label extended graph}, as opposed to the constraint graph as is the case here (however for binary alphabet these two graphs coincide). Their analysis relies on the full power of the Lasserre hierarchy, whereas we show that a weaker hierarchy is sufficient in the \uniquegames case, and can even be done faster (i.e., $\exp(r)\poly(n)$ vs $n^{O(r)}$).

\section{Our techniques} \label{sec:technique}

We now describe, on a very high and imprecise level, the ideas behind our rounding algorithm and its analysis.  A semidefinite programming relaxation of an optimization problem yields a set of vectors $v_1,\ldots,v_n$ satisfying certain conditions and achieving some objective value $c$. The goal of a rounding algorithm is to transform this set of vectors into, say, a $+1/-1$ solution, satisfying the same conditions and achieving value $c'$ that is close to $c$. At a very high level, our main result is  that if these vectors have some non-trivial \emph{global correlation}, then a good rounding can be achieved with a non-trivially small number of hierarchy rounds. Our second observation is that in several cases, the vectors corresponding to a \emph{good} SDP solution can be shown to have significant mass inside some low-dimensional subspace, and that implies a lower bound on their global correlation. Below we elaborate on what we mean, using the $\maxcut$ problem (which is a special case of \uniquegames ) as an illustrative example. Our result for \maxcut is worked out in more detail in Section~\ref{sec:warmup}.

\nnspace\paragraph{Rounding SDP's using a small basis}
The SDP solution for $\maxcut$ problem consists of a sequence $\cV = v_1,\ldots,v_n$ of unit vectors, and the objective value is the expectation of $(1-\iprod{v_i,v_j})/2$ over all edges $\{i,j\}$ in the input graph. Note that in the case that the vectors $v_1,\ldots,v_n$ are one dimensional unit vectors (i.e., $v_i \in \{ \pm 1 \}$), $\cV$ exactly corresponds to a cut in the graph, and the objective value measures the fraction of edges cut. Now, suppose that you could find $r$ vectors $v_{i_1},\ldots,v_{i_r} \in \cV$, whom we'll call the \emph{basis vectors}, such that every other $v \in \cV$ has some significant projection $\rho$ into the span of $v_{i_1},\ldots,v_{i_r}$. That is, if we let $P$ be the projection operator corresponding to this space, then for every $v\in\cV$ , $\normt{Pv} \geq \rho$.  It turns out that in this case, if  $\rho$ is sufficiently close to $1$ and the vector solution $\cV$ satisfied $r+2$ rounds of an appropriate SDP hierarchy, then we can round $\cV$ to achieve a very good cut. The intuition behind this is the following:  the constraints of $r+2$ hierarchy rounds allow us to essentially assume without loss of generality that the vectors $v_{i_1},\ldots,v_{i_r}$ are one-dimensional. That is, after applying an appropriate rotation, we can think of each one of them as a vector of the form $(\pm 1, 0, \ldots, 0)$. Moreover, our assumption implies that every other vector in $v$ has a magnitude of at least $\rho$ in its first coordinate. Now one can show that simply rounding each vector to the sign of its first coordinate will result in a $\pm 1$ assignment to the vertices corresponding to a good cut.

\nnspace\paragraph{Local to global correlation} From the above discussion, our goal of rounding SDP hierarchies is reduced to finding a small number of basis vectors $v_{i_1},\ldots,v_{i_r}$ such that every (or at least most) other vector in the solution $\cV$ has very large projection into their span. But, why should such vectors exist? We show that we can assume they exist if the original \maxcut instance has small \emph{threshold rank}.  The latter is a condition that, as mentioned above, holds for many natural families of instances, including the canonical ``hard instances'' that are known to fool the GW algorithm--- the noisy sphere and noisy Gaussian graphs~\cite{FeigeS02b,RaghavendraS09c}. The key concept behind our proof is the notion of \emph{local vs global correlations}. It is a very well known
property of expander graphs that random edges behave similarly to pairs of independently chosen vertices with respect to some tests. Specifically, if
$G$ is an $n$-vertex expander in the sense that the normalized adjacency matrix $A_G$'s second largest
eigenvalue is at most $\e$, and $f$ is a bounded function mapping vertices
to numbers, then we know that $\E_{i,j}[ |f(i)-f(j)|^2] \in (1\pm O(\e))\E_{i\sim
  j}[ |f(i)-f(j)|^2])$, where the former expectation is over pairs of
vertices and the latter is over pairs connected by an edge. In other words,
expander graphs imply that if $f$ is \emph{locally correlated} over the
edges of an expander graph, then it is also \emph{globally correlated}. In
fact, this is easily shown to hold even if $f$ maps vertices not into
numbers but into vectors--- i.e., if $v_1,\ldots,v_n$ are unit vectors that
are locally correlated over the edges of $G$ then they are also globally
correlated.  Indeed, this property of expanders has been used in the work of \cite{AroraKKSTV08}, who showed that the basic SDP program for \uniquegames can be successfully rounded if
the input graph is an expander.

Our starting point is to observe that a somewhat similar, though much weaker condition holds even when the graph has at most, say, $r/100$ eigenvalues larger than $\e$. In this case it's possible to show that,say, if \textbf{(*)} $\E_{i\sim j} \iprod{v_i,v_j}^2 > 100\sqrt{e}$, then \textbf{(**)} $\E_{i,j} \iprod{v_i,v_j}^2 \ge 1/r$ (see Section~\ref{sec:local-global-lowrank}). If $r$ is super-constant, the condition \textbf{(**)} does not seem a-priori useful for obtaining a good integral solution. Indeed, the standard integrality gap example of \maxcut is a graph with fairly small (polylogarithmic) number of large eigenvalues, but no good integral solution. However, \textbf{(**)} does imply that we can find at least one vector $v_{i_1}$ such that $\E_j \iprod{v_{i_1},v_j}^2 \geq 1/r$. We can now replace each vector $v \in \cV$ with its projection into the orthogonal space to $v_{i_1}$ and continue until we either get stuck or find a basis $v_{i_1},\ldots,v_{i_r}$ such that (almost all) vectors $v\in\cV$ have most of their mass in $\mathrm{Span}\{ v_{i_1},\ldots, v_{i_r} \}$, in which case we can successfully round the solution. The only way we can get stuck is if at some point we get that  \textbf{(*)} is violated. Now, in the case of \maxcut, if \textbf{(*)} was violated initially, then the value of the SDP would be about $1/2$, which is trivial to round by just taking a random cut. To show that we can easily round even when \textbf{(*)} is violated at some later point in the process, it's useful to switch to the distribution view of SDP hierarchies.

\nnspace\paragraph{Distribution view of SDP's} Another, often beneficial way to view SDP hierarchies is as providing \emph{distribution} on integral solutions (see Section~\ref{sec:local-distributions}). In this view, for every set of $r+2$ vertices $i_1,\ldots,i_r,i_{r+1},i_{r+2}$, the SDP hierarchy provides a distribution $X_{i_1},\ldots,X_{i_r}$ over $\pm 1$. Moreover, we require that distributions on overlapping sets will be consistent, and that the for every two variables $i,j$ the expectation $E[X_iX_j]$ will equal the inner product $\iprod{v_i,v_j}$ of the corresponding vectors. The challenge in rounding the SDP is that there is not necessarily a way to sample simultaneously the random variables $X_1,\ldots,X_n$ in some consistent way. The projection of a vector $v$ into the span of $v_{i_1},\ldots,v_{i_r}$ turns out to capture (an appropriate notion of) the \emph{mutual information} between the variable $X_{i_1}$ and the variables $X_{i_1},\ldots,X_{i_r}$. Looked at from this viewpoint, our rounding algorithm involves choosing an assignment from the  distribution for the basis vertices, and \emph{conditioning} on its value. As long as \textbf{(**)} holds, we can find a random variable $X_i$ such that conditioning on $X_i$ will significantly decrease the entropy of the remaining variables. When we get stuck and \textbf{(*)} is violated, it means that for a typical edge $i\sim j$, the random variables $X_i$ and $X_j$ are close to being \emph{statistically independent}. This means that just sampling each $X_i$ independently will give approximately the same value on a typical constraint.

\nnspace\paragraph{Threshold rank vs global correlation} Whenever the graph has small number of large eigenvalues, the condition
that local correlation implies global correlation holds. This is useful to simulate eigenspace enumeration algorithms such as
used by \cite{KollaT07,Kolla10,AroraBS10,Steurer10c} since in the case of \uniquegames (and other related problems),  a good SDP
solution must be locally well correlated. But the notion of local to global correlation is somewhat more general and robust than
having small threshold rank. For example, adding $\sqrt{n}$ isolated vertices to a graph will increase correspondingly the number
of eigenvectors with value $1$, but will actually not change by much the local to global correlation. This captures to a certain
extent the fact that SDP-based solutions are more robust than the spectral based algorithms. (A similar example of this
phenomenon is that adding a tiny bipartite disjoint graph to the input graph makes the smallest eigenvalue become $-1$, but does
not change by much the value of the Goemans-Williamson SDP.) We hope that this robustness of the SDP-based approach will enable
further improvements in the future.

\begin{remark} \pref{thm:subexp} considers a different parameter than Theorems~\ref{thm:main} and~\ref{thm:main-ug}. 
The latter two results consider threshold ranks for a small (i.e., close to $0$) threshold $\tau$, and achieve a very good approximation.
In contrast, \pref{thm:main} considers threshold $\tau$ that is close to $1$, but only achieve a rough approximation (corresponding to the approximation
guarantee relevant to the unique games conjecture). This is also manifested in some technical differences in the proofs.
\end{remark}

\section*{Organization}

We begin by fixing notation and a few formal definitions in the next section.  For the purpose of exposition, we first present an
algorithm for \maxcut on low-rank graphs using the Lasserre hierarchy in \pref{sec:warmup}. Following this, the general algorithm
for 2-CSPs on low-rank graphs is presented in \pref{sec:general-2-csp}.  The connection between local and global correlations in
low-rank graphs that is central to our algorithms, is outlined in \pref{sec:local-global-lowrank}.  To implement our general
approach in a hierarchy weaker than Lasserre hierarchy, we outline an argument to obtain low-rank approximation to any set of
vectors in \pref{sec:low-rank-approx}.   The final section (\pref{sec:round-uniq-games}) of the paper is devoted to
subexponential time algorithm for Unique Games.

\section{Preliminaries}

\ifnum\full=0

We will use capital letters $X,Y$ to denote random
variables, and lower-case letters to denote assignments to these
random variables. $\cp(X)$ denotes the collision probability of a random variable.
If $X$ is a random variable and $a$ is an element in $X$'s domain, we use $X_a$ to denote the indicator variable
that equals $1$ if $X=a$ and equals $0$ otherwise. For a random variable $X$ whose range is $[k]$, we define the \emph{variance} of $X$ as
\begin{equation}
 \Var[X] \defeq \sum_{a \in [k]} \Var[X_{1a}]  = 1 - \cp(X) \;, \label{eq:defvarianceoverk}
 \end{equation}
where $\cp(X)$ denotes the collision probability of $X$.

\else

We will use capital letters $X,Y$ to denote random
variables, and lower-case letters to denote assignments to these
random variables.

For a real-valued random variable $X$, let $\Var[X]$ denote its
variance.  In this work, we will use random variables taking values over a
range $[k] = \{1\ldots k\}$.  For a random variable $X$ over $[k]$,
and $a \in [k]$, let $X_{a}$ denote the indicator of the event that
$X = a$.  We define the variance of $X$ to be,
\begin{equation}
 \Var[X] \defeq \sum_{a \in [k]} \Var[X_{a}]  = 1 -\cp(X) \mper \label{eq:defvarianceoverk}
 \end{equation}
The \emph{collision probability} of $X$  is defined as
\begin{displaymath}
  \cp(X)
  \defeq \Prob[{X}{X'}]{X= X'}
  \mcom
\end{displaymath}
where $X'$ is an independent copy of $X$ (so that the sequence
$X,X'$ is \iid).
It is easy to see that the variance and collision probability are
related by,
$$ \cp(X) = 1 - \Var[X] \mper $$

For two jointly-distributed random variables $X,Y$, let $\brac{X\mid Y=y}$
denote the random variable $X$ conditioned on the event that $Y = y$.
If it is clear from the context, we write $(X|y)$ for $(X|Y=y)$.  We will
denote by $\E_{\set{Y}}\Var[X|Y]$ the following quantity,
$$ \E_{\set{Y}}\Var[X|Y] = \E_{y}\left[ \Var[(X|Y=y)] \right] \mper$$

\fi

\ifnum\full=1 \subsection{Unique Games}

\begin{definition}
\else \nnspace\paragraph{Unique games}
\fi
	An instance of \uniquegames consists of a graph $G =
	(V,E)$, a label set $[k] = \{1,\ldots, k\}$ and a bijection
	$\pi_{ij} \from [k] \to [k]$ for every edge $(i,j) \in E$.
	A labelling $\ell \from V \to [k]$ is said to {\it satisfy}
	an edge $(i,j)$ if $\pi_{ij}(\ell(i)) = \ell(j)$.
	The goal is to find a labeling $\ell \from V \to [k]$ that
	satisfies the maximum number of edges namely,
	\beq \text{maximize} \,\,\,\,\,  \Pr_{(i,j) \in E}\left\{ \pi_{ij}(\ell(i)) = \ell(j)
	\right\} \eeq
\ifnum\full=1
\end{definition}
\fi

\ifnum\full=1
\subsection{Local Distributions} \label{sec:local-distributions}
\else
\nnspace\paragraph{Local distributions.}
\fi
Let $V=[n]$ be a set of vertices and let $[k]$ be a set of labels. An \emph{$m$-local distribution} is a distribution
$\mu^T$ over the set of assignments $[k]^T$ of the vertices  of some set $T\subseteq V$ of size at most $m+2$. (The choice of $m+2$ is immaterial but will be convenient later on.)
A collection of $m$-local distributions $\set{\mu_T}_{T\sse V,~\card T \le
  m+2}$ is \emph{consistent} if for all $T,T'\sse V$ with $\card{T},\card
{T'}\le m+2$, the distributions $\mu_T$ and $\mu_{T'}$ are consistent on
their intersection $T\cap T'$. We sometimes will view these distributions as random variables, hence writing $X_i^{(T)}$ for the random variable over $[k]$ that is distributed according to the label that  $\mu^{T\cup \{i\}}$ assigns to $i$, and refer to a collection $X_1,\ldots,X_n$ of \emph{$m$-local random variables}. However, we stress that these are not necessarily jointly distributed random variables, but rather for any subset of at most $m+2$ of them, one can find a sample space on which they are jointly distributed.  For succinctness, we omit the superscript for variables
$\cramped{\super{\cramped{X_i}} S}$ whenever it is clear from the context. For example, we will use  $\set{X_i \mid X_S}$ is short for the random variable obtained by
  conditioning $\super X {S\cup \set i}_i$ on the variables $\set {\super X
    {S\cup \set i}_j}_{j\in S}$;\footnote{Strictly speaking, the range of the random variable $\set{X_i\mid X_S}$
  are random variables with range $[k]$. For every possible value $x_S$ for $X_S$, one obtains a $[k]$-valued
  random variable $\set{X_i\mid X_S=x_S}$.} and use $\Prob{X_i=X_j \mid X_S}$ is short for the $[0,1]$-valued random
  variable $\Prob{\super X{S\cup\set{i,j}}_i=\super X{S\cup\set{i,j}} _j \mid \super X {S\cup\set{i,j}}_S}$.

\ifnum\full=1\subsection{Lasserre Hierarchy} \else \nnspace\paragraph{Lasserre hierarchy} \fi
Let $U$ be a \uniquegames instance with constraint graph $G=(V,E)$, label set
$[k]=\set{1,\ldots,k}$, and bisections $\set{\pi_\ij}_{ij\in E}$.
An \emph{$m$-round Lasserre solution} consists of $m$-local random
variables $X_1,\ldots,X_n$ and vectors $v_{S,\alpha}$ for all vertex sets $S \sse V$
with $\card S \le m+2$ and all local assignments $\alpha\in [k]^S$.
A Lasserre solution is \emph{feasible} if the local random variables are
consistent with the vectors, in the sense that for all $S,T\sse V$ and
$\alpha\in [k]^S,\beta\in [k]^T$ with $\card{S\cup T}\le m+2$, we have
\beq
  \iprod{v_{S,\alpha},v_{T,\beta}} = \Prob{X_S=\alpha,~X_T=\beta}\mper
\eeq
The objective is to maximize the following expression
\beq
  \E_{ij\in E} \Prob{X_j=\pi_\ij(X_i)}\mper
\eeq
An important consequence of the existence of the vectors $v_{S,\alpha}$ is
that for every set $S\sse V$ with $\card{S}\le m$ and local assignment
$x_S\in [k]^S$, the matrix $\Set{\Cov(X_\ia,X_\jb\mid X_S=x_S)}_{i,j\in
  V,\,a,b\in[k]}$ is positive semidefinite.

\section{Warmup -- MaxCut Example} \label{sec:warmup}

For the sake of exposition, we first present an algorithm for the \maxcut problem on low-rank graphs.  In the \maxcut problem, the input consists of a graph $G = (V,E)$ and the goal is to find a cut $S \cup \bar S = V$ of the vertices that maximizes the number of edges crossing, i.e., maximizes $|E(S,\bar S)|$.

The Goemans-Williamson SDP relaxation for the problem assigns a unit vector $v_i$ for every vertex $i \in V$, so as to maximize the average squared length $E_{i,j \in E} \norm{v_i - v_j}^2$ of the edges.  Formally, the SDP relaxation is given by,
$$ \textrm{maximize }    \E_{i,j \in E} \norm{v_i - v_j}^2  \ \textrm{ subject to } \norm{v}_i^2 = 1 ~\forall i \in V$$

Stronger SDP relaxations produced by hierarchies such as Sherali-Adams and Lasserre hierarchy also yield probability distributions over local assignments.

More precisely, given a $m$-round Lasserre SDP solution, it can be associated with a set of $m$-local random variables $X_1, \ldots, X_n$ taking values in $\set{-1,1}$.  For an edge $(i,j)$, its contribution to the SDP objective value ($\norm{v_i - v_j}^2$) is equal to the probability that the edge $(i,j)$ is cut under the distribution of local assignments $\mu_{ij}$, namely,
$$  \Pr_{\mu_{ij}}[X_i \neq X_j] = \norm{v_i - v_j}^2 \mper$$

Consequently, in order to obtain a cut with value {\it close} to the SDP objective, it is sufficient to jointly sample $X_1, \ldots, X_n$, such that on every edge $(i,j)$ the distribution of $X_i$ and $X_j$ is {\it close} to the corresponding local distribution $\mu_{ij}$.  However, the variables $X_1, \ldots, X_n$ are not jointly distributed, and hence cannot all be sampled together.

As a first attempt, let us suppose we sample each $X_i$ {\it independently} from its associated marginal $\mu_i$.  If on most edges $(i,j)$, the distribution of the resulting samples $X_i,X_j$ is close to $\mu_{ij}$, then we are done.
On an edge $(i,j)$, the local distribution $\mu_{ij}$ is {\it far} from the independent sampling distribution $\mu_i \times \mu_j$  only if the random variables $X_i,X_j$ are {\it correlated}.  Henceforth, these correlations across the edges would be refered to as ``local correlations".  A natural measure for  correlations that we will utilize here is defined as $ \Cov(X_i,X_j) = \E[X_i X_j] - \E[X_i]\E[X_j] $.  Using this measure, the statistical distance between independent sampling ($\mu_i \times \mu_j$) and correlated sampling ($\mu_{ij}$) is given by $$ \norm{\mu_{ij}- \mu_{i} \times \mu_j}_1 \leq \abs{\Cov(X_i,X_j)} \mper$$
(See \pref{lem:stat-dist-correlation} for a  more general version of the above bound).

On the flip side, the existence of correlations makes the problem of sampling $X_1,\ldots,X_n$ easier!  If two variables $X_i,X_j$ are correlated, then sampling/fixing the value of $X_i$ reduces the uncertainty in the value of $X_j$.  More precisely, conditioning on the value of $X_i$ reduces the variance of $X_j$  as shown below:
$$ \E_{\{X_i\}}\Var[X_j | X_i] = \Var[X_j] - \frac{1}{\Var[X_i]} \left[\Cov(X_i,X_j)\right]^2 \mper $$
Therefore, if we pick an $i \in V$ at random and fix its value then the expected decrease in the variance of all the other variables is given by,
$$ \E_{i \in V, \{X_i\}} \left[\E_{j \in V} \Var[X_j|X_i] \right] - \E_{j \in V} \Var[X_j]  =   \E_{i,j \in V}  \Cov(X_i,X_j)^2 \cdot \frac{1}{2}\left(\frac{1}{\Var[X_i]} +\frac{1}{\Var[X_j]}  \right) \mper
$$
The above bound is proven in a more general setting in \pref{lem:conditional-variance-correlation}.
As all random variables involved have variance at most $1$, we can rewrite the above expression as,
$$ \E_{i \in V, \{X_i\}} \left[\E_{j \in V} \Var[X_j|X_i] \right] - \E_{j \in V} \Var[X_j]  \geq   \E_{i,j \in V}  |\Cov(X_i,X_j)|^2 \mper
$$
The decrease in the variance is directly related to the {\it global correlations} between random pairs of vertices $i,j \in V$.

Recall that, the failure of independent sampling yields a lower bound on the average local correlations on the edges namely, $E_{i,j \in E} |\Cov(X_i,X_j)|$.
The crucial observation is that if the graph $G$ is a {\it good expander}
in a suitable sense,  then these local correlations translate in to
non-negligible global correlations.  Formally, we show the following (in \pref{sec:local-global-lowrank}):

\begin{lemma}
\torestate{
\label{lem:local-vs-global-low-rank}
  Let $\vec v_1,\ldots,\vec v_n$ be vectors in the unit ball.
  Suppose that the vectors are correlated across the edges of a regular
  $n$-vertex graph $G$,
  \begin{displaymath}
    \E_{ij\sim G} \iprod{\vec v_i,\vec v_j} \ge \rho\mper
  \end{displaymath}
  Then, the global correlation of the vectors is lower bounded by
  \begin{displaymath}
    \E_{i,j\in V} \abs{\iprod{\vec v_i,\vec v_j}}\ge
    \Omega(\rho)/\rank_{\ge \Omega(\rho)}(G)\mper
  \end{displaymath}
where $\rank_{\geq \rho}(G)$ is the number of eigenvalues of adjacency matrix of $G$ that are larger than $\rho$.
}
\end{lemma}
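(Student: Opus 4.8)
The plan is to pass from the vectors to their Gram matrix $M$, use the spectral decomposition of the constraint graph to show that $M$ has a large "footprint'' on the top eigenspace of $G$, and then turn that into a lower bound on $\norm M_F^2$ using only that this eigenspace is low-dimensional. Concretely, let $A$ be the normalized adjacency matrix of $G$ and let $M$ be the $n\times n$ matrix with $M_{ij}=\iprod{\vec v_i,\vec v_j}$, so that $M\sge 0$, $\Tr M=\sum_i\snorm{\vec v_i}\le n$, and $\abs{M_{ij}}\le 1$ for all $i,j$. The hypothesis $\E_{ij\sim G}\iprod{\vec v_i,\vec v_j}\ge\rho$ is precisely $\Tr(AM)\ge\rho n$, and the conclusion to be proved is $\sum_{i,j}\abs{M_{ij}}\ge\Omega(\rho)\,n^2/\rank_{\ge\Omega(\rho)}(G)$.

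First I would truncate the spectrum of $A$. Set $\tau=\rho/2$, let $d=\rank_{\ge\tau}(G)$, and let $P$ be the orthogonal projection onto the span of the eigenvectors of $A$ with eigenvalue at least $\tau$. Since $A$ and $P$ commute ($P$ is a spectral projection of $A$), all eigenvalues of $A$ lie in $[-1,1]$, and the eigenvalues not captured by $P$ are below $\tau$, one checks eigenvector by eigenvector that $A\sle P+\tau I$. Pairing this with $M\sge 0$ and using $\Tr M\le n$,
\begin{displaymath}
  \Tr(PM)\;\ge\;\Tr(AM)-\tau\,\Tr M\;\ge\;\rho n-\tfrac{\rho}{2}n\;=\;\tfrac{\rho}{2}\,n\mper
\end{displaymath}

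Next I would exploit the low dimension of $\mathrm{Im}(P)$. The matrix $PMP$ is positive semidefinite, has rank at most $d$, and satisfies $\Tr(PMP)=\Tr(PM)\ge\rho n/2$. Applying Cauchy--Schwarz to its at most $d$ nonzero eigenvalues gives $\Tr(PMP)\le\sqrt d\,\norm{PMP}_F$, hence $\norm{PMP}_F\ge\rho n/(2\sqrt d)$; and since multiplying by a projection never increases Frobenius norm, $\norm{PMP}_F\le\norm M_F$. Therefore $\sum_{i,j}M_{ij}^2=\norm M_F^2\ge\rho^2 n^2/(4d)$. Finally, $\abs{M_{ij}}\le 1$ gives $M_{ij}^2\le\abs{M_{ij}}$, so $\sum_{i,j}\abs{M_{ij}}\ge\rho^2 n^2/(4d)$, i.e.\ $\E_{i,j\in V}\abs{\iprod{\vec v_i,\vec v_j}}\ge\rho^2/\bigl(4\,\rank_{\ge\rho/2}(G)\bigr)$.

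The step I expect to be the real obstacle is the last one: replacing $\sum_{i,j}\abs{M_{ij}}$ by $\sum_{i,j}M_{ij}^2$ costs a factor of $\rho$ relative to the linear-in-$\rho$ numerator stated in the lemma, and this loss is genuine for the natural extremal example --- a disjoint union of $d$ cliques carrying the scaled orthonormal vector $\sqrt{\rho}\,\vec w_c$ on clique $c$, where the true global correlation is $\Theta(\rho/d)$ while $\sum_{i,j}M_{ij}^2=\Theta(\rho^2n^2/d)$. Recovering the exact $\Omega(\rho)$ would require lower-bounding $\sum_{i,j}\abs{M_{ij}}$ directly from $\Tr(PMP)\ge\Omega(\rho n)$ without squaring (the content of $PMP$ being dominated entrywise by a matrix whose entries are $O(\rho)$ in the tight case). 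That said, the polynomial-in-$\rho$ version above is all the applications need --- the main theorems only quantify over $\poly(\e)$ quantities --- so I would be content to carry the lemma in that form and fold the extra $\rho$ into the unspecified constant $c$.
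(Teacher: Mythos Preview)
Your argument is correct and is essentially the paper's proof in direct form: the paper phrases it as the contrapositive (their \pref{lem:sdptothresholdrank}, that small $\E_{i,j}\iprod{v_i,v_j}^2$ together with large local correlation forces many large eigenvalues) via the same Gram-matrix--in--the--eigenbasis plus Cauchy--Schwarz computation, and then reads off the lemma by choosing the threshold parameter. Your observation about landing on $\rho^2$ rather than $\rho$ in the numerator is accurate and applies equally to the paper's derivation once one unwinds the normalization; as you say, the applications only need the polynomial dependence.
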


As random variables $X_i$ arise from the solution to a SDP, the matrix $\left(\Cov(X_i,X_j)\right)_{i,j \in V}$ is positive semidefinite, i.e., there exists vectors $u_i$ such that $\iprod{u_i,u_j} = \Cov(X_i,X_j) \,\, \forall i,j \in V$.  Let us consider the vectors $v_i = u_i^{\otimes 2}$.  Suppose the local correlation $\E_{i,j \in E} |\Cov(X_i,X_j)|$ is at least $\epsilon$ then we have,
$$ \E_{i,j \in E} \iprod{v_i,v_j} = \E_{i,j \in E} |\Cov(X_i,X_j)|^2 \geq \eps^2 \mcom $$
and $\E_i [\norm{v_i}^2] \leq 1$.
If the graph $G$ is low-rank, then by \pref{lem:local-vs-global-low-rank} we get a lower bound on the global correlation of the vectors $v_i$, namely
$$ \E_{i,j \in V} |\Cov(X_i,X_j)|^2 = \E_{i,j \in V} \iprod{v_i,v_j} \geq \Omega(\epsilon^2)/\rank_{\geq \eps^2}(G) \mper$$

Summarizing, if the independent sampling is on average $\epsilon$-far from
correlated sampling over the edges, then conditioning on the value of a
random vertex $i \in V$ reduces the average variance by $\eps^2/\rank_{\geq
  \eps^2}(G)$ in expectation.  The same argument can now be applied on the
variables obtained after conditioning on $i$.  In fact, starting with an
SDP solution to $m$-round Lasserre hierarchy, the local distributions
remain consistent and their covariance matrices remain semidefinite as long
as we condition on at most $m-2$ vertices.  Observe that average variance
is at most $1$.  Hence, after at most $\rank_{\geq \eps^2}(G)/\eps^2$
steps, the independent sampling distribution will be within average
distance $\epsilon$ from the correlated sampling on the edges.  The details of this argument are presented in \pref{thm:propagation-sampling}.

\section{General 2-CSP on Low Rank Graphs}
\label{sec:general-2-csp}

Let $\Ins$ be a (general) \maxtwocsp instance with variable set $V=[n]$
and label set $[k]$.
(We represent $\Ins$ as a distribution over triples $(i,j,\Pi)$, where
$i,j\in V$ and $\Pi\sse [k]\times [k]$ is an arbitrary binary predicate.
The goal is to find an assignment $x\in [k]^V$ that maximizes the
probability $\Prob[(i,j,\Pi)\sim \Ins]{(x_i,x_j)\in \Pi}$.)

For simplicity,%
\footnote{If the constraint graph is not regular, all of our results still
  hold for an appropriate definition of threshold rank.}
we will assume that the constraint graph of $\Ins$ is regular, i.e., every
variable $i\in V$ appears in the same number of constraints.
(Since we allow the constraints to be weighted, the precise condition is
that the total weight of the constraints incident to a vertex is the same
for every vertex.)

Let $X_1,\ldots,X_n$ be $r$-local random variables with range $[k]$.
We write $X_\ia$ to denote the $\bits$-indicator of the event $X_i=a$.
Notice that $\set{X_\ia}_{i\in V,\, a\in [k]}$ are also $m$-local random
variables.

\newcommand{\Dist}{\Delta}

For two random variables $X$ and $X'$ with the same range, we denote their
\emph{statistical distance},
\begin{displaymath}
  \normo{\,\set X-\set{X'}\,}
  \defeq \sum_{x} \bigabs{\Prob{X=x}-\Prob{X'=x}}\mper
\end{displaymath}

\paragraph{Independent Sampling and Pairwise Correlation}

The following lemma shows that the statistical difference between
independent sampling and correlated sampling is explained by local
correlation.
\begin{lemma}
  \label{lem:stat-dist-correlation}
  For any two vertices $i,j\in V$,
  \begin{displaymath}
    \Normo{\set{X_i X_j}- \set{X_i}\set{X_j}}
    = \sum_{(a,b)\in [k]^2} \bigabs{\Cov(X_\ia,X_\jb)}
    \mper
  \end{displaymath}
\end{lemma}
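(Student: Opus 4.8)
The plan is to prove the identity termwise: for each outcome $(a,b)\in[k]^2$, the corresponding contribution to the left-hand statistical distance equals the single summand $\bigabs{\Cov(X_\ia,X_\jb)}$ on the right.

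First I would unwind the definition of statistical distance. Since $X_i$ and $X_j$ each take values in $[k]$, the joint random variable $\set{X_iX_j}$ ranges over $[k]^2$, and the product distribution $\set{X_i}\set{X_j}$ is the law on $[k]^2$ that assigns $(a,b)$ the probability $\Prob{X_i=a}\cdot\Prob{X_j=b}$. (That $X_i,X_j$ are genuinely jointly distributed is exactly the point of working with $r$-local random variables: for the two-element set $\set{i,j}$ the local distribution $\mu_{\set{i,j}}$ supplies their joint law.) Hence, directly from the definition of $\normo{\cdot}$,
\[
  \Normo{\set{X_iX_j}-\set{X_i}\set{X_j}}
  = \sum_{(a,b)\in[k]^2}\bigabs{\Prob{X_i=a,\,X_j=b}-\Prob{X_i=a}\Prob{X_j=b}}\mper
\]

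Next I would rewrite each probability via the $\bits$-indicators. By definition $X_\ia=\Ind[X_i=a]$, so $\ex{X_\ia}=\Prob{X_i=a}$, and since $X_\ia X_\jb=\Ind[X_i=a \text{ and } X_j=b]$ we also have $\ex{X_\ia X_\jb}=\Prob{X_i=a,\,X_j=b}$. Therefore
\[
  \Cov(X_\ia,X_\jb)
  = \ex{X_\ia X_\jb}-\ex{X_\ia}\ex{X_\jb}
  = \Prob{X_i=a,\,X_j=b}-\Prob{X_i=a}\Prob{X_j=b}\mper
\]
Taking absolute values and summing over $(a,b)\in[k]^2$ matches the display above term by term, which is the claim.

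There is no genuine obstacle here: the statement is essentially the observation that the $\ell_1$-distance between a joint distribution and the product of its marginals equals the entrywise $\ell_1$-norm of the covariance matrix of the associated $0/1$ indicators, and this is immediate once the definitions are unwound. The only points to be careful about are (i) that the paper's $\normo{\cdot}$ is the \emph{unnormalized} $\ell_1$ distance, so no factor of $\tfrac12$ intervenes, and (ii) that one is entitled to speak of the joint distribution of $X_i$ and $X_j$, which holds since only two local random variables are involved.
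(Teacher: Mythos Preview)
Your proof is correct and follows essentially the same approach as the paper: both arguments simply unwind the definitions to observe that for each $(a,b)$ the term $\Prob{X_i=a,\,X_j=b}-\Prob{X_i=a}\Prob{X_j=b}$ equals $\Cov(X_\ia,X_\jb)$, then take absolute values and sum.
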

\begin{proof}
  Under the distribution $\set{X_iX_j}$, the event $\set{X_i=a, X_j=b}$ has
  probability $\E X_\ia X_\jb$.
  On the other hand, under the product distribution $\set{X_i}\set{X_j}$,
  this event has probability $\E X_\ia  \E X_\jb$.
  Hence, the difference of these probabilities is equal to $\E X_\ia X_\jb
  - \E X_\ia \E X_\jb=\Cov(X_\ia,X_\jb)$.
\end{proof}

\paragraph{Conditional Variance and Pairwise Correlation}
\label{sec:local-corr-cond}

The following lemma shows that conditioning on a variable $X_j$ decreases
the variance of a variable $X_i$ by the correlation of the variables
$X_\ia$ and $X_\jb$.

\begin{lemma}
  \label{lem:conditional-variance-correlation}
  For any two vertices $i,j\in V$,
  \begin{displaymath}
     \Var X_i -\E_{\set{X_j}} \Var\Brac{X_i\Mid X_j}
    \ge  \tfrac 1k \sum_{a,b\in [k]} \E_{\set{X_\ia X_\jb}}
    \Cov(X_\ia,X_{jb})^2 / \Var X_\jb
  \end{displaymath}
\end{lemma}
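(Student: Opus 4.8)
The plan is to reduce the statement to a coordinate-wise application of the law of total variance. Recall from the definition of the variance of a $[k]$-valued random variable that $\Var X_i = \sum_{a\in[k]} \Var X_\ia$, and likewise $\Var[X_i\mid X_j=y] = \sum_{a\in[k]} \Var[X_\ia\mid X_j=y]$ for every value $y$, so that after averaging over $X_j$ the left-hand side of the lemma splits as $\sum_{a\in[k]}\bigl(\Var X_\ia - \E_{\set{X_j}}\Var[X_\ia\mid X_j]\bigr)$. It therefore suffices to show, for each fixed $a\in[k]$, that $\Var X_\ia - \E_{\set{X_j}}\Var[X_\ia\mid X_j]\ge \tfrac1k\sum_{b\in[k]}\Cov(X_\ia,X_\jb)^2/\Var X_\jb$; note that the expectations $\E_{\set{X_\ia X_\jb}}$ appearing on the right-hand side of the lemma act on constants and so play no role beyond emphasizing that only the joint ($2$-local) distribution of the pair $X_i,X_j$ is used.

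Next I would invoke the law of total variance for the pair $(X_\ia,X_j)$, which is genuinely jointly distributed since $X_\ia$ is a deterministic function of $X_i$: this gives $\Var X_\ia - \E_{\set{X_j}}\Var[X_\ia\mid X_j] = \Var_{\set{X_j}}\E[X_\ia\mid X_j]$, i.e. the decrease in variance equals the variance of the conditional expectation (the ``explained variance''). Now fix $b\in[k]$. Since the indicator $X_\jb$ is a deterministic function of $X_j$, the tower property gives $\E[X_\ia\mid X_\jb] = \E\bigl[\E[X_\ia\mid X_j]\mid X_\jb\bigr]$, and applying the law of total variance once more (now conditioning $\E[X_\ia\mid X_j]$ on the coarser information $X_\jb$) yields $\Var_{\set{X_j}}\E[X_\ia\mid X_j]\ \ge\ \Var_{\set{X_\jb}}\E[X_\ia\mid X_\jb]$: conditioning on a coarser $\sigma$-algebra can only decrease the explained variance.

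Finally I would compute $\Var_{\set{X_\jb}}\E[X_\ia\mid X_\jb]$ exactly. Because $X_\jb$ is two-valued, every function of it is affine in $X_\jb$, so the conditional expectation coincides with the best linear predictor $\E X_\ia + \tfrac{\Cov(X_\ia,X_\jb)}{\Var X_\jb}\bigl(X_\jb-\E X_\jb\bigr)$, whose variance is $\Cov(X_\ia,X_\jb)^2/\Var X_\jb$ (this is exactly the single-variable identity used in the \maxcut warm-up, transported to $\set{0,1}$ variables by an affine change of coordinates). Averaging the resulting bound $\Var_{\set{X_j}}\E[X_\ia\mid X_j]\ge \Cov(X_\ia,X_\jb)^2/\Var X_\jb$ over $b\in[k]$ and then summing over $a\in[k]$ completes the proof. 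I do not expect a genuine obstacle here: the only points requiring care are the bookkeeping between the $[k]$-valued variables $X_i$ and their $\set{0,1}$-indicators $X_\ia$, and the realization that one must pay the factor $\tfrac1k$ because the argument compares conditioning on $X_j$ only with conditioning on a single indicator $X_\jb$ at a time, rather than on the entire (linearly dependent, hence awkward to handle) family $\set{X_\jb}_{b\in[k]}$ at once.
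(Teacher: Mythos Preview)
Your proof is correct and follows essentially the same route as the paper: reduce to indicator variables, invoke the exact binary-conditioning identity $\Var X_\ia-\E_{\set{X_\jb}}\Var[X_\ia\mid X_\jb]=\Cov(X_\ia,X_\jb)^2/\Var X_\jb$, and then use that conditioning on the full $X_j$ can only increase the explained variance compared to conditioning on a single indicator $X_\jb$. The only cosmetic difference is that the paper pigeonholes to select a single best $b_0$ while you average the bound over $b\in[k]$; both introduce the factor $1/k$ in exactly the same way.
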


\begin{proof}
  If we condition on $X_{\jb}$, the variance of $X_\ia$ decreases by
  $\Cov(X_\ia,X_\jb)^2/\Var X_\jb$ (\pref{lem:condition-indicator}).
  Thus, the variance of $X_i$ deceases by $\sum_a \Cov(X_\ia,X_\jb)^2/\Var X_\jb$.
  Hence, there exists $b_0$ such that conditioning on $X_{j b_0}$ causes a
  variance decrement of at least $\tfrac 1k \sum_{a,b}\Cov(X_\ia,X_\jb)^2/\Var X_\jb$.
  Since the variance is non-increasing under conditioning, the variance of
  $X_i$ decreases by at least this amount when we condition on $X_j$.
\end{proof}

\paragraph{Pairwise Correlations and Inner Products}
\label{sec:embedd-corr}

The previous paragraphs were about two different notions of pairwise
correlation.
On the one hand, $\normo{\,\set{X_iX_j}-\set{X_i}\set{X_j}\,}$ and on the
other hand, $\Var X_i-\E_{\set {X_j}} \Var\brac{X_i\mid X_j}$.
The following lemma relates these two notions of pairwise correlations and
shows they can be approximated by inner products of vectors.

\begin{lemma}
\label{lem:general-tensoring-trick}
  Suppose that the matrix $\Paren{\Cov(X_\ia,X_\jb)}_{i\in V,\, a\in[k]}$
  is positive semidefinite.
  Then, there exists vectors $\vec v_1,\ldots,\vec v_n$ in the unit ball
  such that for all vertices $i,j\in V$,
  \begin{displaymath}
    \tfrac1{k^2} \Bigparen{\sum_{(a,b)\in [k]^2} \bigabs{\Cov(X_\ia,X_\jb)}}^2
    \le \iprod{\vec v_i,\vec v_j}
    \le \tfrac1k\sum_{(a,b)\in [k]^2} \tfrac 12(\tfrac{1}{\Var X_\ia} +
    \tfrac{1}{\Var X_\jb})
    \Cov(X_\ia,X_\jb)^2
    \mper
  \end{displaymath}
\end{lemma}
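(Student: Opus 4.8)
The plan is to build the vectors $\vec v_i$ explicitly from the Gram vectors of the covariance matrix, via a normalize-and-tensor trick like the one used for \maxcut in \pref{sec:warmup}, but with weights chosen so that one inner product lands between the two target quantities. Since $\Paren{\Cov(X_\ia,X_\jb)}_{i\in V,\,a\in[k]}$ is positive semidefinite, write $\Cov(X_\ia,X_\jb)=\iprod{w_\ia,w_\jb}$ for vectors $w_\ia$; then $\norm{w_\ia}^2=\Cov(X_\ia,X_\ia)=\Var X_\ia$. An index $(i,a)$ with $\Var X_\ia=0$ has $X_\ia$ constant, hence $\Cov(X_\ia,X_\jb)=0$ for every $j,b$ and such indices contribute nothing to either side of the claimed inequalities, so we may assume $\Var X_\ia>0$ throughout. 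Put $\widehat w_\ia\defeq w_\ia/\norm{w_\ia}$, a unit vector, and $z_\ia\defeq\widehat w_\ia\otimes\widehat w_\ia$; then $\iprod{z_\ia,z_\jb}=\iprod{\widehat w_\ia,\widehat w_\jb}^2=\Cov(X_\ia,X_\jb)^2/(\Var X_\ia\,\Var X_\jb)\in[0,1]$.

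Next I would set
\[
  \vec v_i \defeq \frac1{\sqrt k}\sum_{a\in[k]}\sqrt{\Var X_\ia}\;z_\ia \mcom
\]
so that expanding the inner product and using the identity above yields
\[
  \iprod{\vec v_i,\vec v_j} = \frac1k\sum_{(a,b)\in[k]^2}\sqrt{\Var X_\ia\,\Var X_\jb}\cdot\iprod{z_\ia,z_\jb}
  = \frac1k\sum_{(a,b)\in[k]^2}\frac{\Cov(X_\ia,X_\jb)^2}{\sqrt{\Var X_\ia\,\Var X_\jb}}\mper
\]
This middle expression is the single quantity to be sandwiched. That $\vec v_i$ lies in the unit ball follows from $\Cov(X_\ia,X_\ib)^2\le\Var X_\ia\,\Var X_\ib$, which bounds each summand of $\norm{\vec v_i}^2$ by $\sqrt{\Var X_\ia\,\Var X_\ib}$, so $\norm{\vec v_i}^2\le\tfrac1k\Paren{\sum_a\sqrt{\Var X_\ia}}^2\le\sum_a\Var X_\ia=\Var X_i\le1$ by Cauchy--Schwarz.

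For the upper bound, apply the AM--GM inequality $1/\sqrt{\Var X_\ia\,\Var X_\jb}\le\tfrac12\bigparen{1/\Var X_\ia+1/\Var X_\jb}$ term by term to the expression for $\iprod{\vec v_i,\vec v_j}$ (multiplying through by $\Cov(X_\ia,X_\jb)^2\ge0$); this yields the right-hand side of the lemma. For the lower bound, write $|\Cov(X_\ia,X_\jb)|=\bigparen{|\Cov(X_\ia,X_\jb)|\,(\Var X_\ia\Var X_\jb)^{-1/4}}\cdot(\Var X_\ia\Var X_\jb)^{1/4}$ and apply Cauchy--Schwarz over $(a,b)\in[k]^2$, obtaining $\Paren{\sum_{a,b}|\Cov(X_\ia,X_\jb)|}^2\le k\,\iprod{\vec v_i,\vec v_j}\cdot\Paren{\sum_a\sqrt{\Var X_\ia}}\Paren{\sum_b\sqrt{\Var X_\jb}}$; since each of the two last sums is at most $\sqrt k$ (Cauchy--Schwarz again, using $\sum_a\Var X_\ia\le1$), the right side is at most $k^2\iprod{\vec v_i,\vec v_j}$, which rearranges to the claimed lower bound. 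I expect no real obstacle here: the one design-sensitive choice is the weighting $k^{-1/2}\sqrt{\Var X_\ia}$ on $z_\ia$, which is exactly what lets the same inner product absorb the AM--GM slack on one side and the Cauchy--Schwarz slack on the other; the only other thing to watch is the zero-variance bookkeeping, and everything else is routine.
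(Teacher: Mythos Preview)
Your proposal is correct and is essentially the paper's own proof: your $\sqrt{\Var X_\ia}\,z_\ia$ equals the paper's $u_\ia^{\tensor 2}/\norm{u_\ia}$, so the vectors $\vec v_i$ and the resulting inner-product formula coincide, and your AM--GM upper bound, Cauchy--Schwarz lower bound, and unit-ball check are the same arguments. The only difference is that you spell out the zero-variance case, which the paper leaves implicit.
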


\begin{proof}
  Let $\set{u_\ia }$ be the collection of vectors such that
  $\iprod{u_\ia,u_\jb} = \Cov(X_\ia,X_\jb)$.
  Note that $\snorm{u_\ia}=\Var X_\ia$.
  Define $\vec v_i \seteq k^{-1/2} \sum_{a} u_\ia ^{\tensor 2}/\norm{u_\ia}$.
  (Here, $\bar x$ denote the unit vector in direction~$x$.)
  The inner product of $\vec v_i$ and $\vec v_j$ is equal to
  \begin{displaymath}
    \iprod{\vec v_i,\vec v_j}
    =\tfrac1k\sum_{a,b} \tfrac1 {\sqrt{\Var X_\ia\Var X_\jb}} \Cov(X_\ia,X_\jb)^2\mper
  \end{displaymath}
  Using the inequality between arithmetic mean and geometric mean, we have
  $\paren{\Var X_\ia\Var X_\jb}^{-1/2}\le \paren{1/\Var X_\ia + 1/\Var
    X_\jb}/2$, which implies the desired upper bound on the inner product
  $\iprod{\vec v_i,\vec v_j}$.
  On the other hand,  by Cauchy--Schwartz,
  \begin{displaymath}
    \Bigparen{\sum_{a,b} \bigabs{\Cov(X_\ia,X_\jb)}}^2
    \le \sum_{a,b}  {\sqrt{\Var X_\ia\Var X_\jb}} \cdot \sum_{a,b} \tfrac1
    {\sqrt{\Var X_\ia\Var X_\jb}} \Cov(X_\ia,X_\jb)
    \mper
  \end{displaymath}
  Since $\sum_a \Var X_\ia \le \sum_a \E X_\ia^2 =1$ , we have $\sum_a
  \sqrt{\Var X_\ia} \le \sqrt k$ for all vertices $i\in V$ (by
  Cauchy--Schwartz).
  Therefore,
  \begin{displaymath}
    \Bigparen{\sum_{a,b} \bigabs{\Cov(X_\ia,X_\jb)}}^2
    \le k \sum_{a,b} \tfrac1
    {\sqrt{\Var X_\ia\Var X_\jb}} \Cov(X_\ia,X_\jb)\mcom
  \end{displaymath}
  which gives the desired lower bound on the inner product $\iprod{\vec
    v_i,\vec v_j}$.
  It remains to argue that the vectors $\vec v_1,\ldots,\vec v_n$ are
  contained in the unit ball.
  Since $\Cov(X_\ia,X_{ib})^2\le \Var X_\ia \Var X_{ib}$, we can upper
  bound $\snorm{\vec v_i}\le k^{-1} \sum_{a,b} \sqrt{\Var X_\ia \Var
    X_{ib}}\le 1$ (using $\sum_a \sqrt{\Var X_\ia}\le \sqrt k$).
\end{proof}

\paragraph{Local Correlation vs Global Correlation on Low-Rank Graphs}
\label{sec:local-vs-global}

The following lemma shows that local correlation (correlation across
edges of a graph) implies global correlation (correlation
between random vertices) if the graph has low threshold rank.
(Proof in \pref{sec:local-global-lowrank}.)

\restatelemma{lem:local-vs-global-low-rank}

\paragraph{Putting Things Together}
\label{sec:putt-things-togeth}

The following lemma shows that either independent sampling is
statistically close to correlated sampling across edges of a graph
or the typical variance of a vertex decreases non-trivially by conditioning
on a random vertex.

\begin{lemma}
  \label{lem:one-step-conditioning}
  Let $G$ be a regular $n$-vertex graph and $\e$ be the expected
  statistical distance between independent and correlated sampling across
  the edges of $G$,
  \begin{displaymath}
    \e = \E_{ij\sim G}~
    \Normo{\set{X_i X_j}- \set{X_i}\set{X_j}}
  \end{displaymath}
  Further, suppose that the matrix $\Paren{\Cov(X_\ia,X_\jb)}_{i\in V,\, a\in[k]}$
  is positive semidefinite.
  Then, conditioning on a random vertex decreases the variances by
  \begin{displaymath}
    \E_{i,j\in V} \E_{\set{X_j}} \Var\Brac{X_i\Mid X_j}
    \le \E_{i\in V} \Var X_i - \Omega(\e^2/k)/\rank_{\ge \Omega(\e/k)^2}(G)\mper
  \end{displaymath}
\end{lemma}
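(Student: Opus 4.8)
The plan is to \emph{sandwich} the vectors produced by the tensoring trick, \pref{lem:general-tensoring-trick}, between the two quantities we care about: their pairwise inner products sit \emph{above} the local correlation across the edges of $G$ (so that the low-rank local-to-global estimate \pref{lem:local-vs-global-low-rank} applies to them), and at the same time sit \emph{below} the average variance decrement obtained by conditioning on a random vertex (via \pref{lem:conditional-variance-correlation}). Chaining the two bounds then yields the lemma.

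Concretely I would proceed as follows. First, rewrite the hypothesis: by \pref{lem:stat-dist-correlation}, $\e = \E_{ij\sim G}\sum_{(a,b)\in[k]^2}\bigabs{\Cov(X_\ia,X_\jb)}$. Since $\Paren{\Cov(X_\ia,X_\jb)}_{i\in V,\,a\in[k]}$ is positive semidefinite by hypothesis, \pref{lem:general-tensoring-trick} supplies vectors $\vec v_1,\dots,\vec v_n$ in the unit ball with
\begin{displaymath}
  \tfrac1{k^2}\Bigparen{\textstyle\sum_{a,b}\bigabs{\Cov(X_\ia,X_\jb)}}^{2}
  \;\le\; \iprod{\vec v_i,\vec v_j}
  \;\le\; \tfrac1k\sum_{a,b}\tfrac12\Paren{\tfrac1{\Var X_\ia}+\tfrac1{\Var X_\jb}}\Cov(X_\ia,X_\jb)^{2}\mper
\end{displaymath}
From the construction $\vec v_i=k^{-1/2}\sum_a u_\ia^{\tensor 2}/\norm{u_\ia}$ one has $\iprod{\vec v_i,\vec v_j}=k^{-1}\sum_{a,b}\iprod{u_\ia,u_\jb}^{2}/(\norm{u_\ia}\norm{u_\jb})\ge 0$; this nonnegativity is what makes the absolute value in the conclusion of \pref{lem:local-vs-global-low-rank} harmless. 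Averaging the left inequality over a uniformly random edge and using convexity of $t\mapsto t^{2}$ gives $\E_{ij\sim G}\iprod{\vec v_i,\vec v_j}\ge \tfrac1{k^2}\bigparen{\E_{ij\sim G}\sum_{a,b}\bigabs{\Cov}}^{2}=(\e/k)^{2}=:\rho$, and plugging this $\rho$ into \pref{lem:local-vs-global-low-rank} yields
\begin{displaymath}
  \E_{i,j\in V}\iprod{\vec v_i,\vec v_j}\;=\;\E_{i,j\in V}\bigabs{\iprod{\vec v_i,\vec v_j}}\;\ge\;\Omega(\rho)\,\big/\,\rank_{\ge\Omega(\rho)}(G)\mper
\end{displaymath}

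For the matching upper bound, average the right inequality over $i,j\in V$ chosen uniformly and independently. Because that measure is exchangeable in $i,j$ and $\Cov$ is symmetric, the terms $1/\Var X_\ia$ and $1/\Var X_\jb$ contribute equally, so $\E_{i,j\in V}\iprod{\vec v_i,\vec v_j}\le \E_{i,j\in V}\tfrac1k\sum_{a,b}\Cov(X_\ia,X_\jb)^{2}/\Var X_\jb$. By \pref{lem:conditional-variance-correlation}, the bracketed quantity is at most $\Var X_i-\E_{\set{X_j}}\Var\Brac{X_i\Mid X_j}$, hence $\E_{i,j\in V}\iprod{\vec v_i,\vec v_j}\le \E_{i\in V}\Var X_i-\E_{i,j\in V}\E_{\set{X_j}}\Var\Brac{X_i\Mid X_j}$. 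Comparing with the display from the previous paragraph and rearranging gives the stated inequality (after tracking the polynomial in $\e$ and $k$ through the two uses of \pref{lem:general-tensoring-trick} and of \pref{lem:conditional-variance-correlation}; the most naive accounting already produces a bound of exactly this shape).

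I do not expect a serious obstacle here: the argument is essentially a bookkeeping exercise stringing together the four preceding lemmas. The one step that genuinely needs care is ensuring that the scalar fed into \pref{lem:local-vs-global-low-rank} is a legitimate lower bound on the edge-correlation of the $\vec v_i$ without discarding more than a $\poly(k)$ factor --- this is exactly where the ``weighted'' form of the lower bound in \pref{lem:general-tensoring-trick} (the factor $1/k^{2}$ rather than a cruder $1/\poly(k)$) is used, and where the final polynomial dependence on $\e/k$, in both the numerator and the subscript $\rank_{\ge\,\cdot}(G)$, is pinned down. A secondary point is that everything above only ever conditions on a single vertex, so consistency of the local distributions together with positive semidefiniteness of the covariance matrices --- the standing hypotheses of the lemma --- is all that is needed; no further appeal to the number of rounds is required.
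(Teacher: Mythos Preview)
Your proposal is correct and follows essentially the same route as the paper: invoke \pref{lem:stat-dist-correlation} to rewrite $\e$, feed the vectors from \pref{lem:general-tensoring-trick} into \pref{lem:local-vs-global-low-rank} for the lower bound on global correlation, and then cash in the upper inequality of \pref{lem:general-tensoring-trick} together with \pref{lem:conditional-variance-correlation} (after symmetrizing in $i,j$) for the variance decrement. Your explicit remark that $\iprod{\vec v_i,\vec v_j}\ge 0$ and your handling of the symmetry step are in fact more carefully spelled out than in the paper.
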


\begin{proof}
  Let $\vec v_1,\ldots,\vec v_n$ be the vectors constructed in
  \pref{lem:general-tensoring-trick}.
  By \pref{lem:general-tensoring-trick} and
  \pref{lem:stat-dist-correlation}, the local correlation of these vectors
  is at least
  \begin{displaymath}
    \E_{ij\sim G}\iprod{\vec v_i,\vec v_j}
    \ge \tfrac 1{k^2} \E_{ij\sim G} \Normo{\set{X_i X_j}-
      \set{X_i}\set{X_j}}^2
    \ge \e^2/k^2\mper
  \end{displaymath}
  (The last step also uses Cauchy--Schwartz.)
  Hence, \pref{lem:local-vs-global-low-rank} implies the following lower
  bound on the global correlation of these vectors,
  \begin{displaymath}
    \E_{i,j\in V} \abs{\iprod{\vec v_i,\vec v_j}}\ge \Omega(\e/k)^2
    /\rank_{\ge \Omega(\e/k)^2}(G)\mper
  \end{displaymath}
  \pref{lem:general-tensoring-trick} and
  \pref{lem:conditional-variance-correlation} allows us to relate the
  expected decrement of the variances to the global correlation of the
  vectors $\vec v_1,\ldots, \vec v_n$,
  \begin{displaymath}
    \E_{i,j\in V} \Bigbrac{\Var X_i - \E_{\set{X_j}}\Var\brac{X_i\mid X_j} }
    \ge k \cdot \E_{i,j\in V} \abs{\iprod{\vec v_i,\vec v_j}}\mcom
  \end{displaymath}
  which gives the desired upper bound on $\E_{i,j\in V}\E_{\set{X_j}}\Var\brac{X_i\mid X_j}$.
\end{proof}

The following lemma asserts that if the constraint graph has low threshold
rank then there exists a partial assignment $x_S$ to a small set $S$ of
vertices such that independent sampling conditioned on this assignment
$x_S$ gives almost the same value as correlated sampling (without
conditioning on the assignment $x_S$).

  \begin{mybox}
\begin{algorithm}[Propagation Sampling]\mbox{}\label{alg:propagation-sampling}
  \begin{description}
  \item[Input:] $r$-local random variables $X_1,\ldots,X_n$ over $[k]$
  \item[Output:] (global) distribution over assignments $x\in [k]^V$.
  \end{description}
  \begin{enumerate}
  \item Choose $m \in \set{1,\ldots,r}$ at random.
  \item Sample a random set of ``seed vertices'' $S\in V^m$.
    (Repeated vertices are allowed.)
  \item Sample a assignment $x_S\in [k]^S$ for $S$ according to its local
    distribution $\set{X_S}$.

  \item For every other vertex $i\in V\sm S$, sample a label $x_i\in[k]$
    according to the local distribution for $S\cup \set{i}$ conditioned on
    the assignment $x_S$ for $S$.
  \end{enumerate}
\end{algorithm}
\end{mybox}

\begin{theorem}
  \label{thm:propagation-sampling}
  \label{thm:general-local-to-global}
  Let $X_1,\ldots,X_n$ be $r$-local random variables and let
  $X'_1,\ldots,X'_n$ be the random variables produced by
  \pref{alg:propagation-sampling} on input $X_1,\ldots,X_n$.
  Suppose that the matrices $\paren{\Cov(X_\ia,X_\jb\mid X_S=x_S)}_{i\in V,\, a\in[k]}$
  are positive semidefinite for every set $S\sse V$ with $\card S \le r$
  and local assignment $x_S\in [k]^S$.
  Then, if $r \gg   O(k/\e^4)\cdot \rank_{\Omega(\e/k)^2}(G)$,
  \begin{displaymath}
    \E_{ij\sim G} \Normo{\set{X_i X_j}-\set{X'_iX'_j}}
    \le \e
    \mper
  \end{displaymath}
\end{theorem}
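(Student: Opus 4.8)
I would carry out, in the general $k$-ary setting, the potential argument sketched for \maxcut in \pref{sec:warmup}. For $m\ge 0$, set
\begin{align*}
  \Phi_m&\defeq\E_{S\in V^m}\,\E_{\set{X_S}}\,\E_{i\in V}\Var\Brac{X_i\Mid X_S}\mcom\\
  \e_m&\defeq\E_{S\in V^m}\,\E_{\set{X_S}}\,\E_{ij\sim G}\Normo{\set{X_iX_j\mid X_S}-\set{X_i\mid X_S}\set{X_j\mid X_S}}\mper
\end{align*}
Thus $\Phi_m$ is the expected average conditional variance, and $\e_m$ the expected ``local correlation'' across edges, after conditioning on $m$ \iid uniform seed vertices $S$ and a sample $x_S\sim\set{X_S}$ (inside $\e_m$, $\set{X_i\mid X_S}\set{X_j\mid X_S}$ is the product of the conditional marginals). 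Note $\Phi_m\in[0,1]$ and that $\Phi_m$ is non-increasing in $m$, since conditioning cannot increase variance; also write $(t)_+\defeq\max(t,0)$. The first step is to reduce the theorem to the inequality $\tfrac1r\sum_{m=1}^r\e_m\le\e$. Indeed, by consistency of the local distributions $\set{X_iX_j}=\E_{\set{X_S}}\set{X_iX_j\mid X_S}$ for every $S$ with $\card S\le r$, and conditioned on the choices $(m,S,x_S)$ made by \pref{alg:propagation-sampling} the output pair $(X'_i,X'_j)$ has law exactly $\set{X_i\mid X_S=x_S}\set{X_j\mid X_S=x_S}$ (also when $i$ or $j$ lies in $S$, where the corresponding marginal is a point mass). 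Writing the difference of laws as a mixture over $(m,S,x_S)$ and using the triangle inequality for $\normo{\cdot}$ then gives $\E_{ij\sim G}\Normo{\set{X_iX_j}-\set{X'_iX'_j}}\le\E_{m\in\set{1,\dots,r}}\e_m=\tfrac1r\sum_{m=1}^r\e_m$.

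The heart of the argument is a one-step decrement. Fix $m\le r-1$ and a seed $(S,x_S)$, and apply \pref{lem:one-step-conditioning} to the conditioned $r$-local variables $\set{X_\ell\mid X_S=x_S}_{\ell\in V}$, whose pairwise-covariance matrix is positive semidefinite by the theorem's hypothesis. If $\e_{S,x_S}$ denotes the local correlation (across the edges of $G$) of this conditioned instance, the lemma says that conditioning on one further uniformly random vertex decreases the expected average variance by at least $\Omega(\e_{S,x_S}^2/k)/\rank_{\ge\Omega(\e_{S,x_S}/k)^2}(G)$. I would then average over $(S,x_S)$: by the chain rule for the joint law of $(X_S,X_j)$, the averaged left-hand side is $\Phi_{m+1}$ and the averaged undecremented right-hand term is $\Phi_m$. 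Since $\rank_{\ge\tau}(G)$ is non-increasing in $\tau$, for every seed with $\e_{S,x_S}\ge\e$ the denominator is at most $R\defeq\rank_{\ge\Omega(\e/k)^2}(G)$; discarding the nonnegative contributions of seeds with $\e_{S,x_S}<\e$ and applying Cauchy--Schwarz, $\E[\e_{S,x_S}^2\Ind[\e_{S,x_S}\ge\e]]\ge(\E[\e_{S,x_S}\Ind[\e_{S,x_S}\ge\e]])^2\ge(\e_m-\e)_+^2$ (the last step because seeds with $\e_{S,x_S}<\e$ contribute less than $\e$ to $\e_m=\E_{S,x_S}\,\e_{S,x_S}$), I obtain a universal constant $c_0>0$ with
\begin{displaymath}
  \Phi_m-\Phi_{m+1}\ \ge\ \tfrac{c_0}{kR}\,(\e_m-\e)_+^2\qquad\text{for all }0\le m\le r-1\mper
\end{displaymath}

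Finally, telescoping this and using $0\le\Phi_r\le\Phi_0\le1$ yields $\sum_{m=0}^{r-1}(\e_m-\e)_+^2\le kR/c_0$, whence by Cauchy--Schwarz over $m$, $\tfrac1r\sum_{m=0}^{r-1}(\e_m-\e)_+\le\sqrt{kR/(c_0r)}$ and therefore $\tfrac1r\sum_{m=0}^{r-1}\e_m\le\e+\sqrt{kR/(c_0r)}$; since $\card{\e_r-\e_0}\le2$ this also gives $\tfrac1r\sum_{m=1}^r\e_m\le\e+\sqrt{kR/(c_0r)}+2/r$. Rerunning the argument with $\e/4$ in place of $\e$ (which changes $R$ only by a constant factor in the threshold) and taking $r$ to be at least a sufficiently large multiple of $k\cdot\rank_{\ge\Omega(\e/k)^2}(G)/\e^2$ --- which is implied by any $r\gg O(k/\e^4)\cdot\rank_{\Omega(\e/k)^2}(G)$ --- makes the right-hand side at most $\e$; combined with the reduction of the first paragraph, this proves the theorem. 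The precise $\poly(\e,k)$ constants are obtained by chasing the bounds in \pref{lem:stat-dist-correlation}, \pref{lem:conditional-variance-correlation}, \pref{lem:general-tensoring-trick}, \pref{lem:local-vs-global-low-rank}, and \pref{lem:one-step-conditioning}.

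\textbf{Main obstacle.} The one genuinely delicate point is the averaging in the decrement step: \pref{lem:one-step-conditioning} controls the variance drop in terms of the conditioned instance's \emph{own} local correlation $\e_{S,x_S}$, and the expected drop at level $m$ a priori depends on the whole distribution of $\e_{S,x_S}$ over random seeds, not just on its mean $\e_m$ --- and since $t\mapsto\Omega(t^2/k)/\rank_{\ge\Omega(t/k)^2}(G)$ need not be convex, Jensen is unavailable. Splitting the seeds at the threshold $\e$ and invoking Cauchy--Schwarz is exactly what converts the per-seed guarantee into the clean, telescopable bound above; everything else is routine given the already-established lemmas.
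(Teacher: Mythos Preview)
Your proposal is correct and follows the paper's proof almost exactly: you use the same potential $\Phi_m$, the same edge-correlation quantity $\e_m$, the same reduction to $\E_{m\in[r]}\e_m\le\e$ via the triangle inequality, and the same invocation of \pref{lem:one-step-conditioning} on conditioned instances. The only difference is in how the per-seed decrement is aggregated into a decrement of $\Phi_m$.

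The paper thresholds: if $\e_m\ge\e/2$, then a $\Omega(\e)$ fraction of seeds $(S,x_S)$ have conditioned local correlation $\ge\Omega(\e)$, so $\Phi_m-\Phi_{m+1}\ge\Omega(\e^3/k)/R$; this bounds the number of ``bad'' indices $m$ by $O(kR/\e^3)$ and forces $r\gg kR/\e^4$. You instead keep the quadratic dependence, split seeds at the threshold $\e$ to fix the rank denominator at $R$, and apply Cauchy--Schwarz to get $\Phi_m-\Phi_{m+1}\ge \tfrac{c_0}{kR}(\e_m-\e)_+^2$; telescoping and a second Cauchy--Schwarz over $m$ then give $\tfrac1r\sum_m\e_m\le\e+\sqrt{kR/(c_0r)}$. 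Your aggregation is slightly slicker and in fact yields a better dependence (you only need $r\gg kR/\e^2$, as you note), but the two arguments are structurally the same and differ only in this bookkeeping step.
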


\begin{proof}
  Let us define $\e_m$,
  \begin{displaymath}
    \e_m = \E_{S\in V^m} \E_{\set{X_S}} \E_{ij\sim G} \Normo{\,\set{X_i X_j \mid X_S} -
      \set{X_i\mid X_S}\set{X_j\mid X_S}\,}\mper
  \end{displaymath}
  To prove the current theorem it is enough to show that $\E_{m\in[r]} \e_m\le \e$.
  For $m\le r$, define a non-negative potential $\Phi_m$ as follows
  \begin{displaymath}
    \Phi_m
    \seteq \E_{S\in V^m} \E_{\set{X_S}} \E_{i\in V} \Var(X_i\mid X_S)
    \mper
  \end{displaymath}
  Let $m\in [r]$. Suppose $\e_m\ge \e/2$.
  Then,
  \begin{displaymath}
    \Prob[S\in V^m, \set{X_S}]{\E_{ij\sim G} \Normo{\,\set{X_i X_j \mid X_S} -
      \set{X_i\mid X_S}\set{X_j\mid X_S}\,}\ge\e/2 }\ge\e/2\mper
  \end{displaymath}
  Therefore, by \pref{lem:one-step-conditioning},
  \begin{displaymath}
    \E_{S\in V^m, \set{X_S}} \Bigbrac{\E_{i\in V} \Var\brac{X_i\mid X_S} -
      \E_{i,j\in V} \Var\brac{X_i\mid X_S,X_j} } \ge \e/2 \cdot
    \Omega(\e^2/k) / \rank_{\ge \Omega(\e/k)^2}(G)\mper
  \end{displaymath}
  In other words,
  \begin{math}
    \Phi_{m+1} \le \Phi_{m} - \Omega(\e^3/k)/\rank_{\ge \Omega(\e/k)^2}(G).
  \end{math}
  Since $1\ge \Phi_1\ge \ldots\ge \Phi_r\ge 0$, it follows that there are
  at most
  \begin{math}
    O(k/\e^3) \cdot \rank_{\ge \Omega(\e/k)^2}(G)
  \end{math}
  indices $m\in [r]$ such that $\e_m\ge \e/2$.
  Therefore, if $r\gg O(k/\e^4) \cdot \rank_{\ge \Omega(\e/k)^2}(G)$, we have
  \begin{displaymath}
    \E_{m\in[r]} \e_m \le \e/2 + \tfrac{1}{r} \cdot O(k/\e^3) \cdot
    \rank_{\ge \Omega(\e/k)^2}(G)\le \e\mper
  \end{displaymath}
  Finally, by the triangle inequality,
  \begin{align*}
    &\E_{ij\sim G} \Normo{\set{X_i X_j}-\set{X'_iX'_j}}
    \\
    &= \E_{ij\sim G} \Bignormo{\, \Bigparen{\E_{m\in[r]}\E_{S\in
          V^m}\E_{\set{X_S}} \set{X_iX_j\mid X_S} }-
      \Bigparen{\E_{m\in[r]}\E_{S\in V^m}\E_{\set{X_S}} \set{X_i\mid
          X_S}\set{X_j\mid X_S} } \, }
    \\
    & \le \E_{ij\sim G} \E_{m\in[r]}\E_{S\in V^m}\E_{\set{X_S}}
    \Normo{\set{X_iX_j\mid X_S} - \set{X_i\mid X_S}\set{X_j\mid X_S} }
    = \E_{m\in [r]} \e_m \le \e\mper\qedhere
  \end{align*}
\end{proof}

The following theorem directly implies \pref{thm:main}.

\begin{theorem}
  Let $\e>0$ and $r=O(k)\cdot \rank_{\ge \Omega(\e/k)^2}(G)/\e^4$.
  Suppose that the $r$-round Lasserre value of the \maxtwocsp instance
  $\Ins$ is $\sigma$.
  Then, given an optimal $r$-round Lasserre solution,
  \pref{alg:propagation-sampling} (Propagation Sampling) outputs an
  assignment with expected value at least $\sigma-\e$ for $\Ins$ .
\end{theorem}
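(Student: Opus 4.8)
The plan is to obtain this theorem as an essentially immediate corollary of \pref{thm:propagation-sampling}, combined with the elementary fact that closeness of the edge-marginal distributions in $\ell_1$ forces closeness of the CSP objective values. So the proof will have almost no new analytic content; all of the work sits in \pref{thm:propagation-sampling} and the correlation lemmas feeding into it.

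First I would unpack the setup. An optimal $r$-round Lasserre solution for $\Ins$ supplies $r$-local random variables $X_1,\ldots,X_n$ over $[k]$ together with vectors $v_{S,\alpha}$, and feasibility ($\iprod{v_{S,\alpha},v_{T,\beta}} = \Prob{X_S=\alpha,~X_T=\beta}$) guarantees, for every $S\sse V$ with $\card S\le r$ and every $x_S\in[k]^S$, that the matrix $\paren{\Cov(X_\ia,X_\jb\mid X_S=x_S)}_{i\in V,\,a\in[k]}$ is positive semidefinite (up to the shift by conditional expectations it is a Gram matrix of the vectors for the sets $S\cup\set i$). This is precisely the hypothesis required to invoke \pref{thm:propagation-sampling}. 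Moreover, by construction of the relaxation, the $r$-round Lasserre value is $\sigma=\E_{(i,j,\Pi)\sim\Ins}\Prob{(X_i,X_j)\in\Pi}$.

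Next I would run \pref{alg:propagation-sampling} on the input $X_1,\ldots,X_n$, obtaining the (genuinely jointly distributed) random variables $X'_1,\ldots,X'_n$ and an assignment $x\in[k]^V$, so that the expected value of the output is $\E[\val_\Ins(x)]=\E_{(i,j,\Pi)\sim\Ins}\Prob{(X'_i,X'_j)\in\Pi}$. Choosing the implicit constant in $r=O(k)\cdot\rank_{\ge\Omega(\e/k)^2}(G)/\e^4$ large enough that the precondition $r\gg O(k/\e^4)\cdot\rank_{\Omega(\e/k)^2}(G)$ of \pref{thm:propagation-sampling} is met, that theorem yields $\E_{ij\sim G}\Normo{\set{X_iX_j}-\set{X'_iX'_j}}\le\e$. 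Finally I combine: since the constraint graph $G_\Ins=G$ is exactly the edge marginal of $\Ins$, we may write $\sigma-\E[\val_\Ins(x)]=\E_{(i,j,\Pi)\sim\Ins}\bigparen{\Prob{(X_i,X_j)\in\Pi}-\Prob{(X'_i,X'_j)\in\Pi}}$; for each fixed edge $(i,j)$ and predicate $\Pi\sse[k]\times[k]$ the two probabilities are the masses that $\set{X_iX_j}$ and $\set{X'_iX'_j}$ assign to the set $\Pi$, so their difference is at most $\Normo{\set{X_iX_j}-\set{X'_iX'_j}}$. Averaging over $(i,j,\Pi)\sim\Ins$ (whose edge marginal is $ij\sim G$) gives $\card{\sigma-\E[\val_\Ins(x)]}\le\E_{ij\sim G}\Normo{\set{X_iX_j}-\set{X'_iX'_j}}\le\e$, i.e.\ $\E[\val_\Ins(x)]\ge\sigma-\e$.

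The only step needing care — and the nearest thing to an obstacle — is bookkeeping rather than mathematics: making the constant hidden in the stated bound $r=O(k)\cdot\rank/\e^4$ consistent with the one demanded by \pref{thm:propagation-sampling}, checking that the ``PSD under conditioning on up to $r$ vertices'' hypothesis is genuinely delivered by an $r$-round solution (\pref{alg:propagation-sampling} conditions on at most $r$ seed vertices and only ever inspects sets of size $\le r+2$, which is what $r$-local distributions cover), and confirming that one is comparing the correct pair of edge distributions. Everything else is a direct chaining of \pref{thm:propagation-sampling} with the triangle-inequality/statistical-distance bound.
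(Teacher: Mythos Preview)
Your proposal is correct and follows essentially the same approach as the paper: both extract the $r$-local variables from the Lasserre solution, note the PSD-under-conditioning property, invoke \pref{thm:propagation-sampling} to bound $\E_{ij\sim G}\Normo{\set{X_iX_j}-\set{X'_iX'_j}}$, and then compare objective values via statistical distance. The only cosmetic difference is that the paper keeps the sharper factor $\tfrac12$ in front of the $\ell_1$ distance when bounding the probability difference, but this does not affect the conclusion.
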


\begin{proof}
  An  optimal $r$-round Lasserre solution gives rise to $r$-local random
  variables $X_1,\ldots,X_n$ over $[k]$.
  Let $X_\ia$ be the indicator variable of the event $X_i=a$.
  The matrices $\set{\Cov(X_\ia,X_\jb\mid X_S=x_S)}_{i,j\in V,\,a,b\in
    [k]}$ are positive semidefinite for all sets $S\sse V$ with $\card S
  \le r$ and local assignments $x_S\in [k]^S$.
  Furthermore, the Lasserre solution satisfies
  \begin{displaymath}
    \E_{(i,j,\Pi)\sim \Ins)} \Prob[\set{X_iX_j}]{(X_i,X_j)\in \Pi}=\sigma\mper
  \end{displaymath}
  Let $X_1',\ldots,X'_n$ be the jointly-distributed (global) random
  variables in \pref{thm:general-local-to-global}.
  By \pref{thm:general-local-to-global}, we can estimate the expected value
  of the assignment $X'_1,\ldots,X'_n$ as
  \begin{align*}
    \E_{X'_1,\ldots,X'_n} \Prob[(i,j,\Pi)\sim \Ins]{(X'_i,X'_j)\in \Pi}
    &= \E_{(i,j,\Pi)\sim \Ins} \Prob[\set{X'_iX'_j}]{(X'_i,X'_j)\in \Pi}
    \\
    &\ge \E_{(i,j,\Pi)\sim \Ins} \Prob[\set{X_iX_j}]{(X_i,X_j)\in \Pi}
    - \tfrac 12 \E_{ij\sim G} \Normo{\, \set{X_iX_j} - \set{X'_iX'_j}\,}
    \\
    &\ge \sigma-\e\mper\qedhere
  \end{align*}

\end{proof}

\subsection{Special case of Unique Games}

The following lemma is a version of \pref{lem:general-tensoring-trick}
tailored towards \uniquegames.
The advantage of this version of the lemma is that the bounds are
independent of the alphabet size~$k$.

\begin{lemma}
  \torestate{
  \label{lem:ug-tensoring-trick}
  Let $X_1,\ldots,X_n$ be $r$-local random variables over $[k]$ and let
  $X_\ia$ be the indicator of the event $X_i=a$.
  Suppose that the matrix $\Paren{\Cov(X_\ia,X_\jb)}_{i\in V,\, a\in[k]}$
  is positive semidefinite.
  Then, there exists vectors $\vec v_1,\ldots,\vec v_n$ in the unit ball
  such that for all vertices $i,j\in V$ and permutations $\pi$ of $[k]$,
  \begin{displaymath}
    \Bigparen{\sum_{a\in [k]} \bigabs{\Cov(X_\ia,X_{j\,\pi(a)})}}^4
    \le \iprod{\vec v_i,\vec v_j}
    \le \sum_{(a,b)\in [k]^2} \tfrac{1}{2}(\tfrac1{\Var X_\ia}+\tfrac1{ \Var X_\jb})
    \Cov(X_\ia,X_\jb)^2\mper
  \end{displaymath}
}
\end{lemma}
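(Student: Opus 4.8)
Here is the plan for proving the lemma. I would run the tensoring trick from the proof of \pref{lem:general-tensoring-trick} but \emph{omit} its $k^{-1/2}$ scaling factor; the unique-games structure is what lets us get away with this. Two features will be used: that the labels of the edge $(i,j)$ are matched up by a single permutation $\pi$, and that for a fixed vertex $i$ the indicators $X_{i1},\dots,X_{ik}$ are indicators of pairwise disjoint events, so that $\Cov(X_\ia,X_{ib})=-(\E X_\ia)(\E X_{ib})$ for $a\neq b$. Concretely: using the positive-semidefiniteness hypothesis, pick vectors $\set{u_\ia}$ with $\iprod{u_\ia,u_\jb}=\Cov(X_\ia,X_\jb)$, so $\snorm{u_\ia}=\Var X_\ia$; discard the labels with $\Var X_\ia=0$ (then $u_\ia=0$ and $\Cov(X_\ia,\cdot)=0$, so they are irrelevant everywhere); and set
\[
  \vec v_i \;=\; \sum_{a}\frac{u_\ia^{\otimes 2}}{\norm{u_\ia}}
  \,,\qquad\text{so that}\qquad
  \iprod{\vec v_i,\vec v_j}=\sum_{a,b}\frac{\Cov(X_\ia,X_\jb)^2}{\sqrt{\Var X_\ia\,\Var X_\jb}}\,.
\]

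The upper bound then falls out termwise from the AM--GM inequality $(\Var X_\ia\,\Var X_\jb)^{-1/2}\le\tfrac12\bigparen{\tfrac1{\Var X_\ia}+\tfrac1{\Var X_\jb}}$, exactly as in \pref{lem:general-tensoring-trick}. For the lower bound I would fix $\pi$, retain only the (nonnegative) terms with $b=\pi(a)$, and apply Cauchy--Schwarz, getting
\[
  \iprod{\vec v_i,\vec v_j}\;\ge\;\sum_a\frac{\Cov(X_\ia,X_{j\pi(a)})^2}{\sqrt{\Var X_\ia\,\Var X_{j\pi(a)}}}\;\ge\;\frac{\bigparen{\sum_a\abs{\Cov(X_\ia,X_{j\pi(a)})}}^2}{\sum_a\sqrt{\Var X_\ia\,\Var X_{j\pi(a)}}}\,.
\]
Because $\pi$ is a permutation and $\sum_a\Var X_\ia\le\sum_a\E X_\ia^2=1$, the denominator is at most $1$ by another Cauchy--Schwarz; and since also $\sum_a\abs{\Cov(X_\ia,X_{j\pi(a)})}\le\sum_a\sqrt{\Var X_\ia\Var X_{j\pi(a)}}\le1$, the resulting bound $\iprod{\vec v_i,\vec v_j}\ge\bigparen{\sum_a\abs{\Cov(X_\ia,X_{j\pi(a)})}}^2$ may be weakened to the stated fourth power.

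The step I expect to be the main obstacle is checking $\snorm{\vec v_i}\le1$, which is no longer automatic once the $k^{-1/2}$ factor is gone. Write $p_\ia=\E X_\ia$ and split $\snorm{\vec v_i}=\sum_{a,b}\Cov(X_\ia,X_{ib})^2/\sqrt{\Var X_\ia\Var X_{ib}}$ into its diagonal and off-diagonal parts. The diagonal part equals $\sum_a\Var X_\ia=1-\sum_a p_\ia^2$. For $a\neq b$, disjointness gives $\Cov(X_\ia,X_{ib})=-p_\ia p_{ib}$, so the $(a,b)$ term equals $g_ag_b$ with $g_a:=p_\ia^{3/2}/\sqrt{1-p_\ia}$, making the off-diagonal part $\bigparen{\sum_a g_a}^2-\sum_a g_a^2$. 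The key inequality, which is Cauchy--Schwarz against $\sum_a p_\ia=1$ (writing $g_a=\tfrac{p_\ia}{\sqrt{1-p_\ia}}\cdot p_\ia^{1/2}$), is $\bigparen{\sum_a g_a}^2\le\sum_a\tfrac{p_\ia^2}{1-p_\ia}$; subtracting $\sum_a g_a^2=\sum_a\tfrac{p_\ia^3}{1-p_\ia}$ bounds the off-diagonal part by $\sum_a p_\ia^2$, so $\snorm{\vec v_i}\le\bigparen{1-\sum_a p_\ia^2}+\sum_a p_\ia^2=1$. The disjointness identity and the single matching permutation are precisely what is unavailable for a general $2$-CSP, which is why \pref{lem:general-tensoring-trick} seems to need its $k^{-1/2}$ loss.
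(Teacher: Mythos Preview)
Your proof is correct, and it actually proves a stronger lower bound than the lemma states: your argument gives $\iprod{\vec v_i,\vec v_j}\ge\bigl(\sum_a\abs{\Cov(X_\ia,X_{j\pi(a)})}\bigr)^2$ directly, which you then voluntarily weaken to the fourth power. Every step checks out, including the unit-ball computation, which is the delicate part of your approach.

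The paper takes a different route. It introduces auxiliary vectors $v_\ia = u_\ia + (\E X_\ia)\,v_\emptyset$ (so that $\iprod{v_\ia,v_\jb}=\E X_\ia X_\jb$) and sets $\vec v_i=\sum_a\norm{u_\ia}\,\bar u_\ia^{\otimes 2}\otimes\bar v_\ia^{\otimes 2}$. The extra $\bar v_\ia^{\otimes 2}$ factor makes the unit-ball check a one-liner: since $\iprod{v_\ia,v_{ib}}=0$ for $a\neq b$, the cross terms vanish and $\snorm{\vec v_i}=\sum_a\snorm{u_\ia}\le 1$. The price is that the lower bound now has to cope with the extra $\iprod{\bar v_\ia,\bar v_{j\pi(a)}}^2$ factors, which the paper handles with a two-step Cauchy--Schwarz that genuinely loses a square and only yields the fourth-power bound.

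So the trade-off is: the paper buys a trivial unit-ball check at the cost of a weaker lower bound and a more intricate Cauchy--Schwarz; you keep the simpler tensor, get the sharper (square) lower bound via a single Cauchy--Schwarz along the permutation, and pay for it with the explicit $g_a$ computation for $\snorm{\vec v_i}$. Your use of disjointness ($\Cov(X_\ia,X_{ib})=-p_\ia p_{ib}$) and the Cauchy--Schwarz against $\sum_a p_\ia=1$ is exactly the right substitute for the orthogonality the paper engineers via $v_\emptyset$.
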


The following theorem immediately implies \pref{thm:main-ug}.
Let $\Ins$ be a \uniquegames instance with alphabet size $k$ and constraint
graph $G$.

\begin{theorem}
  Let $\e>0$ and $r=k\cdot \rank_{\ge \Omega(\e^4)}(G)/\e^{O(1)}$.
  Suppose that the $r$-round Lasserre value of the \uniquegames instance
  $\Ins$ is $\sigma$.
  Then, given an optimal $r$-round Lasserre solution,
  \pref{alg:propagation-sampling} (Propagation Sampling) outputs an
  assignment with expected value at least $\sigma-\e$ for $\Ins$ .
\end{theorem}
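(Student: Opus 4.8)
The plan is to transcribe, essentially line for line, the proof of \pref{thm:propagation-sampling} and of the main 2-CSP theorem in \pref{sec:general-2-csp}, making exactly one substitution: use the \uniquegames-tailored \pref{lem:ug-tensoring-trick} in place of \pref{lem:general-tensoring-trick}, and correspondingly measure local correlation on an edge $(i,j)$ by the \emph{permutation-restricted} quantity $\sum_{a\in[k]}\abs{\Cov(X_\ia,X_{j\,\pi_{ij}(a)})}$ rather than by the full statistical distance $\Normo{\set{X_iX_j}-\set{X_i}\set{X_j}}=\sum_{a,b}\abs{\Cov(X_\ia,X_\jb)}$ (\pref{lem:stat-dist-correlation}). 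From an optimal $r$-round Lasserre solution for $\Ins$ we get $r$-local variables $X_1,\ldots,X_n$ over $[k]$ whose conditional covariance matrices $\set{\Cov(X_\ia,X_\jb\mid X_S=x_S)}$ are positive semidefinite for all $\card S\le r$, and with $\E_{ij\sim G}\Prob{X_j=\pi_{ij}(X_i)}=\sigma$. For a seed set $S$ and an assignment $x_S$ write $\delta(S,x_S)=\E_{ij\sim G}\sum_a\abs{\Cov(X_\ia,X_{j\,\pi_{ij}(a)}\mid X_S=x_S)}$ and $\e_m=\E_{S\in V^m}\E_{\set{X_S}}\delta(S,X_S)$. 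The permutation-restricted quantity does the two jobs the statistical distance did in \pref{sec:general-2-csp}: it upper bounds the per-edge rounding loss, since the scalar $\bigabs{\sum_a\Cov(X_\ia,X_{j\,\pi_{ij}(a)}\mid X_S=x_S)}$ is exactly the change in $\Prob{X_j=\pi_{ij}(X_i)\mid X_S=x_S}$ caused by replacing correlated conditional sampling with independent conditional sampling, and this is at most $\sum_a\abs{\Cov(X_\ia,X_{j\,\pi_{ij}(a)}\mid X_S=x_S)}$; and, as explained next, it lower bounds $\E_{ij\sim G}\iprod{\vec v_i,\vec v_j}$ for the vectors of \pref{lem:ug-tensoring-trick}.

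For the second job, let $\vec v_1,\ldots,\vec v_n$ be the unit-ball vectors given by \pref{lem:ug-tensoring-trick}. Plugging the permutation $\pi=\pi_{ij}$ on the edge $(i,j)$ into its lower bound gives $\iprod{\vec v_i,\vec v_j}\ge\bigparen{\sum_a\abs{\Cov(X_\ia,X_{j\,\pi_{ij}(a)})}}^4$, so by convexity of $t\mapsto t^4$ we get $\E_{ij\sim G}\iprod{\vec v_i,\vec v_j}\ge\delta^4$ with $\delta=\delta(\emptyset,\emptyset)$. The decisive point is that the threshold $\Omega(\delta^4)$ here is \emph{independent of $k$}, whereas routing through \pref{lem:general-tensoring-trick} would only give threshold of order $(\delta/k)^2$. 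Applying \pref{lem:local-vs-global-low-rank} then yields $\E_{i,j\in V}\abs{\iprod{\vec v_i,\vec v_j}}\ge\Omega(\delta^4)/\rank_{\ge\Omega(\delta^4)}(G)$, and combining the upper bound of \pref{lem:ug-tensoring-trick} with \pref{lem:conditional-variance-correlation}, applied symmetrically to $i$ and $j$, turns this into a \uniquegames analogue of \pref{lem:one-step-conditioning}: if $\delta\ge\delta_0$ then conditioning on a uniformly random vertex decreases the average variance $\E_{i\in V}\Var X_i$ by at least $\Omega(\delta_0^4/k)/\rank_{\ge\Omega(\delta_0^4)}(G)$. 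The lone factor $1/k$ here — present in the variance gain but not in the threshold, because the upper bound in \pref{lem:ug-tensoring-trick} has no $1/k$ prefactor while \pref{lem:conditional-variance-correlation} does — is exactly where $k$ ends up in the final bound $r=k\cdot\rank_{\ge\Omega(\e^4)}(G)/\e^{O(1)}$. The same statement applies verbatim after any conditioning $X_S=x_S$ with $\card S\le r-1$, since those conditional distributions also have positive-semidefinite covariance matrices.

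The rest is the potential argument of \pref{thm:propagation-sampling} word for word. Put $\Phi_m=\E_{S\in V^m}\E_{\set{X_S}}\E_{i\in V}\Var(X_i\mid X_S)$; by Cauchy--Schwarz (using $\sum_a\Var X_\ia\le1$ and that $\pi_{ij}$ is a permutation) each summand $\sum_a\abs{\Cov(X_\ia,X_{j\,\pi_{ij}(a)}\mid X_S)}$ lies in $[0,1]$, so if $\e_m\ge\e/2$ then a Markov-type estimate puts $\delta(S,X_S)\ge\e/4$ on an $\e/4$-fraction of the pairs $(S,x_S)$, and the one-step lemma applied there gives $\Phi_{m+1}\le\Phi_m-\Omega(\e^5/k)/\rank_{\ge\Omega(\e^4)}(G)$. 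Since $1\ge\Phi_1\ge\cdots\ge\Phi_r\ge0$, at most $O(k/\e^5)\cdot\rank_{\ge\Omega(\e^4)}(G)$ levels $m$ can have $\e_m\ge\e/2$, so for $r=k\cdot\rank_{\ge\Omega(\e^4)}(G)/\e^{O(1)}$ we get $\E_{m\in[r]}\e_m\le\e/2+\tfrac1r\cdot O(k/\e^5)\cdot\rank_{\ge\Omega(\e^4)}(G)\le\e$. Finally, exactly as in the proof of the 2-CSP theorem in \pref{sec:general-2-csp}, a triangle inequality over $m$, $S$ and $x_S$ together with consistency of the local distributions shows that \pref{alg:propagation-sampling} outputs an assignment with expected value $\E_{ij\sim G}\Prob{X'_j=\pi_{ij}(X'_i)}\ge\sigma-\E_{m\in[r]}\e_m\ge\sigma-\e$. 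Given \pref{lem:ug-tensoring-trick}, nothing here is hard; the only point to watch — and hence the main obstacle — is to use the permutation-restricted correlation consistently, so that it simultaneously controls the per-edge loss and the left-hand side of \pref{lem:ug-tensoring-trick}, and to track the single surviving factor of $k$; the real work is all inside \pref{lem:ug-tensoring-trick} itself.
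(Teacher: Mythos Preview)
Your proposal is correct and follows exactly the approach of the paper's proof sketch: replace the full statistical distance by the permutation-restricted quantity $\sum_a\abs{\Cov(X_\ia,X_{j\,\pi_{ij}(a)})}$ (the paper's $\norm{\cdot}_\pi$), substitute \pref{lem:ug-tensoring-trick} for \pref{lem:general-tensoring-trick}, and rerun the potential argument of \pref{thm:propagation-sampling}. In fact you have filled in the details more carefully than the paper's sketch does---your Markov step with thresholds $\e/4,\e/4$ and your verification that $\sum_a\abs{\Cov(X_\ia,X_{j\,\pi_{ij}(a)})}\le 1$ via Cauchy--Schwarz are both needed and correct, and your tracking of the single surviving factor of $k$ (from \pref{lem:conditional-variance-correlation}, not from the threshold) is exactly right.
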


\begin{proof}[Proof Sketch]
  Let $X_1,\ldots,X_n$ be $r$-local random variables over $[k]$ from an
  optimal $r$-round Lasserre solution for $\Ins$.
  The local variables satisfy
  \begin{displaymath}
    \E_{(i,j,\pi)\sim \Ins)} \Prob[\set{X_iX_j}]{X_j=\pi(X_i)}=\sigma\mper
  \end{displaymath}
  For a permutation $\pi$ of $[k]$, we define a modified version of
  statistical distance,
  \begin{displaymath}
    \norm{\, \set{X_iX_j} - \set{X_i}\set{X_j}\,}_\pi
    \defeq \sum_a \abs{\Prob[\set{X_iX_j}]{X_j=\pi(X_i)}
      -\Prob[\set{X_i}\set{X_j}]{X_j=\pi(X_i)}}
    \mper
  \end{displaymath}
  The following analog of \pref{lem:stat-dist-correlation} holds,
  \begin{displaymath}
    \norm{\, \set{X_iX_j} - \set{X_i}\set{X_j}\,}_\pi
    =\sum_{a} \abs{\Cov(X_\ia,X_{j\pi(a)})}
    \mper
  \end{displaymath}
  Using \pref{lem:ug-tensoring-trick}, it is straight-forward to prove a
  better versions of \pref{lem:one-step-conditioning} and
  \pref{thm:general-local-to-global} for our modified notion of statistical
  distance.
  The conclusion is that for $r\ge k\cdot \rank_{\ge
    \Omega(\e^4)}(G)/\e^4$, \pref{alg:propagation-sampling} (Propagation
  Sampling) produces (global) random variables $X'_1,\ldots,X'_n$ such that
  \begin{displaymath}
    \E_{(i,j,\pi\sim \Ins)} \norm{\set{X_iX_j} - \set{X'_i-X'_j}}_\pi \le \e\mper
  \end{displaymath}
  Therefore, we can estimate the expected fraction of satisfied constraints
  as
  \begin{align*}
    \E_{X'_1,\ldots,X'_n} \Prob[(i,j,\pi)\sim \Ins]{X'_j=\pi(X'_i)}
        &\ge     \E_{X'_1,\ldots,X'_n} \Prob[(i,j,\pi)\sim \Ins]{X_j=\pi(X_i)}
    -     \E_{(i,j,\pi\sim \Ins)} \norm{\set{X_iX_j} - \set{X'_i-X'_j}}_\pi
    \\
    &\ge \sigma -\e\mper\qedhere
  \end{align*}
\end{proof}

\section{Local Correlation implies Global Correlation in Low-Rank Graphs}
\label{sec:local-global-lowrank}

Let $G$ be a regular graph with vertex set $V=\set{1,\ldots,n}$.
We identify $G$ with its normalized adjacency matrix, a symmetric
stochastic matrix.
Let
\begin{math}
  \lambda_1 \ge \ldots \ge \lambda_n \in [-1,1]
\end{math}
be the eigenvalues of $G$ in non-increasing order.

The following lemma shows that a violation of the local vs global
correlation condition implies that the graph has high threshold rank.


\begin{lemma} \label{lem:sdptothresholdrank}
  Suppose there exist vectors $v_1,\ldots,v_n\in \R^n$ such that
  \begin{gather*}
    \E_{ij\sim G} \iprod{v_i,v_j}\ge 1-\e\mcom
    \quad
    \E_{i,j\in V}\iprod{v_i,v_j}^2\le \tfrac1m\mcom
    \quad
    \E_{i\in V}\snorm{v_i}=1\mper
  \end{gather*}
  Then for all $C > 1$, $\lambda_{(1-1/C)m}\ge 1-C\cdot \e$. In particular, $\lambda_{m/2}>1-2\e$.
\end{lemma}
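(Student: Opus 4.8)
The plan is to repackage the three numerical hypotheses as properties of a single positive semidefinite matrix, and then extract spectral information about $G$ by a Markov-type estimate followed by Cauchy--Schwarz. Let $M$ denote the normalized adjacency matrix of $G$, fix an orthonormal eigenbasis $e_1,\dots,e_n$ with $Me_\ell=\lambda_\ell e_\ell$, and define the $n\times n$ matrix $P$ by $P_{ij}:=\tfrac1n\iprod{v_i,v_j}$. Since $P$ is $\tfrac1n$ times a Gram matrix, $P\succeq 0$, and a direct computation turns the three hypotheses into
\begin{displaymath}
  \Tr P \;=\; \E_{i\in V}\snorm{v_i}\;=\;1,\qquad
  \Tr P^2 \;=\; \E_{i,j\in V}\iprod{v_i,v_j}^2\;\le\;\tfrac1m,\qquad
  \Tr(MP)\;=\;\E_{ij\sim G}\iprod{v_i,v_j}\;\ge\;1-\e,
\end{displaymath}
where the last identity uses $\Tr(MP)=\sum_{i,j}M_{ij}P_{ij}=\tfrac1n\sum_{i,j}M_{ij}\iprod{v_i,v_j}$ together with the fact that sampling a uniformly random edge of the regular graph $G$ amounts to picking $i$ uniformly and then $j$ with probability $M_{ij}$.

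Next I would pass from the matrix $P$ to a probability distribution on the spectrum of $M$. Set $p_\ell:=\iprod{e_\ell,Pe_\ell}$. Positivity of $P$ gives $p_\ell\ge0$, and $\sum_\ell p_\ell=\Tr P=1$, so $(p_\ell)_{\ell\in[n]}$ is a distribution. Using $M=\sum_\ell\lambda_\ell e_\ell\trsp{e_\ell}$ we get $\sum_\ell\lambda_\ell p_\ell=\Tr(MP)\ge1-\e$, and comparing the diagonal to the full Frobenius norm, $\sum_\ell p_\ell^2\le\sum_{\ell,\ell'}\iprod{e_\ell,Pe_{\ell'}}^2=\Tr P^2\le\tfrac1m$. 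So the lemma is reduced to a purely scalar statement: given a distribution $(p_\ell)$ with $\sum_\ell p_\ell^2\le1/m$ and $\sum_\ell\lambda_\ell p_\ell\ge1-\e$, bound how far down the sorted list $\lambda_1\ge\dots\ge\lambda_n$ one stays above $1-C\e$.

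Now the argument proper. Since each $\lambda_\ell\le1$, the quantity $\sum_\ell(1-\lambda_\ell)p_\ell=1-\sum_\ell\lambda_\ell p_\ell\le\e$ is a sum of nonnegative terms, so for any $C>1$ Markov's inequality with threshold $C\e$ gives $\sum_{\ell:\,\lambda_\ell\le1-C\e}p_\ell\le\e/(C\e)=1/C$; equivalently, the eigenvalues exceeding $1-C\e$ carry mass at least $1-1/C$. Letting $R:=\rank_{\ge1-C\e}(G)$ be the number of such eigenvalues, Cauchy--Schwarz against the indicator of that index set yields
\begin{displaymath}
  \Paren{1-\tfrac1C}^2\;\le\;\Bigparen{\sum_{\ell:\,\lambda_\ell>1-C\e}p_\ell}^2\;\le\;R\cdot\sum_\ell p_\ell^2\;\le\;\frac Rm,
\end{displaymath}
so $R\ge(1-1/C)^2m$; that is, $\lambda_{\lceil(1-1/C)^2m\rceil}\ge1-C\e$, and in particular (taking $C=2$) $\lambda_{m/4}\ge1-2\e$.

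The only nontrivial step is the counting: all we know about $(p_\ell)$ is that its $\ell_2$ mass is small, not that it is uniform, so one must convert $\sum_\ell p_\ell^2\le1/m$ into a lower bound on the size of the support that can hold a prescribed amount of mass --- this is exactly what Cauchy--Schwarz accomplishes, and it is tight (a distribution that is flat on the top $R$ eigenvalues saturates it), so this step is where the quantitative trade-off between the threshold index, $m$, $C$ and $\e$ is pinned down; sharpening it to the precise $(1-1/C)$ factor in the statement, rather than the $(1-1/C)^2$ above, would require more than Cauchy--Schwarz. I would also keep in mind the basis-free rephrasing of this step: if $\Pi$ is the spectral projector of $M$ onto eigenvalues $>1-C\e$, then $\Tr((I-M)P)\le\e$ forces $\Tr(\Pi P)\ge1-1/C$, while $\Tr(\Pi P)\le\sqrt{\Tr\Pi}\,\sqrt{\Tr P^2}=\sqrt{R/m}$; this version avoids diagonalizing $P$ and generalizes more smoothly.
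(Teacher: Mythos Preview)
Your approach is essentially identical to the paper's: form the (normalized) Gram matrix, rewrite the three hypotheses as trace conditions, pass to the eigenbasis of $G$ to obtain a probability distribution $(p_\ell)$ on eigenvalues with $\sum_\ell \lambda_\ell p_\ell \ge 1-\e$ and $\sum_\ell p_\ell^2 \le 1/m$, and then combine a Markov-type bound with Cauchy--Schwarz.

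Your remark about the exponent is in fact correct and worth keeping. Cauchy--Schwarz applied to $q=\sum_{\ell\le m'} p_\ell$ gives only $q \le \sqrt{m'\sum_\ell p_\ell^2}\le\sqrt{m'/m}$, which together with $q\ge 1-1/C$ yields $m'\ge(1-1/C)^2 m$. The paper's proof instead writes $q=\sum_{r\le m'} p_r \le m'\sum_r p_r^2 \le m'/m$, but the first inequality is not Cauchy--Schwarz and is false in general (take $p_1=0.9$, $p_2=0.1$, $m'=1$: then $q=0.9$ while $m'\sum_r p_r^2=0.82$). So the paper's own argument actually establishes only your weaker $(1-1/C)^2$ bound, and hence $\lambda_{m/4}\ge 1-2\e$ rather than $\lambda_{m/2}$. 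This discrepancy is harmless for the applications in the paper, which invoke the lemma only in the qualitative form $\rank_{\ge\Omega(\rho)}(G)\ge\Omega(m)$ (see \pref{lem:local-vs-global-low-rank}).
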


\begin{proof}
  Let $X=(x_{r,s})_{r,s\in[n]}$ be the Gram matrix $(\iprod{\vec v_i, \vec v_j})_{i,j\in V}$
  represented in the eigenbasis of~$G$, so that
  \begin{gather*}
    \E_{ij\sim G }\iprod{v_i,v_j} = \sum_{r\in [n]} \lambda_rx_{r,r}\mcom
    \quad
    \E_{i,j\in V}\iprod{v_i,v_j}^2 = \sum_{r,s\in [n]} x_{r,s}^2\mcom
    \quad
    \E_{i\in V}\snorm{v_i} = \sum_{r\in [n]} x_{r,r}\mper
  \end{gather*}
  Let $m'$ be the largest index such that $\lambda_{m'}\ge 1-C\cdot \e$.
  Notice that the numbers $p_1=x_{1,1},\ldots,p_n=x_{n,n}$ form a probability
  distribution over $r\in[n]$.
  Let $q=\sum_{i=1}^{m'} p_i$ be the probability of the event $r\le m'$.
  Using Cauchy--Schwarz, we can bound this probability in terms of $m$,
  \begin{displaymath}
    q=\sum_{r=1}^{m'} p_r \le m' \sum_{r=1}^n p_r^2 \le \tfrac {m'}m\mper
  \end{displaymath}
  On the other hand, we can bound the expectation of $\lambda_r$ with
  respect to the probability distribution $(p_1,\ldots,_n)$ in terms of
  this probability $q$,
  \begin{displaymath}
    1-\e
    \le \sum_{r=1}^n \lambda_r p_r
    \le \sum_{r=1}^{m'}p_r + (1-C\cdot \e)\sum_{r=m'+1}^m p_r
    = 1- (1-q) C \cdot \e
    \le 1 -  \Paren{1-\tfrac{ m'}{m}} C\cdot \e
    \mper
  \end{displaymath}
  It follows that $m'\ge \Paren{1-\nfrac 1 C}\cdot m$, which gives the
  desired conclusion that $G$ has at least $\Paren{1-\nfrac 1 C}\cdot m$
  eigenvalues $\lambda_r\ge -C\cdot \e$.
\end{proof}

Note that \pref{lem:local-vs-global-low-rank} follows directly from the previous lemma by picking $C = \frac{(1-\rho/100)}{(1-\rho)}$ and observing that $\E_{i,j \in V} |\iprod{\vec v_i, \vec v_j}| \geq \E_{i,j \in V} |\iprod{\vec v_i, \vec v_j}|^2$ since $|\iprod{\vec v_i, \vec v_j}| \leq 1$ for all $i,j \in V$

As a converse to \pref{lem:sdptothresholdrank}, the following lemma shows that if a graph has many eigenvalues close to
$1$, then there exist vectors for the vertices of the graph with high local
correlation and low global correlation.

\begin{lemma}
  If $\lambda_m\ge1-\e$, then there exist vectors $v_1,\ldots,v_n\in  \R^m$
  such that
  \begin{gather*}
    \E_{ij\sim G} \iprod{v_i,v_j}\ge 1-\e\mcom
    \\
    \E_{i,j\in V}\iprod{v_i,v_j}^2= \tfrac1m\mcom
    \\
    \E_{i\in V}\snorm{v_i}=1\mper
  \end{gather*}
\end{lemma}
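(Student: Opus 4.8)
The plan is to build the vectors directly out of the top $m$ eigenvectors of $G$. Since $G$ is real and symmetric, fix an orthonormal eigenbasis $u_1,\ldots,u_n\in\R^n$ with $Gu_r=\lambda_r u_r$, normalized with respect to the \emph{uniform} measure on $V$, i.e. $\E_{i\in V} u_r(i)u_s(i)=\delta_{rs}$ (equivalently $\trsp{u_r}u_s=n\,\delta_{rs}$). For each vertex $i\in V$ I would set $v_i\seteq \tfrac1{\sqrt m}\bigparen{u_1(i),\ldots,u_m(i)}\in\R^m$, that is, $v_i$ is the $i$-th row of the $n\times m$ matrix whose columns are $u_1/\sqrt m,\ldots,u_m/\sqrt m$.

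Then each of the three required identities is a one-line computation. For the norms, $\E_{i\in V}\snorm{v_i}=\tfrac1m\sum_{r=1}^m\E_{i\in V} u_r(i)^2=1$. For the global correlation, expanding $\iprod{v_i,v_j}^2=\tfrac1{m^2}\sum_{r,s}u_r(i)u_s(i)u_r(j)u_s(j)$ and taking expectations over independent $i,j$ factors the sum as $\tfrac1{m^2}\sum_{r,s}\bigparen{\E_i u_r(i)u_s(i)}\bigparen{\E_j u_r(j)u_s(j)}=\tfrac1{m^2}\sum_{r,s}\delta_{rs}=\tfrac1m$. For the local correlation, using $\E_{ij\sim G} f(i)g(j)=\tfrac1n\trsp f G g$ together with $Gu_r=\lambda_r u_r$ and $\trsp{u_r}u_r=n$ gives $\E_{ij\sim G}u_r(i)u_r(j)=\lambda_r$, so $\E_{ij\sim G}\iprod{v_i,v_j}=\tfrac1m\sum_{r=1}^m\lambda_r\ge\lambda_m\ge 1-\e$, since $\lambda_1\ge\cdots\ge\lambda_m\ge 1-\e$ by hypothesis.

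There is no real obstacle beyond bookkeeping; the one point that has to be handled with care is the normalization of the eigenvectors. Normalizing them with respect to the uniform (rather than the counting) measure on $V$, and then scaling the stacked vector by $1/\sqrt m$, is precisely what makes all three identities come out with the stated constants simultaneously — a different normalization would change the right-hand sides. (If $\e<0$ the statement is vacuous, and if $\lambda_m$ has multiplicity one still simply picks an orthonormal basis of the relevant eigenspaces, so no genericity assumption is needed.)
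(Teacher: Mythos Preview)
Your proposal is correct and follows essentially the same approach as the paper: both define $v_i$ as the $i$-th row of the matrix whose columns are the top $m$ eigenfunctions (normalized with respect to the uniform measure), scaled by $1/\sqrt m$, and then verify the three conditions directly from orthonormality and the eigenvalue bound. Your write-up is somewhat more explicit about the normalization bookkeeping, but the construction and the computations are the same.
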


\begin{proof}
  Let $\super f1,\ldots,\super f m\from V\to \R$ be orthonormal
  eigenfunctions of $G$ with eigenvalue larger than $1-\e$.
  Consider vectors $v_1,\ldots,v_n\in \R^m$ satisfying
  \begin{math}
    \iprod{v_i,v_j}
    =\E_{r\in [m]}\super f r _i \super f r _j.
  \end{math}
  Since the functions $\super f r$ have norm $1$, the typical squared norm
  of the vectors $v_i$  satisfies
  \begin{displaymath}
    \E_{i\in V} \snorm{v_i}
    = \E_{r\in[m]} \snorm{\super f r}
    =1
    \mper
  \end{displaymath}
  Since the eigenvalues of the eigenfunctions $\super f r$ are larger than
  $1-\e$, we can lower bound the local correlation of the vectors $v_i$,
  \begin{displaymath}
    \E_{ij\sim G} \iprod{v_i,v_j}
    = \E_{r\in [m]}\iprod{\super f r,G \super f r}
    \ge 1-\e
    \mper
  \end{displaymath}
  Finally, since the function $\super f m$ are orthonormal, the global
  correlation of the vectors $v_i$ is
  \begin{displaymath}
    \E_{i,j\in V}\iprod{v_i,v_j}^2
    = \E_{i,j\in V} \E_{r,s\in [m]}
    \super  f r _i \super f r_j \super f  {s}_i \super f s _j
    =\E_{r,s\in [m]} \iprod{\super f r,\super f s}^2
    = \tfrac 1m
    \mper
    \qedhere
  \end{displaymath}
\end{proof}

\begin{remark}
  The condition that there exist vectors $v_1,\ldots,v_n\in \R^n$ with
  \begin{gather*}
    \E_{ij\sim G} \iprod{v_i,v_j}\ge 1-\e\mcom
    \\
    \E_{i,j\in V}\iprod{v_i,v_j}^2\le \tfrac1m\mcom
    \\
    \E_{i\in V}\snorm{v_i}=1\mper
  \end{gather*}
  is equivalent to the condition that there exists a symmetric positive
  semidefinite matrix $X\in \R^{V\times V}$ such that
  \begin{gather*}
    \Tr G X \ge 1-\e \mcom
    \\
    \Tr X^2 \le 1/m\mcom
    \\
    \Tr X =1\mper
  \end{gather*}
\end{remark}

\section{On Low Rank Approximations to Sets of Vectors}
\label{sec:low-rank-approx}

\begin{theorem}
\label{thm:low-rank-approx}
  Let  $v_1,\ldots,v_n\in \R^n$ be vectors in the unit ball.
  Then for every $\e>0$, there exists a subset $U\sse \set{v_1,\ldots,v_n}$
  with $\card{U}\le 1/\e$ such that
  $\E_{i,j\in [n]}\norm{w_i}\, \norm{w_j}\, \iprod{\bar w_i,\bar w_j}^2\le \e$,
  where $w_i$
  is the projection of $v_i$ to the orthogonal complement of the span of
  $U$.
\end{theorem}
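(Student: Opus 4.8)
My plan is a greedy column-selection argument, running for at most $1/\e$ steps and using a potential-function (spectral mass) analysis. Set $w_i^{(0)} = v_i$, and maintain a subspace $W$ (initially $\{0\}$), letting $w_i$ denote the projection of $v_i$ onto $W^\perp$. At each step, if the quantity $\Phi = \E_{i,j\in[n]} \norm{w_i}\,\norm{w_j}\,\iprod{\bar w_i,\bar w_j}^2$ is already at most $\e$, we stop and output the current set $U$. Otherwise, I claim there is some index $i^\star$ such that adding $v_{i^\star}$ to $U$ (equivalently, projecting out $w_{i^\star}$) decreases a suitable nonnegative potential by at least $\e$; since the potential starts at most $1$, this can happen at most $1/\e$ times, which bounds $\card U$.

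The natural potential is $\Phi' \seteq \E_{i\in[n]} \snorm{w_i} = \tfrac1n \Tr M$, where $M = \sum_i w_i w_i^\dagger$ is the (unnormalized) Gram-type matrix of the residual vectors; note $0 \le \Phi' \le 1$ since the $v_i$ lie in the unit ball. The key computation is to show that if $\Phi > \e$ then some single projection step reduces $\Phi'$ by at least $\e$. Observe that $\Phi = \E_{i,j} \iprod{w_i,w_j}^2 = \tfrac1{n^2}\Tr M^2 = \tfrac1{n^2}\sum_r \sigma_r^2$, where $\sigma_1 \ge \sigma_2 \ge \cdots$ are the eigenvalues of $M$, while $\Phi' = \tfrac1n \sum_r \sigma_r$. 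When we project out the direction of some $w_{i^\star}$ and re-project all vectors, the trace $\Tr M$ drops by at least the component of $M$ along the top eigendirection captured by that vector; averaging over $i^\star$ chosen proportionally to $\snorm{w_{i^\star}}$, the expected decrease in $\tfrac1n\Tr M$ is at least $\tfrac1{n^2}\sum_r \sigma_r^2 / \bigl(\tfrac1n\sum_r \sigma_r\bigr) \ge \tfrac1{n^2}\Tr M^2 = \Phi$, using that the spectral mass $\tfrac1n\Tr M = \Phi' \le 1$. (This is exactly the ``remove the column with the largest projection onto the current top singular direction'' idea, and it is the $\Phi \ge \e \Rightarrow$ drop $\ge \e$ step.) Hence there is a concrete choice of $i^\star$ achieving at least this expected decrement.

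The main obstacle I anticipate is making the per-step decrement bound clean: after projecting out one residual vector $w_{i^\star}$, the other residuals $w_i$ change (they get re-projected), so one must argue the trace decrease is at least $\snorm{P_{w_{i^\star}} M P_{w_{i^\star}}}$-type quantity and not worry that other coordinates ``come back.'' This is handled by the monotonicity $M^{(t+1)} \preceq M^{(t)}$ (projecting onto a smaller subspace only shrinks the Gram matrix in Loewner order), so $\Tr M$ is monotone non-increasing, and the one-step drop when we kill direction $\bar w_{i^\star}$ is exactly $\sum_i \iprod{w_i,\bar w_{i^\star}}^2 = \tfrac{1}{\snorm{w_{i^\star}}}\sum_i \iprod{w_i,w_{i^\star}}^2$; taking $\E_{i^\star}$ with weights $\propto \snorm{w_{i^\star}}/\sum_j\snorm{w_j}$ gives expected drop $\ge \bigl(\sum_j\snorm{w_j}\bigr)^{-1}\sum_{i,i^\star}\iprod{w_i,w_{i^\star}}^2 = \tfrac{n^2\Phi}{n\Phi'} \ge n\Phi$, i.e. $\Phi'$ drops by at least $\Phi > \e$. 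Since $\Phi'$ is nonnegative and starts at most $1$, the process terminates within $1/\e$ rounds with $\Phi \le \e$, and $U$ is the set of selected $v_{i^\star}$'s, which proves the theorem.
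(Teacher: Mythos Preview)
Your overall strategy---greedy selection with the potential $\Phi'=\E_i\snorm{w_i}$, showing each step decreases $\Phi'$ by at least the target quantity $\Phi$, hence at most $1/\e$ steps---is exactly the paper's approach (the paper packages the one-step decrement as a separate lemma and then iterates). So the architecture is right.

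There is, however, a genuine error in your key computation. You write ``$\Phi = \E_{i,j}\iprod{w_i,w_j}^2 = \tfrac1{n^2}\Tr M^2$,'' but this is false: by definition
\[
\Phi = \E_{i,j}\norm{w_i}\,\norm{w_j}\,\iprod{\bar w_i,\bar w_j}^2,
\qquad\text{whereas}\qquad
\E_{i,j}\iprod{w_i,w_j}^2 = \E_{i,j}\norm{w_i}^2\,\norm{w_j}^2\,\iprod{\bar w_i,\bar w_j}^2,
\]
and these differ by a factor of $\norm{w_i}\,\norm{w_j}$ inside the expectation. Your weighted-sampling calculation correctly shows the expected drop in $\Phi'$ equals $\bigl(\E_{i,j}\iprod{w_i,w_j}^2\bigr)/\Phi'$, but the subsequent identification of the numerator with $\Phi$ is what fails, and with it the claimed inequality ``drop $\ge \Phi$.''

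The fix is minor and is precisely what the paper does: sample the index $j$ to project out \emph{uniformly} rather than with weights $\propto \snorm{w_j}$. The drop in $\Phi'$ when killing direction $\bar w_j$ is $\E_i \snorm{w_i}\iprod{\bar w_i,\bar w_j}^2$; averaging uniformly over $j$ and symmetrizing in $i,j$ gives an expected drop of
\[
\E_{i,j}\tfrac12\bigl(\snorm{w_i}+\snorm{w_j}\bigr)\iprod{\bar w_i,\bar w_j}^2
\;\ge\; \E_{i,j}\norm{w_i}\,\norm{w_j}\,\iprod{\bar w_i,\bar w_j}^2 \;=\; \Phi
\]
by AM--GM. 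With this one-line change, your stopping rule (halt once $\Phi\le\e$) and the bound $\card U\le 1/\e$ go through exactly as you wrote.
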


The proof of \pref{thm:low-rank-approx} is by an iterative construction.
In each iteration, we will use the following lemma.

\begin{lemma}
\label{lem:approx-step}
  Let $v_1,\ldots,v_n\in \R^n$ be vectors.
  %
  Then, there exists a unit vector $u\in \set{\bar v_1,\ldots,\bar v_n}$
  such that the vectors $v'_1,\ldots,v'_n$ with $v'_i = v_i-\iprod{v_i,u}u$
  satisfy the following condition,
  \begin{displaymath}
    \E_{i\in [n]} \snorm{v'_i}
    \le \E_{i\in [n]}\snorm{v_i}- \E_{i,j\in [n]}
    \norm{v_i}\,\norm{v_j}\,\iprod{\bar v_i,\bar v_j}^2\mper
  \end{displaymath}
\end{lemma}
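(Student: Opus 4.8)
The plan is to reduce the statement to a short averaging argument over the candidate directions $\bar v_1,\ldots,\bar v_n$. First I would record the effect of a single projection step: for any unit vector $u$, the vector $v'_i=v_i-\iprod{v_i,u}u$ is the orthogonal projection of $v_i$ away from $u$, so $\snorm{v'_i}=\snorm{v_i}-\iprod{v_i,u}^2$, and averaging over $i$,
\[
  \E_{i\in[n]}\snorm{v'_i}=\E_{i\in[n]}\snorm{v_i}-\E_{i\in[n]}\iprod{v_i,u}^2\mper
\]
Thus it suffices to find $u\in\set{\bar v_1,\ldots,\bar v_n}$ with $\E_{i\in[n]}\iprod{v_i,u}^2\ge \E_{i,j\in[n]}\norm{v_i}\,\norm{v_j}\,\iprod{\bar v_i,\bar v_j}^2$.

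Next I would prove the (stronger) claim that the \emph{average} of $\E_i\iprod{v_i,\bar v_j}^2$ over $j$ already meets this bound, so that an extremal $j$ works. Using $\iprod{v_i,\bar v_j}^2=\snorm{v_i}\iprod{\bar v_i,\bar v_j}^2$ and symmetrizing the double average in $i\leftrightarrow j$ (legitimate since $\iprod{\bar v_i,\bar v_j}^2$ is symmetric),
\[
  \E_{j\in[n]}\E_{i\in[n]}\iprod{v_i,\bar v_j}^2
  =\E_{i,j\in[n]}\snorm{v_i}\iprod{\bar v_i,\bar v_j}^2
  =\tfrac12\E_{i,j\in[n]}\bigparen{\snorm{v_i}+\snorm{v_j}}\iprod{\bar v_i,\bar v_j}^2\mper
\]
Then the elementary inequality $\snorm{v_i}+\snorm{v_j}\ge 2\norm{v_i}\norm{v_j}$, together with $\iprod{\bar v_i,\bar v_j}^2\ge 0$, yields $\E_{j}\E_{i}\iprod{v_i,\bar v_j}^2\ge \E_{i,j}\norm{v_i}\norm{v_j}\iprod{\bar v_i,\bar v_j}^2$. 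Hence some $j^\star$ satisfies $\E_i\iprod{v_i,\bar v_{j^\star}}^2\ge \E_{i,j}\norm{v_i}\norm{v_j}\iprod{\bar v_i,\bar v_j}^2$, and $u=\bar v_{j^\star}$ is the desired vector; combined with the first display this is exactly the claimed variance decrement.

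I do not expect a genuine obstacle here: the whole argument is the one identity in the first step plus a single application of AM--GM. The only point that needs care is the presence of zero vectors, for which $\bar v_i$ is undefined. I would handle this by taking the candidate average in the second step only over indices $j$ with $v_j\neq 0$: discarding the zero-vector terms can only decrease $\E_{i,j}\norm{v_i}\norm{v_j}\iprod{\bar v_i,\bar v_j}^2$, while the smaller normalization $1/\card{\set{j:v_j\neq 0}}\ge 1/n$ works in our favour, so the same conclusion follows (and if every $v_i$ is zero the statement is vacuous).
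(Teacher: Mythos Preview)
Your proposal is correct and follows essentially the same argument as the paper's proof: both compute the norm drop $\snorm{v_i}-\snorm{v'_i}=\iprod{v_i,\bar v_j}^2=\snorm{v_i}\iprod{\bar v_i,\bar v_j}^2$, average over $j$, symmetrize to $\tfrac12(\snorm{v_i}+\snorm{v_j})\iprod{\bar v_i,\bar v_j}^2$, apply AM--GM, and pick an extremal~$j$. Your explicit treatment of the zero-vector case is a detail the paper leaves implicit.
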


\begin{proof}
  Suppose we pick a random index $j\in [n]$ and choose $u=\bar v_j$.
  In this case, the squared norm of the vectors $v'_i=v_i-\iprod{v_i,u}u$
  equals
  \begin{displaymath}
    \snorm{v'_i}=\snorm{v_i}-\iprod{v_i,u}^2=(1-\iprod{\bar
      v_i,\bar v_j}^2)\snorm{v_i}\mper
  \end{displaymath}
  Hence, we can estimate the expected decrease of the typical squared norms
  for a random vector $u\in \set{\bar v_1,\ldots,\bar v_n}$.
  \begin{align*}
    \E_{i\in[n]}\snorm{v_i}-\E_u \E_{i\in[n]} \snorm{v'_i}
    & = 
    \E_{i,j\in[n]} \Paren{ %
      \tfrac12 \snorm{v_i}+\tfrac12\snorm{v_j}} \iprod{\bar v_i,\bar v_j}^2
    \\
    &\ge \E_{i,j\in[n]}\norm{v_i}\,\norm{v_j}\,\iprod{\bar v_i,\bar v_j}^2
  \end{align*}
  It follows that there exists a unit vector
  \begin{math}
    u\in \set{\bar v_1,\ldots,\bar v_n}
  \end{math}
  such that the vectors $v'_i=v_i-\iprod{v_i,u}u$ have the desired property
  \begin{displaymath}
    \E_{i\in [n]}\snorm{v_i'}
    \le \E_{i\in[n]}\snorm{v_i}
    -\E_{i,j\in[n]}\norm{v_i}\,\norm{v_j}\,\iprod{\bar   v_i,\bar v_j}^2
    \mper
    \qedhere
  \end{displaymath}

\end{proof}

\begin{proof}[Proof of \pref{thm:low-rank-approx}]
  We can construct the set $U$ in a greedy fashion so as to minimize the total
  squared norm of the vectors $w_1,\ldots,w_n$ (the projections of the
  vectors $v_i$ to the orthogonal complement of the span of $U$).
  (In fact, we could choose set $U$ randomly.)
  To make the analysis more convenient, we use the following, slightly
  different construction.
  \begin{enumerate}
  \item Let $\super v 1_i=v_i$ for all $i\in [n]$.
  \item For $t$ from $1$ to $1/\e$, construct vectors $\super u t\in \R^n$ and $\super
    v {t+1}_1,\ldots, \super v {t+1}_n\in \R^n$ as follows:
    \begin{enumerate}\item
      Using \pref{lem:approx-step}, pick a unit vector
      \begin{math}
        \super u t\in \set{\super {\bar v} {t} _1,\ldots,\super {\bar
            v}{t}_n}
      \end{math}
      such that the vectors $\super v {t+1}_i= \super v {t}_i-\iprod{\super v
        {t}_i,\super u t}\super u t$  satisfy the condition
      \begin{displaymath}
        \E_{i\in[n]}\snorm{\super v {t+1}_i}
        \le \E_{i\in[n]}\snorm{\super v{t}_i}
        -\E_{i,j\in[n]}\norm{\super v {t}_i}\,\norm{\super v {t}_j}
        \, \iprod{\super v {t}_i,\super v {t}_j}^2\mper
      \end{displaymath}
    \end{enumerate}
  \end{enumerate}
  Notice that the vectors $\super v t _1,\ldots,\super v t_n$ are
  the projections of the vector $v_1,\ldots, v_n$ into the orthogonal
  complement of the span of the vectors $\super u 1,\ldots,\super u {t-1}$.
  Let $U$ be the set of all indices $j$ such that $\super u t=\super {\bar v} t_j$ for
  some $t\in\set{1,\ldots,1/\e}$.
  We can verify that the vectors $\super u 1,\ldots,\super u {1/\e}$ are an
  orthonormal basis of the span of $U$.
  Let $w_1,\ldots,w_n$ be the projections of the vectors $v_1,\ldots,v_n$
  into the orthogonal complement of the span of $U$ (so that $w_i=\super v
  {1/\e}_i$).
  Since the vectors $w_1,\ldots,w_n$ are projections of the vectors $\super
  v t_1,\ldots,\super v t_n$ for all $t\in {1,\ldots,1/\e}$, it follows
  that
  \begin{displaymath}
    \E_{i,j\in[n]}\norm{w_i}\,\norm{w_j}\,\iprod{\bar w_i,\bar w_j}^2
    \le \E_{i,j\in[n]}\norm{\super v {t}_i}\,\norm{\super v {t}_j}
    \, \iprod{\super v {t}_i,\super v {t}_j}^2\mper
  \end{displaymath}
  Hence, we can bound the typical squared norm of the vectors $w_i$,
  \begin{displaymath}
    \E_{i\in [n]}\snorm{w_i}\le \E_{i\in[n]}\snorm{v_i}- \tfrac 1 \e
    \E_{i,j\in[n]}\norm{w_i} \,\norm{w_j}\, \iprod{\bar w_i,\bar w_j}^2\mper
  \end{displaymath}
  Since the left-hand side is nonnegative and $\E_{i\in [n]}\snorm{v_i}\le
  1$, it follows that
  \begin{math}
    \E_{i,j\in[n]}\norm{w_i} \,\norm{w_j}\, \iprod{\bar w_i,\bar w_j}^2\le \e\mcom
  \end{math}
  as desired.
\end{proof}

For our applications it will sometimes be convenient to associate different
subspace with subsets $U$ of vectors (in \pref{thm:low-rank-approx}, we
associate the span of vectors in $U$ with the subset~$U$).

\begin{theorem} \label{thm:low-rank-approx-general}
  Let  $v_1,\ldots,v_n\in \R^n$ be vectors in the unit ball.
  For every subset $U\sse V$, let $Q_U$ be the projector on some subspace
  orthogonal to the span of $U$.
  (Note that $Q_U$ is not necessarily the projector on the orthogonal
  complement of the span of $U$.)
  Then for every $\e>0$, there exists a subset $U\sse \set{v_1,\ldots,v_n}$
  with $\card{U}\le 1/\e$ such that
  $\E_{i,j\in [n]}\norm{w_i}\, \norm{w_j}\, \iprod{\bar w_i,\bar w_j}^2\le \e$,
  where $w_i=Q_U v_i$.
\end{theorem}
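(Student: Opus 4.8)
The plan is to deduce \pref{thm:low-rank-approx-general} from \pref{thm:low-rank-approx} almost for free, exploiting that the conclusion of \pref{thm:low-rank-approx} depends only on the subspace that is projected \emph{away}, not on how one parametrizes its complement. First I would apply \pref{thm:low-rank-approx} to $v_1,\ldots,v_n$, obtaining a set $U\sse\set{v_1,\ldots,v_n}$ with $\card U\le 1/\e$ such that the orthogonal projections $w^0_i\seteq Pv_i$ of the $v_i$ onto the orthogonal complement of $\mathrm{span}(U)$ satisfy $\E_{i,j\in[n]}\norm{w^0_i}\,\norm{w^0_j}\,\iprod{\bar w^0_i,\bar w^0_j}^2\le\e$, where $P$ is that orthogonal projector; this same $U$ will be the set we exhibit for \pref{thm:low-rank-approx-general}. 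Let $Q_U$ be the projector attached to $U$. Because $Q_U$ is an \emph{orthogonal} projector whose image lies in $\mathrm{span}(U)^\perp=\mathrm{Im}(P)$, its kernel (which equals $\mathrm{Im}(Q_U)^\perp$) contains $\mathrm{span}(U)$, hence $Q_UP=Q_U$. Therefore the vectors we must control, $w_i\seteq Q_Uv_i$, satisfy $w_i=Q_UPv_i=Q_Uw^0_i$: they are obtained from the vectors $w^0_i$ produced by \pref{thm:low-rank-approx} by one further orthogonal projection.

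It therefore suffices to show that one further orthogonal projection never increases the relevant correlation functional, that is, for every orthogonal projector $Q$ and all vectors $x_1,\ldots,x_n$,
\[
  \E_{i,j\in[n]}\norm{Qx_i}\,\norm{Qx_j}\,\iprod{Qx_i,Qx_j}^2\ \le\ \E_{i,j\in[n]}\norm{x_i}\,\norm{x_j}\,\iprod{x_i,x_j}^2\mper
\]
(This is the inequality already used at the end of the proof of \pref{thm:low-rank-approx}; for vectors in the unit ball, the bound on $\E_{i,j}\norm{w^0_i}\norm{w^0_j}\iprod{\bar w^0_i,\bar w^0_j}^2$ from \pref{thm:low-rank-approx} implies the same bound on the left side here with $x_i=w^0_i$, since $\iprod{w^0_i,w^0_j}^2\le\iprod{\bar w^0_i,\bar w^0_j}^2$.) I would prove this monotonicity by passing to the positive-semidefinite order: the left-hand side equals $\Tr C^2$ for $C\seteq\E_i\norm{Qx_i}\,(Qx_i)\trsp{(Qx_i)}=QEQ$ with $E\seteq\E_i\norm{Qx_i}\,x_i\trsp{x_i}$, and the right-hand side equals $\Tr D^2$ for $D\seteq\E_i\norm{x_i}\,x_i\trsp{x_i}$. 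Since $\norm{Qx_i}\le\norm{x_i}$ we have $0\sle E\sle D$, whence
\[
  \Tr C^2=\norm{QEQ}_F^2\le\norm{E}_F^2=\Tr E^2\le\Tr D^2\mcom
\]
using that the Frobenius norm contracts under left and right multiplication by a projector and that $0\sle E\sle D$ forces $\Tr E^2\le\Tr D^2$. Applying this with $x_i=w^0_i$ and $Q=Q_U$ gives $\E_{i,j}\norm{w_i}\norm{w_j}\iprod{w_i,w_j}^2\le\e$, which is \pref{thm:low-rank-approx-general}. (If one prefers not to treat \pref{thm:low-rank-approx} as a black box, the same estimate also falls out of re-running its greedy construction verbatim and applying this monotonicity at the end.)

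I expect the only real content to be the monotonicity inequality, and the delicate point there is that the per-vector normalizations $\norm{Qx_i}$ and $\norm{x_i}$ do not line up pointwise — an individual summand on the left need not be dominated by the corresponding summand on the right — so it is genuinely necessary to recognize both sides as traces of squares of positive-semidefinite second-moment matrices and use operator-monotonicity, rather than a naive termwise bound. A secondary thing to verify is the factorization $Q_UP=Q_U$, which relies on $Q_U$ being the orthogonal (not some oblique) projector onto its subspace; this is exactly what guarantees $\mathrm{span}(U)\sse\ker Q_U$. The extra generality of \pref{thm:low-rank-approx-general} over \pref{thm:low-rank-approx} — that unrelated subspaces $Q_U$ may be attached to different sets $U$ — costs nothing, since the argument only ever uses the single projector attached to the set $U$ that \pref{thm:low-rank-approx} provides.
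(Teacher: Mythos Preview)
Your argument has a genuine gap at the very last step. You establish the inequality
\[
  \E_{i,j\in[n]}\norm{w_i}\,\norm{w_j}\,\iprod{w_i,w_j}^2\le\e\mcom
\]
with \emph{un-normalized} inner products, and then assert ``which is \pref{thm:low-rank-approx-general}.'' But the theorem's conclusion is about the \emph{normalized} quantity $\E_{i,j}\norm{w_i}\,\norm{w_j}\,\iprod{\bar w_i,\bar w_j}^2=\E_{i,j}\iprod{w_i,w_j}^2/(\norm{w_i}\,\norm{w_j})$, which is in general \emph{larger} when the $w_i$ have small norm. Your trace-of-squares argument is correct for the functional you wrote down, but that functional is not the one in the statement.

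Worse, the monotonicity you would actually need --- that $\E_{i,j}\norm{Qx_i}\,\norm{Qx_j}\,\iprod{\overline{Qx_i},\overline{Qx_j}}^2\le\E_{i,j}\norm{x_i}\,\norm{x_j}\,\iprod{\bar x_i,\bar x_j}^2$ for an orthogonal projector $Q$ --- is \emph{false}. Take $n=2$, $x_1=(\sqrt{0.8},\sqrt{0.2})$, $x_2=(\sqrt{0.8},-\sqrt{0.2})$, and let $Q$ project onto the first coordinate: the right side equals $\tfrac12(1+0.36)=0.68$ while the left side equals $0.8$. So the black-box route through \pref{thm:low-rank-approx} followed by one more projection cannot work in general, and this is exactly why the paper instead \emph{re-runs} the greedy construction with the projectors $Q_{U^{(t)}}$ built into each step. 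In that approach the normalized decrement $\E_{i,j}\norm{v^{(t)}_i}\,\norm{v^{(t)}_j}\,\iprod{\bar v^{(t)}_i,\bar v^{(t)}_j}^2$ is produced directly by \pref{lem:approx-step} at every stage, and one picks the good stage by averaging over $t$; no post-hoc monotonicity is invoked. The key fact that makes the telescoping go through is $\norm{Q_{U^{(t)}}v_i}\le\norm{v^{(t)}_i-\iprod{v^{(t)}_i,u^{(t)}}u^{(t)}}$, which is a norm inequality for a single vector rather than a correlation inequality for the whole ensemble.
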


\begin{proof}
  We use the same construction as in the proof of
  \pref{thm:low-rank-approx}.
  The only difference is that we define $\super v {t+1}_i=P_{\super U t}
  v_i$ (instead of $\super v {t+1}_i= \super v {t}_i-\iprod{\super v
    {t}_i,\super u t}\super u t$).
  Here, $\super U t$ is the set of all indices $j$ such that $\super u {t'} = \super {\bar v_j}
  {t'}$ for some $t'\le t$.
  The proof is still applies to this modifies construction because
  $\norm{\super v {t+1}_i}\le \norm{\super v {t}_i-\iprod{\super v
      {t}_i,\super u t}\super u t}$ (which is the only fact used about
  these vectors).
\end{proof}

\section{Rounding SDP Solutions to Unique Games}
\label{sec:round-uniq-games}

	In this section, we will present a subexponential time algorithm for Unique Games based on a SDP hierarchy, namely the simple SDP augmented with Sherali-Adams hierarchy.  This hierarchy of relaxations weaker than the Lasserre hierarchy was studied in some earlier works \cite{RaghavendraS09c,KhotS09}.   Roughly speaking, the $m$th round relaxation in this hierarchy corresponds to the basic semidefinite program, along with all valid constraints on at most $m$ vectors.  Formally, the variables in the $m$th round relaxation for Unique Games consists of

\begin{itemize}
\item A collection of ``local distributions''
    \begin{math}
      \set{\mu_T}_{T\sse V, ~\card T \le m}.
    \end{math}
    Each distribution $\mu_T$ is over local assignments $x_T\in [k]^T$.
\item A set of vectors $\cV = \set{v_{ia}}{ i \in V, a \in [k]}$ with $k$ orthogonal vectors for every vertex $i \in V$.
\end{itemize}
The constraint of the SDP relaxation ensure that the inner products of the vectors are consistent with the corresponding local distributions, i.e.,
for all $S \sse V$, $\card{S} \leq m$  $i,j \in S$ and $a,b \in [k]$,
$$ \Pr_{\mu_S}\left[X_i = a \wedge X_j = b\right] = \iprod{v_{ia},v_{jb}} \mper$$
The objective value of the SDP corresponds to minimizing the number of violated constraints,
$$\mathrm{Minimize} \ \ \ \ \ E_{i,j \in E} \left[\sum_{a \in [k]} \norm{v_{ia}-v_{j\pi_{ij}(a)}}^2\right] \mper$$

	\subsection{Propagation Rounding} \label{sec:propagation}

	Let  $\set{\mu_T}_{T\sse V, ~\card T \le m}$ be a set of consistent local distributions over assignments.  For a subset of vertices $S$, the distribution $\mu^{|S}$ over global assignments is sampled as follows:
  \begin{enumerate}
  \item Sample a assignment $x_S\in [k]^S$ for $S$ according to its local
    distribution $\mu_S$.

  \item For every other vertex $i\in V\sm S$, sample a label $x_i\in[k]$
    according to the local distribution for $S\cup \set{i}$ conditioned on
    the assignment $x_S$ for $S$.
  \end{enumerate}
The above procedure will be referred to as {\it propagation rounding} and the set $S$ of vertices will be called the {\it seed} vertices.

The following lemma implies that if the seed vertices $S$ nearly
determine the values of a set of vertices $T$, then the assignment
output by the propagation rounding has a distribution similar to the
local distribution $\mu_T$ that is part of the LP/SDP solution (hence gets close to the SDP value).

\begin{lemma} \label{lem:propagationroundingerror} For a set $S \sse V$ $|S| = m-t$, let $\mu^{|S}$ denote the distribution over global
	assignments $x \in [k]^V$ output by propagation rounding with
	$S$ as the seed vertex set.  Then, for every subset $T$ with
	$|T| \leq t$ we have
	\beq    \Norm{\,\mu_{T} - \mu^{|S}_{T}\,} \leq \sum_{t \in T}
	\Var[X_t|X_S] \mper \eeq
\end{lemma}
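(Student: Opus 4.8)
The plan is to fix the seed assignment first and then run a coordinate-by-coordinate hybrid argument over $T$. Since $\card{S\cup T}\le (m-t)+t=m$, the local distribution $\mu_{S\cup T}$ is part of the solution, and both $\mu_T$ and $\mu^{|S}_T$ can be sampled by first drawing $x_S\sim\mu_S$: for $\mu_T$ one then draws $x_T$ from $\mu_{S\cup T}(\cdot\mid x_S)$, whereas for $\mu^{|S}_T$ one draws each $x_i$, $i\in T$, independently from its conditional marginal $\set{X_i\mid X_S=x_S}$. By convexity of the norm it therefore suffices to show, for every fixed $x_S$,
\[
  \Norm{\,\nu-\textstyle\prod_{i\in T}\nu_i\,}\ \le\ \sum_{i\in T}\Var[X_i\mid X_S=x_S]\mcom
\]
where $\nu$ is $\mu_{S\cup T}(\cdot\mid x_S)$ restricted to $T$ and $\nu_i$ is its $i$-th marginal; averaging this inequality over $x_S\sim\mu_S$ gives the lemma.

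Fix $x_S$ and write $T=\set{i_1,\ldots,i_{t'}}$. Interpolate between $\prod_i\nu_i$ and $\nu$: for $0\le\ell\le t'$ let $\nu^{(\ell)}$ be the distribution on $[k]^T$ in which $(X_{i_1},\ldots,X_{i_\ell})$ is distributed as under $\nu$ while $X_{i_{\ell+1}},\ldots,X_{i_{t'}}$ are independent with their $\nu$-marginals, so $\nu^{(0)}=\prod_i\nu_i$ and $\nu^{(t')}=\nu$. By the triangle inequality $\Norm{\nu-\prod_i\nu_i}\le\sum_{\ell=1}^{t'}\Norm{\nu^{(\ell)}-\nu^{(\ell-1)}}$. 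Now $\nu^{(\ell)}$ and $\nu^{(\ell-1)}$ have the same marginal on $\set{i_1,\ldots,i_{\ell-1}}$ and the same (independent) coordinates on $\set{i_{\ell+1},\ldots,i_{t'}}$; the latter are independent of everything in both distributions, hence can be ignored, and the only difference is that $X_{i_\ell}$ is drawn from $\set{X_{i_\ell}\mid x_S,x_{i_1},\ldots,x_{i_{\ell-1}}}$ in $\nu^{(\ell)}$ but from $\nu_{i_\ell}$ in $\nu^{(\ell-1)}$. Therefore
\[
  \Norm{\nu^{(\ell)}-\nu^{(\ell-1)}}\ =\ \E_{x_{i_1},\ldots,x_{i_{\ell-1}}\sim\nu}\ \Norm{\,\set{X_{i_\ell}\mid x_S,x_{i_1},\ldots,x_{i_{\ell-1}}}-\nu_{i_\ell}\,}\mper
\]

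The crux is the one-step estimate $\E_{\set Y}\Norm{\,\set{X\mid Y}-\set X\,}\le\Var[X]$ for a $[k]$-valued $X$ jointly distributed with some $Y$. Writing $P=\set X$ and $P_y=\set{X\mid Y=y}$, so that $P=\E_y P_y$, one uses the identity $\Norm{P_y-P}=1-\sum_a\min\bigl(P_y(a),P(a)\bigr)$ and the elementary bound $\min(u,w)\ge uw$ for $u,w\in[0,1]$ to get
\[
  \E_y\Norm{P_y-P}\ \le\ 1-\E_y\textstyle\sum_a P_y(a)P(a)\ =\ 1-\sum_a P(a)^2\ =\ 1-\cp(X)\ =\ \Var[X]\mper
\]
Applying this with $X=\set{X_{i_\ell}\mid X_S=x_S}$ and $Y=\set{(X_{i_1},\ldots,X_{i_{\ell-1}})\mid X_S=x_S}$ bounds the $\ell$-th hybrid term by $\Var[X_{i_\ell}\mid X_S=x_S]$; summing over $\ell$ and averaging over $x_S$ completes the argument.

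I expect the one-step estimate (and, with it, the precise normalization of $\Norm{\cdot}$) to be the only delicate point: with the total-variation normalization the constant is exactly $1$ as written, whereas with the $\ell_1$ normalization $\normo{\cdot}$ the same computation carries a factor $2$ (one has $\normo{P_y-P}=2(1-\sum_a\min(P_y(a),P(a)))$), so recovering constant $1$ in that normalization requires being slightly less wasteful than the bare triangle inequality. Everything else is routine: verifying $\card{S\cup T}\le m$ so that all the needed conditional distributions of the SDP solution exist, and checking that each hybrid step genuinely isolates a single coordinate.
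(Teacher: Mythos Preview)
Your argument is correct (for the total-variation normalization, which is what the paper intends in this lemma), but the paper takes a shorter and more direct route. Instead of a coordinate-by-coordinate hybrid, the paper builds a single coupling of $\mu_T$ and $\mu^{|S}_T$: sample $x_S\sim\mu_S$, then sample $y_T\sim\mu_{S\cup T}(\cdot\mid x_S)$ and, independently, sample each $x_i\sim\set{X_i\mid x_S}$ for $i\in T$. By consistency of the local distributions, $y_i$ and $x_i$ are two independent draws from the same conditional marginal, so $\Pr[x_i\ne y_i\mid x_S]=1-\cp(X_i\mid x_S)=\Var[(X_i\mid x_S)]$; a union bound over $i\in T$ followed by averaging over $x_S$ finishes the proof. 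Your one-step estimate $\E_y\Norm{P_y-P}\le 1-\sum_a P(a)^2$ via $\min(u,w)\ge uw$ is exactly this same i.i.d.\ collision identity in disguise, so the two proofs rest on the same fact; the hybrid machinery simply isn't needed here because a global coupling is available for free. The paper's version is shorter and sidesteps the normalization worry you flag at the end, since coupling bounds total variation directly.
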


\begin{proof}
Consider the following experiment,
\begin{enumerate}
\item Sample a assignment $x_S\in [k]^S$ for $S$ according to its local
    distribution $\mu_S$,
\beq
      x_S\sim  \mu_S\mper
    \eeq

   \item Sample an assignment $y_T \in [k]^T$ according to the local
	   distribution for $S \cup T$ conditioned on the assignment
	   $x_S$ for $S$,
	    \beq
      y_T \sim \mu_{S\cup T } \mid x_S\mper
    \eeq

  \item For every vertex $t\in T$, sample a label $x_t\in[k]$
    according to the local distribution for $S\cup \set{t}$ conditioned on
    the assignment $x_S$ for $S$,
\beq
      x_t \sim \mu_{S,t } \mid x_S\mper
    \eeq
\end{enumerate}
Clearly the distribution of $y_T$ is $\mu_T$, while the distribution of
$x_{T}$ is $\mu^{|S}_{T}$.  For any $t \in T$, the coordinates $x_t$
and $y_t$ are independent samples from $\mu_{S,t } \mid x_S$.
Therefore we have,
\beq \Pr[x_t \neq y_t | x_S] = 1 - \cp(\{x_t | x_S\}) = \Var\left[(X_t
|x_S) \right] \mper \eeq
By a union bound we get,
\beq \Pr[x_T \neq y_T | x_S] =  \sum_{t \in T} \Var\left[(X_t
|x_S) \right] \mper \eeq
Averaging over the different choices of $x_S$,
\beq \Pr[x_T \neq y_T] = \E_{x_S} \left[\sum_{t \in T} \Var\left[(X_t
|x_S) \right] \right] = \sum_{t \in T} \Var\left[X_t
|X_S \right] \mper \eeq
Therefore, the statistical distance between the distributions $\mu_T$
and $\mu^{|S}_{T}$ associated with $y_T$ and $x_T$ is atmost $\sum_{t \in T} \Var\left[X_t
|X_S \right]$.
\end{proof}

\subsection{Unique Games on Low Rank Graphs}

	Let $G$ be an instance of unique games whose constraint graph
	$G$ has low threshold rank.  Let $\mathcal{V} =
	\set{v_{ia}}_{i \in V, a \leq [k]}$ be an SDP solution for
	$G$, and let $\{ \mu_S \}_{S \sse V, |S| \leq m}$ denote the
	associated set of locla distributions.  Let $X_1, \ldots, X_n$
	denote the associated $m$-local random variables.  The main result of
	this section shows that there exists a small set of seed
	vertices fixing whose value determines the value of almost
	every other vertex.  Formally, we show the following

\begin{lemma} \label{lem:existsgoodseedset}
For every integer $m$, there exists a subset of vertices $S \sse V$ of
size $|S| = k^2m$ such that
$$ \E_{i} \left[\Var[X_i|X_S]\right] \leq O\left(\frac{\eta}{\lambda_m}\right) $$
\end{lemma}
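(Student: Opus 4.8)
The plan is to read off the seed set from a low-rank approximation of the family of SDP vectors, and to charge $\E_i\Var[X_i\mid X_S]$ to the ``mass'' of the vectors $v_{ia}$ not captured by the span of the seed vertices. For a set $S$ write $W_S=\mathrm{Span}\{v_{jb}:j\in S,\ b\in[k]\}$ and let $Q_S$ be the orthogonal projection onto $W_S$; the manipulation of the consistency constraints used in \pref{lem:general-tensoring-trick} and \pref{lem:conditional-variance-correlation} gives $\E_{x_S}\Var[X_i\mid X_S=x_S]\le \sum_a\norm{v_{ia}-Q_Sv_{ia}}^2$, so it suffices to exhibit $S$ with $|S|\le k^2m$ and $\E_i\sum_a\norm{v_{ia}-Q_Sv_{ia}}^2=O(\eta/\lambda_m)$. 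To get around the permutations $\pi_{ij}$ I would package the labels of each vertex into the density matrix $\rho_i=\sum_a v_{ia}v_{ia}^{\dagger}$ (equivalently, use the tensoring device behind \pref{lem:ug-tensoring-trick}): the $\rho_i$ are PSD with trace $1$, their ranges are exactly the spaces $W_{\{i\}}$, and the rank-one bound $\norm{\rho_i-\rho_j}_F\le\sum_a\norm{v_{ia}-v_{j\pi_{ij}(a)}}\bigl(\norm{v_{ia}}+\norm{v_{j\pi_{ij}(a)}}\bigr)$ combined with Cauchy--Schwarz gives $\E_{ij\sim G}\norm{\rho_i-\rho_j}_F^2=O(\eta)$ --- the $\rho_i$ are \emph{smooth over the constraint graph}. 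Moreover, if $\rho_i\in\mathrm{Span}\{\rho_j:j\in S\}$ then $\mathrm{range}(\rho_i)\subseteq\sum_{j\in S}\mathrm{range}(\rho_j)=W_S$, so approximating the $\rho_i$ by a few of the $\rho_j$'s is exactly what the reduction asks for.

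Next I would apply \pref{thm:low-rank-approx-general} to the vectors $\rho_1,\dots,\rho_n$ (viewed in $\R^{n\times n}$), with $Q_U$ the projector onto the orthogonal complement of $\mathrm{Span}\{\rho_j:j\in U\}$ (the flexibility of the ``general'' version is there in case one wants to additionally project out the low-frequency directions of $G$). This produces a set $U$ of at most $O(1/\e')$ vertices such that the residuals $\sigma_i=Q_U\rho_i$ have small \emph{global} correlation, $\E_{i,j}\norm{\sigma_i}\norm{\sigma_j}\iprod{\bar\sigma_i,\bar\sigma_j}^2\le\e'$. Taking $S=U$ and absorbing into the dimension count the rank $\le k$ of each $\rho_i$ (this is what is lost in passing from ``$\rho_i\in\mathrm{Span}\{\rho_j\}_{j\in S}$'' back to ``$v_{ia}\in W_S$''), a choice of $\e'$ of order $1/(k^2m)$ keeps $|S|\le k^2m$ (pad with arbitrary vertices for equality); the exact value of $\e'$ is dictated by the last step.

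The crux is turning ``small global correlation of the $\sigma_i$'' into ``small residual mass $W:=\E_i\norm{\sigma_i}_F^2$''. Since $\sigma_i=Q_U\rho_i$ inherits smoothness ($\norm{\sigma_i-\sigma_j}_F\le\norm{\rho_i-\rho_j}_F$), expanding $\E_{ij\sim G}\norm{\sigma_i-\sigma_j}_F^2=O(\eta)$ gives $\E_{ij\sim G}\iprod{\sigma_i,\sigma_j}\ge W-O(\eta)$; rescaling the $\sigma_i$ to average squared norm $1$, this says they are \emph{locally} correlated over $G$ at level $1-O(\eta/W)$, while \pref{thm:low-rank-approx-general} has forced their global correlation to be at most $\e'/W^2$. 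Feeding this pair of bounds into \pref{lem:sdptothresholdrank} (local-vs-global for low threshold rank), applied at ``dimension'' $m$, yields the dichotomy: either $W=O(\eta/\lambda_m)$, which is the desired conclusion, or $G$ has more than $m$ eigenvalues exceeding a value comparable to $\lambda_m$, and the latter is impossible by the choice of $m$ once $\e'$ has been set small enough. I expect this calibration --- balancing the low-rank-approximation parameter $\e'$, the rank-$k$ loss in going from density matrices back to label vectors, and the threshold inside \pref{lem:sdptothresholdrank} --- to be the main obstacle; a routine but necessary side check is that the relaxation used here (the basic SDP together with $O(k^2m)$ rounds of Sherali--Adams) really does supply all the inner products and the PSD conditional-covariance matrices invoked above.
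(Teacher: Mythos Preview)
Your high-level plan is the same as the paper's: pick the seed set by a low-rank approximation of the SDP vectors, bound $\Var[X_i\mid X_S]$ by the residual mass $\sum_a\norm{P_S^\perp v_{ia}}^2$, and control that mass by a local-versus-global argument using \pref{lem:sdptothresholdrank}. The difference is in \emph{which} vectors you feed into \pref{thm:low-rank-approx-general}, and this is where your version develops a real quantitative gap.

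You apply the low-rank approximation to the density matrices $\rho_i=\sum_a v_{ia}v_{ia}^{\dagger}$ and control $W=\E_i\norm{\sigma_i}_F^2$. But $\norm{\sigma_i}_F$ does not control $\sum_a\norm{P_S^\perp v_{ia}}^2$ tightly enough. What is true is
\[
  \sum_a\norm{P_S^\perp v_{ia}}^2
  = \Tr\bigl(P_S^\perp \rho_i P_S^\perp\bigr)
  = \Tr\bigl(P_S^\perp \sigma_i P_S^\perp\bigr)
  \le \sqrt{k}\,\bignorm{P_S^\perp \sigma_i P_S^\perp}_F
  \le \sqrt{k}\,\norm{\sigma_i}_F,
\]
using that $P_S^\perp\rho_iP_S^\perp$ is PSD of rank at most $k$ (this inequality is tight). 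So even after your dichotomy yields $W=O(\eta/\lambda_m)$ you only get
\(
  \E_i\sum_a\norm{P_S^\perp v_{ia}}^2\le \sqrt{k}\,\E_i\norm{\sigma_i}_F\le \sqrt{k}\sqrt{W}=O\bigl(\sqrt{k\eta/\lambda_m}\bigr),
\)
not $O(\eta/\lambda_m)$. Your sentence ``approximating the $\rho_i$ by a few of the $\rho_j$'s is exactly what the reduction asks for'' is the step that fails: Frobenius closeness of density matrices is not the same as closeness of the label vectors to $W_S$.

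The paper avoids this by applying \pref{thm:low-rank-approx-general} directly to the $nk$ label vectors $\{v_{ia}\}$ with parameter $1/(k^2m)$ (this is where the $k^2$ really comes from), and then aggregating the residuals $u_{ia}=P_S^\perp v_{ia}$ via the tensoring
\(
  \tilde u_{ia}=\norm{u_{ia}}\,\bar u_{ia}^{\otimes 2}\otimes \bar v_{ia}
\)
and $U_i=\sum_a\tilde u_{ia}$. The extra $\bar v_{ia}$ factor is the point: it forces $\iprod{\tilde u_{ia},\tilde u_{ib}}=0$ for $a\ne b$, so that $\norm{U_i}^2=\sum_a\norm{u_{ia}}^2$ \emph{exactly equals} the residual mass you need, while $U_i$ still inherits smoothness across the \uniquegames edges (\pref{lem:highlocalcorrelation}) and the global-correlation bound from the low-rank approximation. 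With this choice the dichotomy closes with no square-root loss and gives $\E_i\norm{U_i}^2=O(\eta/\lambda_m)$ directly.
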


	To this end, we will relate conditioning a random
	variable $X_i$ on a set $X_S$, to projecting the SDP vectors
	corresponding to the variable $X_i$ in to the span of the
	vectors corresponding to $X_S$.  This analogy is formalized in
	the following lemma.

\begin{lemma} \label{lem:vectorstoconditionalvariance}
	Let $X_1, X_2, \ldots, X_r$ be random variables with range
	$[k]$ with a joint distribution $\mu$ associated with them.
	For each $i \in [r]$, $a \in [k]$, let $X_{ia}$ be the
	indicator of the event that $X_{i} = a$.
	Let us suppose there exists vectors $\set{v_{ia}}_{i \in [r], a \in
	[k]}$ such that
	\beq \iprod{v_{ia}, v_{jb}} = \Pr_{X_i X_j}\Set{X_i = a, X_j =
	b}   = \E[X_{ia} X_{jb}] \mper \eeq
	Then, for every subset $S \sse [r]$ we have (1) $\Var\left[
			X_{ia} | X_S \right] \leq   \norm{P_{S} v_{ia}}^2 \mper$
	 and (2) $ \Var\left[X_{i} |
		X_S\right] \leq  \sum_{a \in [k]} \norm{P_{S} v_{ia}}^2 \mper$
	where $P_S$ is the projector of $\R^n$ in to the space
	orthogonal to the span of $\set{v_{jb}}_{j \in S, b \in [k]}$.
\end{lemma}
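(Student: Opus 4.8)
The plan is to translate the statement into a fact about orthogonal projections in the Hilbert space $L^2(\mu)$. First I would observe that although the vectors $\set{v_{ia}}$ are given abstractly in $\R^n$, everything about them that matters here is their Gram matrix, and by hypothesis $\iprod{v_{ia},v_{jb}} = \E[X_{ia}X_{jb}] = \iprod{X_{ia},X_{jb}}_{L^2(\mu)}$; that is, their Gram matrix coincides with that of the indicator functions $X_{ia}$, regarded as elements of $L^2(\mu)$. Hence the linear map $v_{ia}\mapsto X_{ia}$ extends to a linear isometry $\phi$ from $\mathrm{span}\set{v_{ia}}_{i\in[r],a\in[k]}$ onto $W\defeq \mathrm{span}\set{X_{ia}}_{i\in[r],a\in[k]}\sse L^2(\mu)$. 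Since a linear isometry carries the orthogonal projection onto a subspace to the orthogonal projection onto its image, and since $P_S$ restricted to $\mathrm{span}\set{v_{ia}}$ equals $I-\Pi_{V_S}$ with $V_S\defeq\mathrm{span}\set{v_{jb}}_{j\in S,b\in[k]}$, we obtain
\[
  \snorm{P_S v_{ia}} = \snorm{X_{ia}-\Pi_{W_S}X_{ia}} = \dist_{L^2}(X_{ia},W_S)^2\mcom
\]
where $W_S\defeq\mathrm{span}\set{X_{jb}}_{j\in S,b\in[k]}$ and $\Pi_{W_S}$ is the $L^2$-projection onto $W_S$. In words, $\snorm{P_S v_{ia}}$ is exactly the squared $L^2$-distance from the indicator $X_{ia}$ to the linear span of the seed indicators.

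Next I would bound this quantity from below by the conditional variance. The key point is that every $X_{jb}$ with $j\in S$ is a $\sigma(X_S)$-measurable function, so $W_S$ sits inside the (in general much larger) subspace $L^2(\sigma(X_S))$ of all $\sigma(X_S)$-measurable functions, and projecting onto a larger subspace only decreases distance. Moreover the $L^2$-projection of $X_{ia}$ onto $L^2(\sigma(X_S))$ is precisely the conditional expectation $\E[X_{ia}\mid X_S]$, and by the law of total variance $\snorm{X_{ia}-\E[X_{ia}\mid X_S]} = \E_{\set{X_S}}\Var[X_{ia}\mid X_S]$. Chaining these observations,
\[
  \snorm{P_S v_{ia}} = \dist_{L^2}(X_{ia},W_S)^2 \ \ge\ \dist_{L^2}\bigl(X_{ia},L^2(\sigma(X_S))\bigr)^2 = \E_{\set{X_S}}\Var[X_{ia}\mid X_S]\mcom
\]
which is claim (1) (recall that in our notation $\Var[X_{ia}\mid X_S]$ already denotes this average over $X_S$ of the conditional variance). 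Claim (2) then follows by summing claim (1) over $a\in[k]$ and invoking the definition $\Var[X_i\mid X_S]=\sum_{a\in[k]}\Var[X_{ia}\mid X_S]$ from \pref{eq:defvarianceoverk}, applied to the conditioned variable $X_i\mid X_S$.

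The routine verifications are that $\phi$ is well defined and norm-preserving --- immediate from $\snorm{\sum_{ia}c_{ia}v_{ia}} = \sum_{ia,jb}c_{ia}c_{jb}\E[X_{ia}X_{jb}] = \snorm{\sum_{ia}c_{ia}X_{ia}}$, which in particular shows the map respects the kernel --- and that an isometry commutes with orthogonal projection onto a subspace of its domain (here one uses $V_S\sse\mathrm{span}\set{v_{ia}}$, so that $P_S$ agrees with $I-\Pi_{V_S}$ on this span and $\phi(V_S)=W_S$). The only genuinely conceptual step, and the one I would be most careful about, is the choice of the comparison subspace: one must pass from $W_S$ to the full space of $\sigma(X_S)$-measurable functions so that the relevant $L^2$-projection becomes the conditional expectation, while keeping track that this enlargement makes the distance --- and hence the upper bound --- only smaller, so that the inequality points in the direction asserted. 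Everything else is a direct consequence of the Hilbert-space picture.
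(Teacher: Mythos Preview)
Your proof is correct and is essentially the same argument as the paper's, just phrased in more abstract Hilbert-space language. The paper writes $v_{ia}=\sum_{j\in S,b}c_{jb}v_{jb}+P_Sv_{ia}$ explicitly, defines the estimator $C_S=\sum_{j\in S,b}c_{jb}X_{jb}$, and uses $\Var[X_{ia}\mid x_S]=\min_C\E[(X_{ia}-C)^2\mid x_S]\le\E[(X_{ia}-C_S)^2\mid x_S]$ together with the Gram-matrix match to conclude; this is exactly your isometry step followed by the comparison $W_S\subseteq L^2(\sigma(X_S))$, carried out concretely rather than via projections.
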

\begin{proof}
	Let us suppose $v_{ia} = \sum_{j \in S, b \in [k]} c_{jb}
	v_{jb} + P_{S}v_{ia}$.  Define a random variable $C_S$ as
	follows,
	\beq C_S \defeq \sum_{j \in S, b \in [k]} c_{jb} X_{jb} .\eeq
	Note that on fixing the values $\{X_{j}\}_{j \in
	S}$, the random variable $C_S$ is fixed.

	By the definition of variance of a real random variable we have the
	following inequality.
	\begin{equation*}
		\Var\left[ (X_{ia} | x_{S})\right] =  \min_{C} \E[  (X_{ia} - C)^2
		| x_{S}] \leq \E_{X_{ia}|x_{S}}[ (X_{ia} - C_S)^2 | x_{S}]\mper
	\end{equation*}
	Averaging the above inequality over the settings of $x_{S}$,
	we get
	\begin{equation} \label{eq:projvariance1}
		\Var[X_{ia}|X_S] = \E_{x_{S}} \Var(\{X_{ia}|x_{S}\}) \leq
		\E_{x_{S}}\E_{X_{ia}|x_{S}} [(X_{ia} - C_S)^2 | x_{S}]
		= \E_{\mu}\left[(X_{ia} - \sum_{j \in S,b\in [k]}
		c_{jb}X_{jb})^2 \right] \mper
	\end{equation}
	Note that the second moments of the random variables
	$\{X_{ia}\}_{i \in [r], a \in [k]}$ match with the
	corresponding inner products of vectors $\{v_{ia}\}_{i \in
	[r],a \in[k]}$.  Hence,
	\begin{equation}\label{eq:projvariance2}
	\E_{\mu}\left[(X_{ia} - \sum_{j \in S,b\in [k]}
		c_{jb}X_{jb})^2 \right] =  \norm{ v_{ia} - \sum_{j \in S,b\in [k]}
		c_{jb}v_{jb}}^2  = \norm{P_{S} v_{ia}}^2  \mper
	\end{equation}
	The claim (1) follows from \eqref{eq:projvariance1} and
	\eqref{eq:projvariance2}.

	The claim (2) follows from
	(1) and the definition of variance
	of a random variable taking values over $[k]$.
\end{proof}

\begin{proof}[Proof of \pref{lem:existsgoodseedset}]

	For a subset $T \sse \mathcal{V} = \set{v_{ia}}_{i \in V, a
	\in [k]}$, let $S_T$ be the set of vertices associated with it
	namely,
	$$ S_{T} = \{i \in V \mid \exists b \in [k], v_{ib} \in
	T \} \mcom$$
	Let $Q_T$ denote the projector on to the subspace orthogonal
	to span of $\{v_{ia} | i \in S_{T}, a \in [k]\}$.
	In particular, $Q_T$ is a projector on to a subspace
	orthogonal to $T$ for all $T \sse \mathcal{V}$.
	
	Apply \pref{thm:low-rank-approx-general} on the set of vectors
	$\mathcal{V} = \set{v_{ia}}_{i \in V, a \in [k]}$ with the
	projectors $Q_T$ for a subset $T \sse \mathcal{V}$.
	\pref{thm:low-rank-approx-general} implies that there exists a
	choice of $T \sse \mathcal{V}$ of size $|T| = k^2m$ such that if $u_{ia} =
	Q_T v_{ia}$ then,
	\begin{equation}\label{eq:lowglobalinnerproducts}
	\E_{i,j,a,b} \norm{u_{ia}} \, \norm{u_{jb}} \,
	\iprod{\bar u_{ia},\bar u_{jb}}^2 \leq \frac{1}{k^2m}
	\end{equation}
	Let $S_T \sse V$ be the vertex set associated with the set of
	vectors $T$.  We will drop the subscript and refer to this set
	as $S$.  Let $P_S$
	denote the projector in to the space orthogonal to the span of
	vectors $\set{v_{ia}}_{i \in S, a \in [k]}$.

	Let us fix some notation: $u_{ia} \defeq P_S v_{ia}$ , $\tilde u_{ia} \defeq \norm{u_{ia}} \bar u_{ia}^{\otimes 2} \otimes \bar v_{ia}$.
	As for each $i \in V$, the vectors $\set{v_{ia}}_{a \in [k]}$ are orthogonal to each other, the set of vectors $\set{\tilde u_{ia}}_{a \in [k]}$ are orthogonal to each other too.   For each vertex $i \in V$, we can associate a vector $U_i$	defined as $U_i \defeq \sum_{a \in [k]} \tilde u_{ia}$.

	From
	\eqref{eq:lowglobalinnerproducts}, we get the following bound
	on the average correlation of vectors $\set{\tilde u_{ia}}_{i
	\in V, a \in [k]}$,
	\begin{equation}
	\E_{i,j,a,b} \iprod{ \tilde  u_{ia}, \tilde  u_{jb}} \leq
		\E_{i,j,a,b} \norm{u_{ia}} \, \norm{u_{jb}} \,
		\iprod{\bar u_{ia},\bar u_{jb}}^2 \leq \frac{1}{k^2
		m}
	\end{equation}
	Using the low global correlation between vectors
	 $\set{\tilde u_{ia}}_{i \in V, a \in [k]}$
	 (\eqref{eq:lowglobalinnerproducts}), we bound the global correlation between the vectors $\set{U_i}_{i \in
	V}$ as shown below,
 \[
		\E_{i,j \in V} \left[ \iprod{U_i, U_j} \right]
			= \E_{i,j \in V} \sum_{a,b \in [k]}
			\iprod{ \tilde
			u_{ia}, \tilde u_{jb}}
			 = k^2 \E_{i,j \in V, a,b \in [k]}
			\iprod{ \tilde	u_{ia}, \tilde u_{jb}}
			\leq \frac{1}{m}  \mper
	\]
	From \pref{lem:localtoglobal}, the low global correlation of
	vectors $\set{U_i}_{i \in V}$ implies that their squared
	length is small, i.e.,
	\beq \E_{i \in V} \left[\norm{U_i}^2\right] \leq
	O\left(\frac{\eta}{\lambda_m}\right) \mper
 \eeq
	Notice that
\beq
\norm{U_i}^2 = \sum_{a \in [k]} \norm{\tilde u_{ia}}^2 = \sum_{a \in [k]} \norm{P_s v_{ia}}^2 \mper
\eeq
 	By \pref{lem:vectorstoconditionalvariance}, this implies that
	\beq
		\E_{i \in V} \Var[X_i | X_S] \leq
		O\left(\frac{\eta}{\lambda_m}\right) \mper
	\eeq
\end{proof}

\begin{lemma}[High Local Correlation]\label{lem:highlocalcorrelation}
If $\mathcal{V}$ is an SDP solution to unique games with
	value $1 - \eta$, i.e.,
	\beq \E_{(i,j) \in E} \sum_{a \in [k]} \norm{v_{ia} -
          v_{j\pi_{ij}(a)}}^2 \leq \eta  \mcom~\eeq
	then
        \beq
        \E_{(i,j) \in E} \norm{U_i- U_j}^2 \leq 3\eta \mper
        \eeq
\end{lemma}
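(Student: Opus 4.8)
The plan is to compare $U_i$ and $U_j$ along the edge permutation $\pi=\pi_{ij}$ coordinate by coordinate, after first discarding cross terms; the heart of the argument is then a single scalar estimate, which is morally the elementary inequality $\norm{w^{\otimes 3}-(w')^{\otimes 3}}^2\le 3\norm{w-w'}^2$ for unit vectors $w,w'$ (equivalently $3t-t^3\le 2$ for $t=\iprod{w,w'}\in[-1,1]$), reflecting the fact that $\tilde u_{ia}=\norm{u_{ia}}\,\bar u_{ia}^{\otimes 2}\otimes\bar v_{ia}$ is a third tensor power up to normalization.

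First I would expand $\norm{U_i-U_j}^2=\norm{U_i}^2+\norm{U_j}^2-2\iprod{U_i,U_j}$. Since for each vertex the vectors $\set{v_{ia}}_a$ are orthogonal, so are the $\set{\tilde u_{ia}}_a$, whence $\norm{U_i}^2=\sum_a\norm{\tilde u_{ia}}^2=\sum_a\norm{u_{ia}}^2$ and, using that $\pi$ is a bijection, $\norm{U_j}^2=\sum_a\norm{\tilde u_{j\pi(a)}}^2$. The key point is that \emph{every} cross term is nonnegative,
\begin{displaymath}
  \iprod{\tilde u_{ia},\tilde u_{jb}}=\norm{u_{ia}}\norm{u_{jb}}\,\iprod{\bar u_{ia},\bar u_{jb}}^2\,\iprod{\bar v_{ia},\bar v_{jb}}\ge 0\mcom
\end{displaymath}
because $\iprod{\bar v_{ia},\bar v_{jb}}=\iprod{v_{ia},v_{jb}}/(\norm{v_{ia}}\norm{v_{jb}})$ is a (normalized) entry of a local distribution and the other factors are manifestly nonnegative. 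Hence $\iprod{U_i,U_j}=\sum_{a,b}\iprod{\tilde u_{ia},\tilde u_{jb}}\ge\sum_a\iprod{\tilde u_{ia},\tilde u_{j\pi(a)}}$, and collecting terms gives $\norm{U_i-U_j}^2\le\sum_{a\in[k]}\norm{\tilde u_{ia}-\tilde u_{j\pi(a)}}^2$.

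It then remains to prove, for each coordinate $a$ with $b=\pi(a)$, the per-coordinate bound $\norm{\tilde u_{ia}-\tilde u_{jb}}^2\le 3\norm{v_{ia}-v_{jb}}^2$; summing over $a$ and averaging over a uniform edge $(i,j)\in E$ then yields $\E_{(i,j)\in E}\norm{U_i-U_j}^2\le 3\,\E_{(i,j)\in E}\sum_a\norm{v_{ia}-v_{j\pi_{ij}(a)}}^2\le 3\eta$ by the high-local-correlation hypothesis. For the per-coordinate bound I would write $\alpha=\norm{u_{ia}}$, $\beta=\norm{u_{jb}}$, $r=\iprod{\bar u_{ia},\bar u_{jb}}$, $s=\iprod{\bar v_{ia},\bar v_{jb}}\in[0,1]$, $\delta=\norm{v_{ia}-v_{jb}}$, so that $\norm{\tilde u_{ia}-\tilde u_{jb}}^2=\alpha^2+\beta^2-2\alpha\beta r^2 s$, and then use that $P_S$ is an orthogonal projection, so $\alpha^2+\beta^2-2\alpha\beta r=\norm{P_S(v_{ia}-v_{jb})}^2\le\delta^2$. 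This gives $\norm{\tilde u_{ia}-\tilde u_{jb}}^2=\norm{u_{ia}-u_{jb}}^2+2\alpha\beta r(1-rs)\le\delta^2+2\alpha\beta r(1-rs)$, so the task reduces to showing $\alpha\beta r(1-rs)\le\delta^2$. I would dispatch this by a short case split: it is immediate when $\alpha\beta=0$, when $r\le 0$ (then $1-rs\ge 1$), or when $rs\ge 1$ (then $1-rs\le 0$); otherwise $\alpha^2+\beta^2\ge 2\alpha\beta$ yields $1-r\le\delta^2/(2\alpha\beta)$, hence $1-rs=(1-s)+s(1-r)\le(1-s)+(1-r)$, so $\alpha\beta r(1-rs)\le\alpha\beta(1-s)+\delta^2/2\le\norm{v_{ia}}\norm{v_{jb}}(1-s)+\delta^2/2\le\delta^2$, using $\delta^2=\norm{v_{ia}}^2+\norm{v_{jb}}^2-2\norm{v_{ia}}\norm{v_{jb}}s\ge 2\norm{v_{ia}}\norm{v_{jb}}(1-s)$. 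The main obstacle — and the only place any care is needed — is precisely this scalar estimate with the sharp constant $3$: one must resist simply dropping the cross inner product $\iprod{\tilde u_{ia},\tilde u_{jb}}$ (a naive triangle-inequality telescoping of $\tilde u_{ia}-\tilde u_{jb}$ through its three tensor slots loses a constant), and instead combine it with the nonexpansiveness of $P_S$ as above, which is exactly what produces the $3$.
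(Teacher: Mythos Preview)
Your argument is correct and follows the same overall skeleton as the paper: first use nonnegativity of the cross terms $\iprod{\tilde u_{ia},\tilde u_{jb}}\ge 0$ to reduce $\norm{U_i-U_j}^2$ to $\sum_a\norm{\tilde u_{ia}-\tilde u_{j\pi(a)}}^2$, then bound each summand by a constant times the corresponding $v$-distance. The difference is only in how the per-coordinate bound is obtained. The paper uses the identity $\norm{x-y}^2=(\norm{x}-\norm{y})^2+\norm{x}\,\norm{y}\,\norm{\bar x-\bar y}^2$ together with the subadditivity $\norm{\bar x_1\otimes\bar x_2-\bar y_1\otimes\bar y_2}^2\le\norm{\bar x_1-\bar y_1}^2+\norm{\bar x_2-\bar y_2}^2$ for unit tensors, yielding the slightly more informative intermediate estimate $\norm{\tilde u_{ia}-\tilde u_{jb}}^2\le 2\norm{u_{ia}-u_{jb}}^2+\norm{v_{ia}-v_{jb}}^2$, which after summing gives $2\eta+\eta=3\eta$. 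You instead expand everything in the scalars $\alpha,\beta,r,s$ and prove directly $\norm{\tilde u_{ia}-\tilde u_{jb}}^2\le 3\norm{v_{ia}-v_{jb}}^2$ via the inequality $\alpha\beta r(1-rs)\le\delta^2$; your case split is clean and the uses of $\alpha\beta\le\norm{v_{ia}}\norm{v_{jb}}$ and $2\alpha\beta(1-r)\le\delta^2$ are exactly the nonexpansiveness of $P_S$ in disguise. The paper's route is more modular (the tensor lemma is reusable and makes transparent why three tensor slots give the constant $3$), while yours is self-contained and avoids any auxiliary identity; both land on the same sharp constant.
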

  We defer the proof to Appendix~\ref{app:proofs}.

\begin{lemma}[Local Correlation $\longrightarrow$ Global Correlation]
	\label{lem:localtoglobal}
	If the vectors $\set{U_i}_{i \in V}$ satisfy,
	\beq \E_{i} \left[\norm{U_i}^2\right] \geq
	\frac{4\eta}{\lambda_m} \mcom\eeq
	then the average correlation among the vectors $\set{U_i}_{i
	\in V}$ is at least $\nfrac{1}{m}$, i.e.,
	\beq \E_{i,j \in V} \iprod{U_i, U_j} \geq \frac{1}{m} \mper\eeq
\end{lemma}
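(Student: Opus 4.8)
The plan is to run the spectral argument behind \pref{lem:sdptothresholdrank} in reverse: by \pref{lem:highlocalcorrelation} the vector-valued function $i\mapsto U_i$ has tiny Dirichlet energy with respect to $G$, and together with the mass lower bound $\E_i\snorm{U_i}\ge 4\eta/\lambda_m$ this forces most of its mass into the span of the eigenfunctions of $G$ of eigenvalue at least $\lambda_m$; a Cauchy--Schwarz/counting step over those at most $m$ modes then produces the global correlation.

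Concretely, set $W:=\E_i\snorm{U_i}$. \pref{lem:highlocalcorrelation} gives $\E_{ij\sim G}\snorm{U_i-U_j}\le 3\eta$, \ie $\E_{ij\sim G}\iprod{U_i,U_j}\ge W-\tfrac32\eta$. Expand $i\mapsto U_i$ in an orthonormal eigenbasis $\phi_1,\dots,\phi_n$ of $G$, with eigenvalues $\lambda_1\ge\dots\ge\lambda_n$, writing $U_i=\sum_r\phi_r(i)\,c_r$. Then $W=\sum_r\snorm{c_r}$ and $\E_{ij\sim G}\iprod{U_i,U_j}=\sum_r\lambda_r\snorm{c_r}$, so $\sum_r(1-\lambda_r)\snorm{c_r}\le\tfrac32\eta$. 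Since $1-\lambda_r\ge1-\lambda_m$ for every $r>m$, this forces $\sum_{r>m}\snorm{c_r}\le\tfrac{3\eta}{2(1-\lambda_m)}$, which (using $W\ge4\eta/\lambda_m$ once more, when $\lambda_m$ is bounded away from $1$) is only a small fraction of $W$; hence a constant fraction of the mass $W$ of the $U_i$'s lies in $E:=\mathrm{span}\set{\phi_1,\dots,\phi_m}$.

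Let $g_i$ denote the $E$-component of $i\mapsto U_i$. Then $(\iprod{g_i,g_j})_{i,j}$ is positive semidefinite of rank at most $m$ with normalized trace $\E_i\snorm{g_i}=\Omega(W)$, so Cauchy--Schwarz over its at most $m$ nonzero eigenvalues gives $\E_{i,j}\iprod{g_i,g_j}^2\ge\Omega(W^2/m)$. To pass from here to $\E_{i,j}\iprod{U_i,U_j}$ one pays the small mass lost in projecting onto $E$ and uses that for these particular vectors $\iprod{U_i,U_j}\in[0,1]$ (nonnegative because each $\iprod{\tilde u_{ia},\tilde u_{jb}}$ is a product of nonnegative quantities, and at most $\norm{U_i}\,\norm{U_j}\le1$ because $\snorm{U_i}=\sum_a\norm{P_S v_{ia}}^2\le1$), so that $\iprod{U_i,U_j}\ge\iprod{U_i,U_j}^2$. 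The main obstacle I anticipate is carrying out this last bookkeeping sharply enough to reach the stated constant $1/m$ rather than merely $\Omega(W^2/m)$, together with handling the case where $\lambda_m$ is close to $1$ (the regime relevant to \pref{thm:subexp}), where one must instead work with the eigenspace of a larger threshold such as $1-c\eta$; the cleanest implementation of both is to feed the normalized vectors $U_i/\sqrt W$ directly into the counting step of the proof of \pref{lem:sdptothresholdrank}.
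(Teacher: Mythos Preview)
Your approach is essentially the paper's: it too invokes \pref{lem:highlocalcorrelation}, normalizes so that $\E_i\snorm{U_i}=1$ (making the local correlation $\ge 1-\lambda_m/2$), applies the contrapositive of \pref{lem:sdptothresholdrank} to get $\E_{i,j}\iprod{U_i,U_j}^2\ge 1/m$, and finishes via $\iprod{U_i,U_j}\in[0,1]$ to drop the square. Your closing suggestion---feed $U_i/\sqrt{W}$ into \pref{lem:sdptothresholdrank}---is exactly what the paper does, and once you normalize both anticipated obstacles dissolve: the mass outside the top-$m$ eigenspace is at most $\tfrac{3\eta}{2\lambda_m}\le\tfrac38 W$ regardless of the size of $\lambda_m$, so no separate treatment of any regime is needed.
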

\begin{proof}
	By \pref{lem:highlocalcorrelation}, the vectors $\set{U_i}$
	satisfy
	$$\E_{(i,j) \in E} \norm{U_i - U_j}^2 \leq 3\eta \mper$$
	This implies that,
	$$ \E_{(i,j) \in E} \iprod{U_i, U_j} \geq \E_{i} \norm{U_i}^2
	- \frac{3}{2}\eta \mper$$
	Let $\E_{i} \norm{U_i}^2 = C \geq 4\eta/\lambda_m$.  Normalize the vectors $U_i$ so as to make their average
	squared length equal to $1$. The resulting vectors have
	correlation at least $(1-\eta/C) \geq 1-\lambda_m/2$.  By
	\pref{lem:sdptothresholdrank}, this implies that $\E_{i,j \in
	V} \iprod{U_i, U_j}^2 \geq \frac{1}{m}$.  Since $\norm{U_i}
	\leq 1$ for all $i \in V$, we get
	\beq\E_{i,j \in
	V} \iprod{U_i, U_j} \geq  \E_{i,j \in	V} \iprod{U_i, U_j}
	\geq \frac{1}{m} \mper\eeq
\end{proof}

\subsection{Wrapping Up}

Our main result about \uniquegames (\pref{thm:subexp}) is  a direct consequence of \pref{thm:ug-higher-eigen} and \pref{thm:subexponential}  presented here.
\begin{theorem} \label{thm:ug-higher-eigen}
	For every positive integer $m$, there exists an algorithm
	running in time $n^{O(mk^2)}$ that given a unique games instance
	$\Gamma$ over alphabet $[k]$ with value $1-\eta$, finds a labelling satisfying
	$1-O(\frac{\eta}{\lambda_m})$ fraction of the edges. Here
	$\lambda_m$ is the $m^{th}$ smallest eigen value of the
	Laplacian of the constraint graph $\Gamma$.
\end{theorem}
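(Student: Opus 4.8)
The plan is to assemble the structural lemmas proved above into a rounding algorithm: solve the Sherali--Adams-augmented SDP at level $\Theta(k^2 m)$, invoke \pref{lem:existsgoodseedset} to locate a small seed set $S$ on which almost every vertex is nearly determined, and then run propagation rounding with $S$, controlling the loss via \pref{lem:propagationroundingerror}. Since the algorithm cannot identify the good $S$ directly, it will simply try all seed sets of the prescribed size and keep the best labelling produced.

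Concretely, first I would solve the level-$(k^2 m + 2)$ relaxation (the basic SDP together with all valid constraints on at most $k^2m+2$ vectors); this takes time $n^{O(k^2 m)}$, within the claimed budget, and yields vectors $\set{v_{ia}}$ and consistent local distributions $\set{\mu_T}_{\card T\le k^2m+2}$, hence $m$-local random variables $X_1,\ldots,X_n$ for $m = k^2m+2$. Expanding the objective, $\sum_{a}\Norm{v_{ia}-v_{j\pi_{ij}(a)}}^2 = 2\,\Prob[\mu_{\set{i,j}}]{X_j\neq\pi_{ij}(X_i)}$, and since the integral optimum is a feasible SDP point, the optimal objective $\tilde\eta$ satisfies $\tilde\eta = O(\eta)$, so that $\E_{ij\in E}\Prob[\mu_{\set{i,j}}]{\text{edge }(i,j)\text{ violated}} = O(\eta)$.

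Next, I would apply \pref{lem:existsgoodseedset} (whose proof already runs through \pref{thm:low-rank-approx-general}, \pref{lem:highlocalcorrelation}, \pref{lem:localtoglobal} and \pref{lem:vectorstoconditionalvariance}) to obtain a set $S\sse V$ with $\card S = k^2 m$ and $\E_{i\in V}\Var[X_i\mid X_S] = O(\eta/\lambda_m)$, where $\lambda_m$ is the $m$-th smallest Laplacian eigenvalue. The algorithm enumerates all $\binom{n}{k^2 m} = n^{O(k^2m)}$ candidate sets, runs propagation rounding for each (repeating $\poly(n)$ times to handle the internal randomness), and outputs whichever labelling satisfies the most constraints. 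For the good set $S$, \pref{lem:propagationroundingerror} with $T=\set{i,j}$ for each edge gives $\Norm{\mu_{\set{i,j}}-\mu^{|S}_{\set{i,j}}}\le \Var[X_i\mid X_S]+\Var[X_j\mid X_S]$; averaging over a uniformly random edge and using regularity of $G$ bounds this by $2\,\E_{i\in V}\Var[X_i\mid X_S] = O(\eta/\lambda_m)$. Since $\bigabs{\Prob[\mu^{|S}_{\set{i,j}}]{\text{violated}}-\Prob[\mu_{\set{i,j}}]{\text{violated}}}\le \Norm{\mu_{\set{i,j}}-\mu^{|S}_{\set{i,j}}}$, the expected fraction of edges violated by the rounded labelling is at most $O(\eta) + O(\eta/\lambda_m) = O(\eta/\lambda_m)$ (using $\lambda_m\le 2$), so in expectation it satisfies a $1-O(\eta/\lambda_m)$ fraction; a Markov bound plus the $\poly(n)$ repetitions turns this into a labelling of that quality found with high probability (and it can be made deterministic, since propagation rounding is itself derandomizable by conditional expectations).

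Essentially all of the analytic content is already carried by the earlier lemmas, so the proof is mostly assembly. The one point that needs care is bookkeeping the relaxation level: every local distribution actually used — the seed distribution $\mu_S$, the single-vertex extensions $\mu_{S\cup\set i}$ feeding \pref{lem:vectorstoconditionalvariance}, and the edge extensions $\mu_{S\cup\set{i,j}}$ feeding \pref{lem:propagationroundingerror} — must be present in the solution, which is why $k^2m+2$ rounds are taken, and correspondingly one must check that both the SDP solve and the enumeration over seed sets stay within $n^{O(mk^2)}$ time. I do not expect any genuine obstacle beyond this accounting.
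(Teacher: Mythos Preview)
Your proposal is correct and follows essentially the same route as the paper's proof: solve the $(k^2m+2)$-round relaxation, enumerate all seed sets of size $k^2m$, invoke \pref{lem:existsgoodseedset} for the existence of a good seed set, and bound the rounding loss via \pref{lem:propagationroundingerror} averaged over edges. Your additional remarks on derandomization and on the level bookkeeping are fine and do not diverge from the paper's argument.
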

\begin{proof}
	The algorithm proceeds by solving the $k^2m + 2$-round Lasserre
	SDP for the given instance.  Starting with the SDP solution,
	the algorithm runs the propagation rounding algorithm starting
	from every possible {\em seed set} $S$ of size $|S| = k^2m$.

	By \pref{lem:existsgoodseedset}, there exists one such set $S$
	for which we have,
	\begin{equation}\label{eq:lowvariance}
	\E_{i} \left[\Var[X_i|X_{S}]\right] \leq
	O\left(\frac{\eta}{\lambda_m}\right) \mper
	\end{equation}
	
	Let $\mu^{|S}$ denote the distribution over global assignments
	output by the propagation rounding scheme.  For an edge
	$(i,j)$, let $\mu_{ij}$ denote the local distribution over
	$[k]^2$ suggested by the SDP solution.  From
	\pref{lem:propagationroundingerror}, the statistical distance
	between $\mu_{ij}$ and $\mu^{|S}_{ij}$ is at most
	$$ \norm{\mu^{|S}_{ij} - \mu_{ij}}_1 \leq \Var[X_i | X_S] +
	\Var[X_j | X_S] \mper $$
	Hence for every edge $(i,j)$,
	$$ \Pr_{\mu^{|S}_{ij}}[x_i = \pi_{ij}(x_j)] \geq
	\Pr_{\mu_{ij}}[x_i = \pi_{ij}(x_j)] - \Var[X_i | X_S] -
	\Var[X_j | X_S] \mper$$
	Averaging over all the edges we see that,
	\begin{eqnarray*}
	 \E_{i,j \in E} \Pr_{\mu^{|S}_{ij}}[x_i = \pi_{ij}(x_j)]
	 & \geq & \E_{i,j \in E} \left[ \Pr_{\mu_{ij}}[x_i = \pi_{ij}(x_j)]
		\right]	- \E_{i,j \in E} \Var[X_i | X_S] - \E_{i,j \in E} \Var[X_j |
		X_S] \\
	 & \geq & \E_{i,j \in E} \left[ \Pr_{\mu_{ij}}[x_i = \pi_{ij}(x_j)]
		\right]	- 2 \E_{i\in V} \Var[X_i | X_S]  \\
		& \geq & \mathsf{Val}(\mathcal{V})	- 2 \E_{i\in V} \Var[X_i | X_S]
	\end{eqnarray*}
	where $\mathsf{Val}(\cV)$ is the SDP objective value of the
	solution $\mathcal{V}$.  Along with \eqref{eq:lowvariance},
	this implies that the algorithm on the choice of the
	appropriate seed set $S$ would find a solution with value at
	least $1 - \eta - O(\frac{\eta}{\lambda_m})$.
\end{proof}

\begin{theorem}\label{thm:subexponential}
  There exists an algorithm that given a \uniquegames instance $\Gamma$
  with vertex set $[n]$, label set $[k]$, and optimal value $1-\e$, finds
  an assignment with value at least $\half$ by rounding an
  $k^2\cdot n^{O(\e^{1/3})}$-round Lasserre solution.
\end{theorem}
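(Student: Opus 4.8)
The plan is to reduce the claim to \pref{thm:ug-higher-eigen} by preprocessing the constraint graph with the recursive spectral partitioning technique of Arora, Barak and Steurer~\cite{AroraBS10}. We may assume $\e$ is below a small absolute constant, since otherwise $k^2 n^{O(\e^{1/3})}\ge n$ and an exact ($n$-round) Lasserre solution already has value $1-\e$, which is $\ge\tfrac12$ whenever a value-$\ge\tfrac12$ assignment exists at all. Fix a small constant $\beta$ and a large constant $C$, and put $\gamma=C\e$. The key external input is that, in $\poly(n)$ time, one can delete at most a $\beta$-fraction of the edges of the constraint graph $G$ so that every connected component $V_i$ of the remaining graph is a small-set expander, and in particular satisfies $\rank_{\ge 1-\gamma}(G[V_i])\le n^{O(\gamma^{1/3})}=n^{O(\e^{1/3})}$; equivalently, letting $m_i$ denote this threshold rank (plus one), the $m_i$-th smallest eigenvalue of the normalized Laplacian of $G[V_i]$ exceeds $\gamma$. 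Set $m:=\max_i m_i\le n^{O(\e^{1/3})}$.

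Next I would solve the $(k^2m+2)$-round Lasserre relaxation of $\Gamma$, obtaining a feasible solution $\cV$ with $m$-local random variables $X_1,\dots,X_n$ and value $\ge 1-\e$. Since the Lasserre consistency and positive-semidefiniteness constraints do not refer to the objective, the restriction of $\cV$ to subsets contained in a single component $V_i$ is a feasible $(k^2m+2)$-round Lasserre solution for the induced \uniquegames instance $\Gamma[V_i]$, of value $1-\eta_i$; and because only $\beta\card E$ edges were deleted, the edge-weighted average of the $\eta_i$ is at most $\e/(1-\beta)$. For each $i$ I would then run, independently, the rounding behind \pref{thm:ug-higher-eigen}: \pref{lem:existsgoodseedset}, applied to the connected graph $G[V_i]$ whose $m_i$-th Laplacian eigenvalue exceeds $\gamma$, yields a seed set $S_i\subseteq V_i$ with $\card{S_i}\le k^2 m_i$ and $\E_{v\in V_i}\Var[X_v\mid X_{S_i}]\le O(\eta_i/\gamma)$, and by \pref{lem:propagationroundingerror} propagation rounding within $V_i$ from $S_i$ loses at most $\Var[X_u\mid X_{S_i}]+\Var[X_v\mid X_{S_i}]$ on each edge $(u,v)$ of $G[V_i]$. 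Gluing these vertex-disjoint partial assignments and discarding the at most $\beta$-fraction of deleted edges, the resulting global assignment has value for $\Gamma$ at least
\[
  (1-\beta)\Bigparen{\,1-\tfrac{\e}{1-\beta}-O\bigparen{\tfrac{\e}{\gamma}}\,}
  \;=\; 1-\beta-O(\e)-O(1/C)\;\ge\;\tfrac12
\]
for $\beta$ a small enough and $C$ a large enough constant. Since the good seed sets $S_i$ are only shown to exist, the algorithm enumerates all $\le k^2 m_i$-subsets of each $V_i$ and keeps the best; this uses $k^2 m+2=k^2\cdot n^{O(\e^{1/3})}$ rounds of Lasserre and runs in time $\sum_i\card{V_i}^{O(k^2 m_i)}\le n^{O(k^2 n^{\e^{1/3}})}=2^{n^{O(\e^{1/3})}}$, improvable to $2^{O(k^2 n^{\e^{1/3}})}\poly(n)$ by sampling seed sets as remarked earlier.

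The entire difficulty sits in the first paragraph: the partitioning theorem of~\cite{AroraBS10}, whose proof is the technical heart of that paper, and in it the balance between (a) deleting only a small constant fraction of edges---which forces one to recurse only on sparse cuts whose smaller side has measure above roughly $n^{-\Omega(\e^{1/3})}$, thereby bounding the recursion depth---and (b) the bound $\rank_{\ge 1-\gamma}(\text{piece})\le n^{O(\gamma^{1/3})}$ for the resulting small-set-expander pieces, which is exactly where the cube root arises (plugging $\gamma=\Theta(\e)$ yields $n^{O(\e^{1/3})}$ rather than $n^{O(\e)}$). Given that theorem, the remaining steps are the routine aggregation of per-edge rounding errors carried out already in the proof of \pref{thm:ug-higher-eigen}; the only point requiring care is that the components must be processed one at a time---lumping them into a single induced instance and invoking \pref{thm:ug-higher-eigen} directly fails, because a disconnected constraint graph has vanishing small Laplacian eigenvalues and the seed set needed for one component should not be charged against the Lasserre level of another.
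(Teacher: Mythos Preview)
Your proposal is correct and follows essentially the same approach as the paper's (brief) proof sketch: invoke the decomposition theorem of~\cite{AroraBS10} to delete a small constant fraction of edges and obtain components of $(1-\Theta(\e))$-threshold rank at most $n^{O(\e^{1/3})}$, then apply propagation rounding (via \pref{lem:existsgoodseedset} / \pref{thm:ug-higher-eigen}) independently on each component using the restricted Lasserre solution. Your write-up is in fact considerably more explicit than the paper's sketch---in particular, your observation that one must round each component separately (because a disconnected graph has many zero Laplacian eigenvalues) and your accounting of the per-component SDP violations $\eta_i$ are details the paper leaves implicit.
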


\begin{proof}[Proof sketch] The proof follows by combining our
  propagation  rounding  and the decomposition
  theorem of \cite{AroraBS10}. The latter result allows us to partition the
  input graph into disjoint components each with $1-c\e$ rank at most
  $n^{O(\e^{1/3})}$ by removing at most $0.01$ fraction of the edges in our
  input graph. An SDP solution for the input graph induces a solution for
  each of the components, and hence we can round the solution for each
  component separately using propagation rounding.
\end{proof}

\section*{Conclusions}

We have shown that $n^{O(\e^{1/3})}$ rounds of an SDP hierarchy suffice for solving the \uniquegames problem on $(1-\e)$-satisfiable instances. The best lower bound known for the hierarchy we used is $\log\log^{\Omega(1)} n$~\cite{RaghavendraS09c,KhotS09}, and so a natural question, with obvious relevance to the unique games conjecture, is which bound is closer to the truth. The fact that our algorithm's running time for $r$ rounds is only $2^{O(r)}$ (as opposed to $n^{O(r)}$), challenges the interpretation of lower bounds in the range $[\omega(1),\O(\log n)]$ as corresponding to super-polynomial running time, and so provides further motivation to the question of whether the current hierarchy lower bounds can be improved further.

With the exception of the \smallsetexpansion problem, we do not know how to translate algorithms for \uniquegames into other computational problems. We hope that our ideas will help in combining the \cite{AroraBS10} subexponential algorithm for \uniquegames with SDP-based method to make progress on other \uniquegames-hard computational problems. Indeed, Arora and Ge (personal communication) recently used the ideas of this work to obtain improved algorithms for $3$-coloring on some interesting families of instances. A concrete open question along similar lines is whether one can get an algorithm  for the \maxcut problem with approximation factor  $\e$ better than the factor of the Goemans-Williamson algorithm that runs in time $\exp(n^{\poly(\e)})$.

For general 2-CSPs, we know that some instances will require a large number of hierarchy rounds, but it's interesting to see whether there is any clean characterization of the instances on which SDP hierarchies do well, encompassing, say, both low threshold rank graphs and planar graphs. Another interesting question is to find the right generalization of the low threshold rank condition to $k$-CSPs for $k>2$.

\addreferencesection
\bibliographystyle{amsalpha}
\bibliography{sdphierarchies}

\newcommand{\etalchar}[1]{$^{#1}$}
\providecommand{\bysame}{\leavevmode\hbox to3em{\hrulefill}\thinspace}
\providecommand{\MR}{\relax\ifhmode\unskip\space\fi MR }
\providecommand{\MRhref}[2]{%
  \href{http://www.ams.org/mathscinet-getitem?mr=#1}{#2}
}
\providecommand{\href}[2]{#2}
\begin{thebibliography}{ACOH{\etalchar{+}}10}

\bibitem[ABLT06]{AroraBLT06}
Sanjeev Arora, B{\'e}la Bollob{\'a}s, L{\'a}szl{\'o} Lov{\'a}sz, and Iannis
  Tourlakis, \emph{Proving integrality gaps without knowing the linear
  program}, Theory of Computing \textbf{2} (2006), no.~1, 19--51.

\bibitem[ABS10]{AroraBS10}
Sanjeev Arora, Boaz Barak, and David Steurer, \emph{Subexponential algorithms
  for unique games and related problems}, FOCS, 2010, pp.~563--572.

\bibitem[ACOH{\etalchar{+}}10]{AlonCHKRS10}
Noga Alon, Amin Coja-Oghlan, Hi{\^e}p H{\`a}n, Mihyun Kang, Vojtech R{\"o}dl,
  and Mathias Schacht, \emph{Quasi-randomness and algorithmic regularity for
  graphs with general degree distributions}, SIAM J. Comput. \textbf{39}
  (2010), no.~6, 2336--2362.

\bibitem[AKK{\etalchar{+}}08]{AroraKKSTV08}
Sanjeev Arora, Subhash Khot, Alexandra Kolla, David Steurer, Madhur Tulsiani,
  and Nisheeth~K. Vishnoi, \emph{Unique games on expanding constraint graphs
  are easy}, STOC, 2008, pp.~21--28.

\bibitem[ARV09]{AroraRV09}
Sanjeev Arora, Satish Rao, and Umesh~V. Vazirani, \emph{Expander flows,
  geometric embeddings and graph partitioning}, J. ACM \textbf{56} (2009),
  no.~2.

\bibitem[BCC{\etalchar{+}}10]{BhaskaraCCFV10}
Aditya Bhaskara, Moses Charikar, Eden Chlamtac, Uriel Feige, and Aravindan
  Vijayaraghavan, \emph{Detecting high log-densities: an {$O(n^{1/4})$}
  approximation for densest {$k$}-subgraph}, STOC, 2010, pp.~201--210.

\bibitem[Chl07]{Chlamtac07}
Eden Chlamtac, \emph{Approximation algorithms using hierarchies of semidefinite
  programming relaxations}, FOCS, 2007, pp.~691--701.

\bibitem[COCF10]{Coja-OghlanCF10}
Amin Coja-Oghlan, Colin Cooper, and Alan~M. Frieze, \emph{An efficient sparse
  regularity concept}, SIAM J. Discrete Math. \textbf{23} (2010), no.~4,
  2000--2034.

\bibitem[CT10]{ChlamtacT10}
Eden Chlamtac and Madhur Tulsiani, \emph{Convex relaxations and integrality
  gaps}, 2010, Chapter in Handbook on Semidefinite, Cone and Polynomial
  Optimization.

\bibitem[FK99]{FriezeK99}
Alan~M. Frieze and Ravi Kannan, \emph{Quick approximation to matrices and
  applications}, Combinatorica \textbf{19} (1999), no.~2, 175--220.

\bibitem[FS02]{FeigeS02b}
Uriel Feige and Gideon Schechtman, \emph{On the optimality of the random
  hyperplane rounding technique for {MAX CUT}}, Random Struct. Algorithms
  \textbf{20} (2002), no.~3, 403--440.

\bibitem[GMPT07]{GeorgiouMPT07}
Konstantinos Georgiou, Avner Magen, Toniann Pitassi, and Iannis Tourlakis,
  \emph{Integrality gaps of {$2 - o(1)$} for vertex cover sdps in the
  {Lov\'asz--Schrijver} hierarchy}, FOCS, 2007, pp.~702--712.

\bibitem[GS11]{GuruswamiS11b}
Venkatesan Guruswami and Ali~Kemal Sinop, \emph{Lasserre hierarchy, higher
  eigenvalues, and approximation schemes for quadratic integer programming with
  psd objectives}, 2011, Manuscript.

\bibitem[GW95]{GoemansW95}
Michel~X. Goemans and David~P. Williamson, \emph{Improved approximation
  algorithms for maximum cut and satisfiability problems using semidefinite
  programming}, J. ACM \textbf{42} (1995), no.~6, 1115--1145.

\bibitem[HKNO09]{HassidimKNO09}
Avinatan Hassidim, Jonathan~A. Kelner, Huy~N. Nguyen, and Krzysztof Onak,
  \emph{Local graph partitions for approximation and testing}, FOCS, 2009,
  pp.~22--31.

\bibitem[Kho02]{Khot02a}
Subhash Khot, \emph{On the power of unique 2-prover 1-round games}, STOC, 2002,
  pp.~767--775.

\bibitem[KKMO04]{KhotKMO04}
Subhash Khot, Guy Kindler, Elchanan Mossel, and Ryan O'Donnell, \emph{Optimal
  inapproximability results for {Max-Cut} and other 2-variable {CSPs}?}, FOCS,
  2004, pp.~146--154.

\bibitem[KMS98]{KargerMS98}
David~R. Karger, Rajeev Motwani, and Madhu Sudan, \emph{Approximate graph
  coloring by semidefinite programming}, J. ACM \textbf{45} (1998), no.~2,
  246--265.

\bibitem[Kol10]{Kolla10}
Alexandra Kolla, \emph{Spectral algorithms for unique games}, IEEE Conference
  on Computational Complexity, 2010, pp.~122--130.

\bibitem[KPS10]{KhotPS10}
Subhash Khot, Preyas Popat, and Rishi Saket, \emph{Approximate lasserre
  integrality gap for unique games}, APPROX-RANDOM, 2010, pp.~298--311.

\bibitem[KS09]{KhotS09}
Subhash Khot and Rishi Saket, \emph{{SDP} integrality gaps with local
  {$\ell_1$}-embeddability}, FOCS, 2009, pp.~565--574.

\bibitem[KT07]{KollaT07}
Alexandra Kolla and Madhur Tulsiani, \emph{Playing random and expanding unique
  games}, Unpublished manuscript available from the authors' webpages, to
  appear in journal version of \cite{AroraKKSTV08}, 2007.

\bibitem[KV05]{KhotV05}
Subhash Khot and Nisheeth~K. Vishnoi, \emph{The unique games conjecture,
  integrality gap for cut problems and embeddability of negative type metrics
  into {$\ell_1$}}, FOCS, 2005, pp.~53--62.

\bibitem[Las01]{Lasserre01}
Jean~B. Lasserre, \emph{Global optimization with polynomials and the problem of
  moments}, SIAM Journal on Optimization \textbf{11} (2001), no.~3, 796--817.

\bibitem[Lau03]{Laurent03}
Monique Laurent, \emph{A comparison of the sherali-adams, lov{\'a}sz-schrijver,
  and lasserre relaxations for 0-1 programming}, Math. Oper. Res. \textbf{28}
  (2003), no.~3, 470--496.

\bibitem[LS91]{LovaszS91}
L.~Lov{\'a}sz and A.~Schrijver, \emph{Cones of matrices and set-functions and
  {$0$}-{$1$} optimization}, SIAM J. Optim. \textbf{1} (1991), no.~2, 166--190.

\bibitem[MOO05]{MosselOO05}
Elchanan Mossel, Ryan O'Donnell, and Krzysztof Oleszkiewicz, \emph{Noise
  stability of functions with low influences invariance and optimality}, FOCS,
  2005, pp.~21--30.

\bibitem[Rag08]{Raghavendra08}
Prasad Raghavendra, \emph{Optimal algorithms and inapproximability results for
  every {CSP}?}, STOC, 2008, pp.~245--254.

\bibitem[RS09]{RaghavendraS09c}
Prasad Raghavendra and David Steurer, \emph{Integrality gaps for strong {SDP}
  relaxations of unique games}, FOCS, 2009, pp.~575--585.

\bibitem[SA90]{SheraliA90}
Hanif~D. Sherali and Warren~P. Adams, \emph{A hierarchy of relaxations between
  the continuous and convex hull representations for zero-one programming
  problems}, SIAM J. Discrete Math. \textbf{3} (1990), no.~3, 411--430.
  \MR{MR1061981 (91k:90116)}

\bibitem[Sch08]{Schoenebeck08}
Grant Schoenebeck, \emph{Linear level {Lasserre} lower bounds for certain
  k-{CSPs}}, FOCS, 2008, pp.~593--602.

\bibitem[Ste10]{Steurer10c}
David Steurer, \emph{Subexponential algorithms for d-to-1 two-prover games and
  for certifying almost perfect expansion}, Manuscript, 2010.

\bibitem[STT07]{SchoenebeckTT07a}
Grant Schoenebeck, Luca Trevisan, and Madhur Tulsiani, \emph{A linear round
  lower bound for lovasz-schrijver sdp relaxations of vertex cover}, IEEE
  Conference on Computational Complexity, 2007, pp.~205--216.

\bibitem[Tul09]{Tulsiani09}
Madhur Tulsiani, \emph{{CSP} gaps and reductions in the {Lasserre} hierarchy},
  STOC, 2009, pp.~303--312.

\end{thebibliography}

\appendix

\section{Faster Algorithms for SDP hierarchies}

In this section, we argue that our rounding algorithm also works with weaker
SDP hierarchies.
We will show that for these weaker hierarchies, a near-optimal $m$-round
solution can be computed in time $2^{O(r)}\poly(n)$.
Due to the equivalence of optimization and separation, it is enough to
describe a separation oracle with running time $2^{O(r)}\poly(n)$.
Given a collection of vectors $\set{v_\ia}$, the separation oracle either
has to output a good assignment or it has to output a valid linear
constraint violated by the inner products of the input vectors.

We argue that such a separation oracle can easily be extracted from our
rounding algorithm.
Our rounding algorithm for \uniquegames first selects a set $S$ of roughly
$m$ vertices, then samples an assignment $x_S$ for these vertices, and
finally samples labels $x_i$ for the remaining vertices from the local
distributions conditioned on the event $x_S$.
The selection of the set $S$ depends only on the SDP vectors $\set{v_\ia}$
but not on the local distributions (which are not known to the separation
oracle).

Hence, given vectors $\set{v_{ia}}$, our separation oracle can simply work
as follows:
\begin{enumerate}
\item Select a vertex subset using \pref{thm:low-rank-approx} based on the
  given vectors $\set{v_\ia}$.
\item Using linear programming, find local distributions that are
  as consistent as possible with the inner products of the vectors $\set{v_\ia}$.
  If these local distributions match the inner products  sufficiently
  closely, then our propagation rounding algorithm will succeed.
  On the other hand, if the local distributions do not match the inner
  products closely enough, then we can find a valid linear constraints that is
  violated by the inner product of the given vectors. (This separating
  linear constraint can be obtained from the dual solution of the linear
  program that was used to find the best local distributions.)
\end{enumerate}

\section{Omitted proofs from Section~\ref*{sec:general-2-csp} and Section~\ref*{sec:round-uniq-games}} \label{app:proofs}

This appendix contains the proofs for some omitted proofs.

\begin{lemma}[High Local Correlation](\pref{lem:highlocalcorrelation}
	restated)
If $\mathcal{V}$ is an SDP solution to unique games with
	value $1 - \eta$, i.e.,
	$$	\E_{(i,j) \in E} \sum_{a \in [k]} \norm{v_{ia} -
		v_{j\pi_{ij}(a)}}^2 \leq \eta  \mcom $$
	then
	\begin{equation}
		\E_{(i,j) \in E} \norm{U_i- U_j}^2 \leq 3\eta \mcom
	\end{equation}

\end{lemma}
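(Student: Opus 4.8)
The plan is to transfer the hypothesis $\E_{(i,j)\in E}\sum_{a}\norm{v_{ia}-v_{j\pi_{ij}(a)}}^2\le\eta$ through the two operations that build $U_i$ out of the $v_{ia}$: first the projection $v_{ia}\mapsto u_{ia}=P_Sv_{ia}$, and then the normalized tensoring $u_{ia}\mapsto\tilde u_{ia}=\norm{u_{ia}}\,\bar u_{ia}^{\otimes 2}\otimes\bar v_{ia}$ followed by summing over $a$. The projection step is for free: since $P_S$ is an orthogonal projection it is $1$-Lipschitz, so $\norm{u_{ia}-u_{j\pi_{ij}(a)}}\le\norm{v_{ia}-v_{j\pi_{ij}(a)}}$ for every edge and every $a$. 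For the tensoring step I would first reduce the vector quantity $\norm{U_i-U_j}^2$ to a sum of scalar terms. For a fixed vertex $i$ the vectors $\{v_{ia}\}_a$ are pairwise orthogonal (their inner products are the probabilities $\Pr[X_i=a\wedge X_i=a']$), hence the $\{\tilde u_{ia}\}_a$ are pairwise orthogonal too (they carry the factor $\bar v_{ia}$), and moreover every cross inner product $\iprod{\tilde u_{ia},\tilde u_{jb}}=\norm{u_{ia}}\norm{u_{jb}}\iprod{\bar u_{ia},\bar u_{jb}}^2\iprod{\bar v_{ia},\bar v_{jb}}$ is nonnegative because all unique-games SDP inner products are probabilities. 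So, matching $a$ on side $i$ with $\pi_{ij}(a)$ on side $j$ and discarding the nonnegative off-diagonal terms,
\[
  \norm{U_i-U_j}^2\;\le\;\sum_{a\in[k]}\norm{\tilde u_{ia}-\tilde u_{j\pi_{ij}(a)}}^2 .
\]

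It then suffices to prove the pointwise inequality $\norm{\tilde u_{ia}-\tilde u_{j\pi_{ij}(a)}}^2\le 3\norm{v_{ia}-v_{j\pi_{ij}(a)}}^2$ for every edge $(i,j)$ and label $a$; summing over $a$ and averaging over edges then gives $\E_{(i,j)\in E}\norm{U_i-U_j}^2\le 3\eta$. To prove it, abbreviate $p=\norm{u_{ia}}$, $q=\norm{u_{j\pi_{ij}(a)}}$, $b=pq$, $t=\iprod{\bar u_{ia},\bar u_{j\pi_{ij}(a)}}\in[-1,1]$, $s=\iprod{\bar v_{ia},\bar v_{j\pi_{ij}(a)}}\in[0,1]$, and $D=\norm{v_{ia}-v_{j\pi_{ij}(a)}}^2$. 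Expanding tensor inner products, $\norm{\tilde u_{ia}-\tilde u_{j\pi_{ij}(a)}}^2=p^2+q^2-2bt^2s$. The non-expansiveness of $P_S$ gives $p^2+q^2-2bt\le D$, and combining this with $p^2+q^2\ge 2b$ yields $b(1-t)\le D/2$; separately, the identity $D=\norm{v_{ia}}^2+\norm{v_{j\pi_{ij}(a)}}^2-2\norm{v_{ia}}\norm{v_{j\pi_{ij}(a)}}s$ together with $p\le\norm{v_{ia}}$, $q\le\norm{v_{j\pi_{ij}(a)}}$ gives $D\ge 2b(1-s)$. Hence $p^2+q^2-2bt^2s\le D+2bt(1-ts)$, and a short case analysis on the sign of $t$ finishes: if $t\le 0$ then $2bt(1-ts)\le 0$; if $0<t\le 1$ then $2bt(1-ts)\le 2b(1-ts)=2b(1-s)+2bs(1-t)\le D+2b(1-t)\le 2D$. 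Either way $\norm{\tilde u_{ia}-\tilde u_{j\pi_{ij}(a)}}^2\le 3D$.

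The \textbf{step I expect to be the main obstacle} is exactly this pointwise tensoring inequality with the sharp constant $3$. The naive estimates ($\norm{u_{ia}}\le\norm{v_{ia}}$ together with ``inner products are at most $1$'') only yield $\norm{\tilde u_{ia}-\tilde u_{j\pi_{ij}(a)}}^2\le\norm{v_{ia}}^2+\norm{v_{j\pi_{ij}(a)}}^2$, which is useless precisely in the regime that matters, namely when $v_{ia}\approx v_{j\pi_{ij}(a)}$ (there $D$ is tiny but the naive bound is not). The crucial extra input is that $P_S$ is non-expansive, which forces $p^2+q^2-2bt$ — not merely $p^2+q^2$ — to be at most $D$; this is what rules out the otherwise-dangerous configurations in which $t$ is bounded away from $1$ while $s$ is close to $1$. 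Everything else — the reduction to scalar channels, the sum over $a\in[k]$, and the average over edges — is routine.
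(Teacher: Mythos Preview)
Your proof is correct and follows the same overall architecture as the paper's: first reduce $\norm{U_i-U_j}^2\le\sum_a\norm{\tilde u_{ia}-\tilde u_{j\pi_{ij}(a)}}^2$ via orthogonality and nonnegativity of inner products, then establish a pointwise bound on each summand with constant $3$ and average. The only difference is in how that pointwise bound is obtained. The paper uses the identity $\norm{x-y}^2=(\norm{x}-\norm{y})^2+\norm{x}\,\norm{y}\,\norm{\bar x-\bar y}^2$ together with the tensor Lipschitz estimate $\norm{\bar x_1\otimes\bar x_2-\bar y_1\otimes\bar y_2}^2\le\norm{\bar x_1-\bar y_1}^2+\norm{\bar x_2-\bar y_2}^2$, arriving at $\norm{\tilde u_{ia}-\tilde u_{j\pi_{ij}(a)}}^2\le 2\norm{u_{ia}-u_{j\pi_{ij}(a)}}^2+\norm{v_{ia}-v_{j\pi_{ij}(a)}}^2$, which combined with non-expansiveness of $P_S$ gives the factor $3$. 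You instead expand the tensor inner product directly as $t^2s$ and run a scalar case analysis on the sign of $t$. Both routes use exactly the same inputs (non-expansiveness of $P_S$, $\norm{u_{ia}}\le\norm{v_{ia}}$, $s\ge0$) and yield the same constant; the paper's version is slightly more conceptual and avoids case-splitting, while yours is a direct computation.
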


\begin{proof}
	Observe that the vectors $u_{ia}$ are projections of $v_{ia}$ and projections
	shrinks distances, which implies that the $\set{u_{ia}}$ vectors
        are correlated across         constraints of the \uniquegames instance,
	\begin{equation}
		\E_{(i,j) \in E} \sum_{a \in [k]} \norm{u_{ia} -
		u_{j\pi_{ij}(a)}}^2 \leq \E_{(i,j) \in E} \sum_{a \in [k]} \norm{v_{ia} -
		v_{j\pi_{ij}(a)}}^2 \leq \eta  \mper
	\end{equation}
	Let $\tilde u_{ia} = \norm{u_{ia}} \bar u_{ia}^{\otimes 2}\tensor
        \bar v_{ia}$.
        Notice that $\iprod{\tilde u_\ia,\tilde u_{ib}}=0$ for distinct
        labels $a,b\in[k]$.
        We claim that the $\tilde u_\ia$ vectors are also correlated across
        constraints of the \uniquegames instance,
	\begin{claim} \label{claim:tildegoodugsolution}
	\begin{equation}
		\E_{(i,j) \in E} \sum_{a \in [k]} \norm{\tilde u_{ia}
		- \tilde u_{j\pi_{ij}(a)}}^2 \leq 3 \eta
	\end{equation}
	\end{claim}

        \begin{proof}
          The following identity relates the distance of vectors to
          differences in their norms and the distance of the corresponding
          unit vectors,
          \begin{math}
            \snorm{x-y} = (\norm{x}-\norm{y})^2 + \norm{x} \, \norm{y}\,
            \snorm{\bar x-\bar y}\mper
          \end{math}
          Since $\norm{\tilde u_\ia}=\norm{u_\ia}$, we get
          \begin{displaymath}
            \norm{\tilde u_{ia}
		- \tilde u_{j\pi_{ij}(a)}}^2
              = \Bigparen{\norm{u_\ia}-\norm {u_{j\pi_{ij}(a)}}}^2
              + \norm{u_\ia}\,\norm {u_{j\pi_{ij}(a)}}\,
              \Snorm{\bar u_\ia^{\tensor 2} \tensor \bar v_\ia -
                \bar u_{j\pi_{ij}(a)}^{\tensor 2} \tensor \bar v_{j\pi_{ij}(a)}}
          \end{displaymath}
          Since $\snorm{\bar x_1 \tensor \bar x_2 -\bar y_1\tensor \bar
            y_2}\le \snorm{\bar x_1-\bar y_1}+\snorm{\bar x_2-\bar y_2}$,
          we can further upper bound
          \begin{align*}
            \norm{\tilde u_{ia} - \tilde u_{j\pi_{ij}(a)}}^2 &\le
            \Bigparen{\norm{u_\ia}-\norm {u_{j\pi_{ij}(a)}}}^2 +
            \norm{u_\ia}\,\norm {u_{j\pi_{ij}(a)}}\, \Paren{2\Snorm{\bar
                u_\ia- \bar u_{j\pi_{ij}(a)}}+ \Snorm{ \bar v_\ia - \bar
                v_{j\pi_{ij}(a)}}}
            \\
            &\le 2\snorm{u_\ia- u_{j\pi_{ij}(a)}} +\snorm{ v_\ia -
              v_{j\pi_{ij}(a)}}\mper
          \end{align*}
          (In the last step, we again used the identity \begin{math}
            \snorm{x-y} = (\norm{x}-\norm{y})^2 + \norm{x} \, \norm{y}\,
            \snorm{\bar x-\bar y}\mper
          \end{math}
          and the fact that $\norm{u_\ia}\,\norm {u_{j\pi_{ij}(a)}}\le
          \norm{v_\ia}\,\norm {v_{j\pi_{ij}(a)}}$.)
          By averaging over the label set and the edges of the graph, it
          follows as claimed that
          \begin{displaymath}
            \E_{ij\in E}\sum_{a\in[k]}
            \norm{\tilde u_{ia} - \tilde u_{j\pi_{ij}(a)}}^2
            \le 3\eta
            \mper\qedhere
          \end{displaymath}
        \end{proof}

        To finish the proof of the lemma, we relate the distances
        $\snorm{U_i-U_j}$ across an edge $ij\in E$ to distances of the
        vectors $\set{\tilde u_\ia}$ across the constraint $\pi_{ij}$,
        \begin{align*}
          \iprod{U_i,U_j}
          &= \sum_{a,b} \iprod{\tilde u_\ia,\tilde u_{jb}}\\
          &\ge \sum_{a} \iprod{\tilde u_{\ia},\tilde u_{j\pi_{ij}(a)}}
          \quad\text{(using non-negativity of involved inner products)}
          \\
          & =\sum_{a} \tfrac12\Paren{\snorm{\tilde u_{\ia}}+\snorm{\tilde
              u_{j\pi_{ij}(a)}}-\snorm{\tilde u_\ia - \tilde
              u_{j\pi_{ij}(a)}}}
          \\
          & = \tfrac12\snorm{U_i} + \tfrac12\snorm{U_j}
          -\sum_a \tfrac12\snorm{\tilde u_\ia - \tilde          u_{j\pi_{ij}(a)}}\mper
        \end{align*}
        Rearranging gives $\snorm{U_i-U_j}
        \le \sum_a \snorm{\tilde u_\ia - \tilde
          u_{j\pi_{ij}(a)}}$.
        Hence, using \pref{claim:tildegoodugsolution},
        \begin{displaymath}
          \E_{ij\in E}\snorm{U_i-U_j} \le \E_{ij\in E} \sum_a\snorm{\tilde u_\ia - \tilde
          u_{j\pi_{ij}(a)}}\le 3\eta.
        \end{displaymath}
   \end{proof}

\restatelemma{lem:ug-tensoring-trick}

\begin{proof}
  Let $\set{u_\ia }$ be the collection of vectors such that
   $\iprod{u_\ia,u_\jb} = \Cov(X_\ia,X_\jb)$.
  Note that $\snorm{u_\ia}=\Var X_\ia$.
  Let $v_\ia=u_\ia + \E X_\ia \,v_\eset$, where $v_\eset$ is a unit vector
  orthogonal to all vectors $u_\ia$.
  Define $\vec v_i = \sum_{a} \norm{u_\ia}{\bar u_\ia} ^{\tensor 2}\tensor
  {\bar v}_\ia^{\tensor 2}$.
  (Here, $\bar x$ denotes the unit vector in direction $x$.)
  Let us first lower bound the inner product of $\vec v_i$ and $\vec v_j$,
  \begin{align*}
    \Bigparen{\sum_{a\in [k]} \bigabs{\Cov(X_\ia,X_{j\,\pi(a)})}}^4
    & = \Bigparen{\sum_{a} \norm{u_\ia} \norm{u_{j\,\pi(a)}} \abs{\iprod{\bar u_\ia,\bar
          u_{j\,\pi(a)}}} }^4
    \\
    &=\Paren{
      \sum_{a} \norm{u_\ia}\, \norm{u_{j\,\pi(a)}} \cdot
      \abs{\iprod{\bar u_\ia,\bar  u_{j\,\pi(a)}}}
      \sqrt{\tfrac{\norm{u_\ia}\,\norm{u_{j\,\pi(a)}}}
        {\norm{v_\ia}\,\norm{v_{j\,\pi(a)}}}}\cdot
      \sqrt{\tfrac{\norm{v_\ia}\,\norm{v_{j\,\pi(a)}}}
        {\norm{u_\ia}\,\norm{u_{j\,\pi(a)}}}}
    }^4
    \\
    &\le \Paren{
      \sum_{a} \norm{u_\ia}\, \norm{u_{j\,\pi(a)}} \cdot
      \iprod{\bar u_\ia,\bar  u_{j\,\pi(a)}}^2
      \tfrac{\norm{u_\ia}\,\norm{u_{j\,\pi(a)}}}
        {\norm{v_\ia}\,\norm{v_{j\,\pi(a)}}}
    }^2
    \\
    &\quad\cdot
    \Paren{
      \sum_{a} \norm{u_\ia}\, \norm{u_{j\,\pi(a)}} \cdot
      \tfrac{\norm{v_\ia}\,\norm{v_{j\,\pi(a)}}}
        {\norm{u_\ia}\,\norm{u_{j\,\pi(a)}}}
    }^2
    \quad\using{Cauchy--Schwarz} \\
    &
    \le \Paren{\sum_{a} \norm{u_\ia}\, \norm{u_{j\,\pi(a)}} \cdot
      \abs{\iprod{\bar u_\ia,\bar  u_{j\,\pi(a)}}
      \iprod{\bar v_\ia,\bar v_{j\,\pi(a)}}}
    }^2
  \\
  &\qquad\qquad\using{$\iprod{v_\ia,v_{j\,\pi(a)}}\ge
    \iprod{u_\ia,u_{j\,\pi(a)}}$
    and $\tsum_a \norm{v_\ia}\,\norm{v_{j \,\pi(a)}}\le 1$}
  \\
  &  \le \sum_{a} \norm{u_\ia}\, \norm{u_{j\,\pi(a)}} \cdot
  \iprod{\bar u_\ia,\bar  u_{j\,\pi(a)}}^2
  \iprod{\bar v_\ia,\bar v_{j\,\pi(a)}}^2
  \\
  &\qquad\qquad\using{Cauchy--Schwarz and $\sum_a
    \norm{u_\ia}\,\norm{u_{j\,\pi(a)}}\le 1$}
  \\
  & \le \iprod{\vec v_i,\vec v_j}\mper
  \end{align*}
  On the other hand, we can upper bound the inner product of $\vec v_i$ and
  $\vec v_j$,
  \begin{align*}
    \iprod{\vec v_i,\vec v_j}%
    &=\sum_{a,b} \norm{u_\ia}\, \norm{u_{jb}} %
    \cdot \iprod{\bar u_\ia,\bar u_{jb}}^2 \iprod{\bar v_\ia,\bar v_{jb}}^2
    \\
    &\le \sum_{a,b} \tfrac 12 (\snorm{u_\ia}+ \snorm{u_{jb}}) %
    \cdot \iprod{\bar u_\ia,\bar  u_{jb}}^2 \\
    & = \sum_{a,b}\tfrac 12 (\tfrac 1{\Var X_\ia} +\tfrac 1{\Var X_\jb}) %
    \,\Cov(X_\ia,X_\jb) %
    \mper
  \end{align*}
  Finally, the vectors $\vec v_1\ldots,\vec v_n$ are in the unit ball,
  \begin{displaymath}
    \snorm{\vec v_i}
    = \sum_{a,b} \norm{u_\ia}\, \norm{u_{ib}} %
    \cdot \iprod{\bar u_\ia,\bar u_{ib}}^2 \iprod{\bar v_\ia,\bar v_{ib}}^2
    = \sum_a \snorm{u_\ia} \le 1\mper
  \end{displaymath}
  Here, we are using the fact that $\iprod{v_\ia,v_{ib}}=0$ for all
  distinct $a,b\in [k]$.
\end{proof}

\section{Facts about Variance}
\label{sec:facts-about-variance}

\begin{lemma}
  Let $X$ and $Y$ be jointly-distributed random variables.
  Assume that $Y$ has finite range.
  %
  Let $Z$ be the orthogonal projection of the random variable $X$ onto the
  subspace of functions of the random variable $Y$.
  Then,
  \begin{displaymath}
    \E_{\set Y}\Var\Brac{X \Mid Y} = \E X^2 - \E Z^2\mper
  \end{displaymath}
\end{lemma}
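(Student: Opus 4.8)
The plan is to identify the projection $Z$ with the conditional expectation $\E[X\mid Y]$ and then combine the elementary pointwise identity $\Var = \E(\cdot)^2 - (\E\,\cdot)^2$ with the tower property. Throughout we assume $X$ is square-integrable, so that $\E X^2$, $\E Z^2$, and all conditional variances are finite.

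First I would make precise what the orthogonal projection onto functions of $Y$ is. The space of square-integrable functions of $Y$ is a closed subspace of $L^2$ under the inner product $\iprod{f,g}=\E fg$, and the projection $Z$ of $X$ onto it is characterized by: $Z$ is a function of $Y$, and $\E\brac{(X-Z)\,g(Y)}=0$ for every function $g$ of $Y$. Since $Y$ has finite range, these functions are spanned by the indicators $\set{\Ind[Y=y]}$ over $y$ in the range of $Y$, so the orthogonality condition is equivalent to $\E\brac{X\,\Ind[Y=y]}=\E\brac{Z\,\Ind[Y=y]}$ for every such $y$. Writing $Z=h(Y)$, this reads $h(y)\Pr\set{Y=y}=\E\brac{X\,\Ind[Y=y]}$, hence $h(y)=\E[X\mid Y=y]$ for every $y$ with $\Pr\set{Y=y}>0$. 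That is, $Z=\E[X\mid Y]$.

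Next, for each $y$ in the range of $Y$ I would use the bias–variance identity $\Var[X\mid Y=y]=\E[X^2\mid Y=y]-\paren{\E[X\mid Y=y]}^2$. Averaging over $y$ distributed as $Y$ and using linearity gives
\[
  \E_{\set Y}\Var[X\mid Y] \;=\; \E_{y}\E[X^2\mid Y=y]\;-\;\E_{y}\paren{\E[X\mid Y=y]}^2 .
\]
The first term on the right equals $\E X^2$ by the tower property. For the second term, note that the $Y$-measurable random variable $Z^2$ takes the value $\paren{\E[X\mid Y=y]}^2$ on the event $\set{Y=y}$, so again by the tower property $\E_{y}\paren{\E[X\mid Y=y]}^2=\E Z^2$. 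Substituting both identities yields $\E_{\set Y}\Var[X\mid Y]=\E X^2-\E Z^2$, as claimed.

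There is no genuine obstacle here; the statement is a standard form of the law of total variance. The only step meriting a line of care is the first one — justifying that the $L^2$-projection onto functions of $Y$ coincides with conditional expectation — and the hypothesis that $Y$ has finite range is exactly what makes this painless, since it lets us expand in the explicit finite orthonormal-up-to-scaling family of indicators $\set{\Ind[Y=y]}$ rather than appealing to the general construction of conditional expectation.
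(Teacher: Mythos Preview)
Your proof is correct and follows essentially the same approach as the paper: identify the projection $Z$ with the conditional expectation $\E[X\mid Y]$, then use the pointwise identity $\Var[X\mid Y=y]=\E[X^2\mid Y=y]-(\E[X\mid Y=y])^2$ together with the tower property. Your argument is in fact more carefully spelled out than the paper's, particularly in the step showing $Z=\E[X\mid Y]$ via the indicator basis afforded by the finite-range assumption.
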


\begin{proof}
  By construction $Z$ is a function $f(Y)$ of the random variable $Y$ and
  $X-Z$ is orthogonal to all functions of the variable $Y$.
  Hence, $\E\brac{ X \mid Y = y}= f(y)$.
  Therefore, the expected variance of $\brac{X\mid Y}$ is
  \begin{align*}
    \E_{\set{Y}} \Var\Brac{X \Mid Y} %
    & = \E X^2 - \E_{\set Y} \Bigparen{\E\brac{X\mid Y}}^2
    \\
    & =\E X^2 - \E_{\set Y} f(Y)
  \end{align*}
  which gives the desired identity using $Z=f(Y)$.
\end{proof}

\begin{lemma}\label{lem:condition-indicator}
  Let $X$ and $Y$ be as in the previous lemma.
  Suppose the range of $Y$ has cardinality $2$.
  Then,
  \begin{displaymath}
    \E_{\set Y} \Var\brac{X \mid Y} %
    = \Var X - \Cov(X,Y)^2/\Var(Y)
    \mper
  \end{displaymath}
\end{lemma}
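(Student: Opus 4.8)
The plan is to deduce this directly from the preceding lemma, which asserts $\E_{\set Y}\Var\Brac{X\Mid Y} = \E X^2 - \E Z^2$, where $Z$ is the orthogonal projection of $X$ onto the subspace $\mathcal F_Y$ of $L^2$-functions of $Y$. The only extra ingredient is the elementary fact that when $Y$ takes just two values, every function of $Y$ is an \emph{affine} function of $Y$; equivalently, $\mathcal F_Y$ is the two-dimensional span of the constant function $\mathbf 1$ and the variable $Y$ itself (we may assume $Y$ is not almost surely constant, since otherwise $\Var Y=0=\Cov(X,Y)$ and both sides degenerate).

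First I would write $Z = \alpha\,\mathbf 1 + \beta Y$ and determine $\alpha,\beta$ from the normal equations $\E(X-Z)=0$ and $\E[(X-Z)Y]=0$ characterizing the orthogonal projection. These give $\beta = \Cov(X,Y)/\Var Y$ and $\alpha = \E X - \beta\,\E Y$, so that $Z = \E X + \tfrac{\Cov(X,Y)}{\Var Y}\,(Y-\E Y)$. Next I would compute
\begin{displaymath}
  \E Z^2 = (\E Z)^2 + \Var Z = (\E X)^2 + \Bigparen{\tfrac{\Cov(X,Y)}{\Var Y}}^2 \Var Y = (\E X)^2 + \Cov(X,Y)^2/\Var Y \mper
\end{displaymath}
Substituting into the identity from the previous lemma yields
\begin{displaymath}
  \E_{\set Y}\Var\Brac{X\Mid Y} = \E X^2 - (\E X)^2 - \Cov(X,Y)^2/\Var Y = \Var X - \Cov(X,Y)^2/\Var Y \mcom
\end{displaymath}
which is exactly the claimed identity.

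There is essentially no genuine obstacle here; the lemma is a one-line consequence of the earlier projection formula once one observes that $\mathcal F_Y$ is spanned by $\set{\mathbf 1, Y}$. The single point deserving a remark is the degenerate case in which $Y$ is a.s. constant: then $\mathcal F_Y$ collapses to the constants, but also $\Var Y = \Cov(X,Y) = 0$ and $\Var\Brac{X\Mid Y} = \Var X$ almost surely, so the statement holds under the usual convention. (Alternatively, one could bypass the previous lemma entirely via the law of total variance, $\E_{\set Y}\Var[X\mid Y] = \Var X - \Var(\E[X\mid Y])$, together with the fact that $\E[X\mid Y]$ is affine in $Y$, giving $\Var(\E[X\mid Y]) = \Cov(X,Y)^2/\Var Y$; but routing through the projection formula is cleaner since it is already available.)
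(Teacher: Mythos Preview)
Your proof is correct and follows essentially the same approach as the paper: both invoke the preceding projection lemma and use the fact that, for binary $Y$, the space of functions of $Y$ is spanned by $\set{\mathbf 1, Y}$. The only cosmetic difference is that the paper first normalizes so that $\E X=\E Y=0$ and $\E Y^2=1$, making $\set{\mathbf 1,Y}$ orthonormal and the projection equal to $(\E XY)\,Y$, whereas you solve the normal equations directly without normalizing; the computations are equivalent.
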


\begin{proof}
  Without loss of generality, we may assume that $\E X = \E Y=0$ and $\E Y^2=1$.
  Then, the set of random variables $\set{1,Y}$ is an orthonormal basis for
  the subspace of functions of~$Y$.
  Let $\rho = \E X Y$.
  Then, $\rho Y$ is the orthogonal projection of $X$ to the subspace of
  function of $Y$.
  (Here, we use the assumption $\E X= 0$.)
  Hence, using the previous lemma,
  \begin{displaymath}
    \E_{\set Y} \Var\brac{X \mid Y}
    = \E X^2 - \E (\rho Y)^2 = \E X^2 - \rho^2\mcom
  \end{displaymath}
  which is the desired identity because $\E X^2=\Var X$ and $\rho^2 =
  \Cov(X,Y)^2/\Var Y$.
\end{proof}

\end{document}